\let\mathbb\varmathbb
\crefname{lemma}{Lemma}{Lemmas}
\crefname{fact}{Fact}{Facts}
\crefname{theorem}{Theorem}{Theorems}
\crefname{corollary}{Corollary}{Corollaries}
\crefname{claim}{Claim}{Claims}
\crefname{example}{Example}{Examples}
\crefname{algorithm}{Algorithm}{Algorithms}
\crefname{problem}{Problem}{Problems}
\crefname{definition}{Definition}{Definitions}
\crefname{exercise}{Exercise}{Exercises}
\crefname{model}{Model}{Models}
\newtheorem{theorem}{Theorem}[section]
\newtheorem*{theorem*}{Theorem}
\newtheorem{lemma}[theorem]{Lemma}
\newtheorem*{lemma*}{Lemma}
\newtheorem*{fact*}{Fact}
\newtheorem*{proposition*}{Proposition}
\newtheorem{corollary}[theorem]{Corollary}
\newtheorem*{corollary*}{Corollary}
\newtheorem*{hypothesis*}{Hypothesis}
\newtheorem{conjecture}[theorem]{Conjecture}
\newtheorem*{conjecture*}{Conjecture}
\theoremstyle{definition}
\newtheorem{definition}[theorem]{Definition}
\newtheorem*{definition*}{Definition}
\newtheorem*{construction*}{Construction}
\newtheorem*{example*}{Example}
\newtheorem{question}[theorem]{Question}
\newtheorem*{question*}{Question}
\newtheorem{algorithm}[theorem]{Algorithm}
\newtheorem*{algorithm*}{Algorithm}
\newtheorem*{assumption*}{Assumption}
\newtheorem*{problem*}{Problem}
\newtheorem*{openquestion*}{Open Question}
\theoremstyle{remark}
\newtheorem*{claim*}{Claim}
\newtheorem*{remark*}{Remark}
\newtheorem*{observation*}{Observation}
\theoremstyle{model}
\newtheorem*{model*}{Model}
\let\originalleft\left
\let\originalright\right
\renewcommand{\left}{\mathopen{}\mathclose\bgroup\originalleft}
\renewcommand{\right}{\aftergroup\egroup\originalright}
\let\latexparagraph\paragraph
\RenewDocumentCommand{\paragraph}{som}{%
  \IfBooleanTF{#1}
    {\latexparagraph*{#3}}
    {\IfNoValueTF{#2}
       {\latexparagraph{\maybe@addperiod{#3}}}
       {\latexparagraph[#2]{\maybe@addperiod{#3}}}%
  }%
}
\newcommand{\maybe@addperiod}[1]{%
  #1\@addpunct{.}%
}
\newcommand{\Paren}[1]{\left(#1\right)}
\newcommand{\Brac}[1]{\left[#1\right]}
\newcommand{\abs}[1]{\lvert#1\rvert}
\newcommand{\Abs}[1]{\left\lvert#1\right\rvert}
\newcommand{\Set}[1]{\left\{#1\right\}}
\newcommand{\norm}[1]{\lVert#1\rVert}
\newcommand{\Norm}[1]{\left\lVert#1\right\rVert}
\newcommand{\iprod}[1]{\langle#1\rangle}
\newcommand{\Iprod}[1]{\left\langle#1\right\rangle}
\newcommand{\Esymb}{\mathbb{E}}
\newcommand{\Psymb}{\mathbb{P}}
\DeclareMathOperator*{\E}{\Esymb}
\DeclareMathOperator*{\ProbOp}{\Psymb}
\renewcommand{\Pr}{\ProbOp}
\newcommand{\bbG}{\mathbb{G}}
\newcommand\bdot\bullet
\DeclareMathOperator{\Ind}{\mathbf 1}
\DeclareMathOperator{\Tr}{Tr}
\DeclareMathOperator{\poly}{poly}
\newcommand{\Erdos}{Erd\H{o}s\xspace}
\newcommand{\Renyi}{R\'enyi\xspace}
\newcommand{\N}{\mathbb N}
\newcommand{\R}{\mathbb R}
\newcommand{\cC}{\mathcal C}
\newcommand{\cF}{\mathcal F}
\newcommand{\cK}{\mathcal K}
\newcommand{\RPQ}{R_{P,Q}}
\renewcommand{\leq}{\leqslant}
\renewcommand{\le}{\leqslant}
\renewcommand{\geq}{\geqslant}
\renewcommand{\ge}{\geqslant}
\let\epsilon=\varepsilon
\numberwithin{equation}{section}
\newcommand\MYcurrentlabel{xxx}
\newcommand{\MYstore}[2]{%
  \global\expandafter \def \csname MYMEMORY #1 \endcsname{#2}%
}
\newcommand{\MYload}[1]{%
  \csname MYMEMORY #1 \endcsname%
}
\newcommand{\MYnewlabel}[1]{%
  \renewcommand\MYcurrentlabel{#1}%
  \MYoldlabel{#1}%
}
\newcommand{\MYdummylabel}[1]{}
\newcommand{\torestate}[1]{%
  \let\MYoldlabel\label%
  \let\label\MYnewlabel%
  #1%
  \MYstore{\MYcurrentlabel}{#1}%
  \let\label\MYoldlabel%
}
\newcommand{\restatetheorem}[1]{%
  \let\MYoldlabel\label
  \let\label\MYdummylabel
  \begin{theorem*}[Restatement of \cref{#1}]
    \MYload{#1}
  \end{theorem*}
  \let\label\MYoldlabel
}
\newcommand{\restatelemma}[1]{%
  \let\MYoldlabel\label
  \let\label\MYdummylabel
  \begin{lemma*}[Restatement of \cref{#1}]
    \MYload{#1}
  \end{lemma*}
  \let\label\MYoldlabel
}
\newcommand{\restateprop}[1]{%
  \let\MYoldlabel\label
  \let\label\MYdummylabel
  \begin{proposition*}[Restatement of \cref{#1}]
    \MYload{#1}
  \end{proposition*}
  \let\label\MYoldlabel
}
\newcommand{\restatefact}[1]{%
  \let\MYoldlabel\label
  \let\label\MYdummylabel
  \begin{fact*}[Restatement of \cref{#1}]
    \MYload{#1}
  \end{fact*}
  \let\label\MYoldlabel
}
\newcommand{\restate}[1]{%
  \let\MYoldlabel\label
  \let\label\MYdummylabel
  \MYload{#1}
  \let\label\MYoldlabel
}
\newcommand{\e}{\epsilon}
\newcommand*{\normop}[1]{\norm{#1}_{\mathrm{op}}}
\newcommand*{\normf}[1]{\norm{#1}_{\mathrm{F}}}
\newcommand*{\Normf}[1]{\Norm{#1}_{\mathrm{F}}}
\newcommand{\one}{\mathbb{1}}
\newenvironment{algorithmbox}{\begin{mdframed}[nobreak=true]
\begin{algorithm}}{\end{algorithm}\end{mdframed}}
\newcommand{\SSBM}{\text{SSBM}}
\newcommand{\Bnull}{B^\circ}
\newcommand{\thetanull}{\theta^\circ}
\newcommand{\Wnull}{W^\circ}
\newcommand{\GW}{\text{GW}}
\title{Low degree conjecture implies sharp computational thresholds in stochastic block model\thanks{This work is supported by funding from the European Research Council (ERC) under the European Union’s Horizon 2020 research and innovation programme (grant agreement No 815464).}}
\author{
Jingqiu Ding\thanks{ETH Z\"urich.}
\and 
Yiding Hua\footnotemark[2]
\and
Lucas Slot\footnotemark[2]
\and
David Steurer\footnotemark[2]
}
\begin{document}

\pagestyle{empty}

\maketitle

\begin{abstract}
We investigate implications of the \emph{(extended) low-degree conjecture} (recently formalized in \cite{moitra2023precise}) in the context of the symmetric stochastic block model. Assuming the conjecture holds, we establish that no polynomial-time algorithm can \emph{weakly recover} community labels below the \emph{Kesten-Stigum (KS) threshold}. In particular, we rule out polynomial-time estimators that, with constant probability, achieve correlation with the true communities that is significantly better than random. Whereas, above the KS threshold, polynomial-time algorithms are known to achieve constant correlation with the true communities with high probability  \cite{massoulie2014community,abbe2015community}. 

To our knowledge, we provide the first rigorous evidence for the sharp transition in recovery rate for polynomial-time algorithms at the KS threshold. Notably, under a stronger version of the low-degree conjecture, our lower bound remains valid even when the number of blocks diverges. Furthermore, our results provide evidence of a computational-to-statistical gap in learning the parameters of stochastic block models.

In contrast to prior work, which either (i) rules out polynomial-time algorithms for hypothesis testing with $1 - o(1)$ success probability \cite{Hopkins18, bandeira2021spectral} under the low-degree conjecture, or (ii) rules out low-degree polynomials for learning the edge connection probability matrix \cite{luo2023computational}, our approach provides stronger lower bounds on the recovery  and learning problem.

Our proof combines low-degree lower bounds from \cite{Hopkins18, bandeira2021spectral} with graph splitting and cross-validation techniques. In order to rule out general recovery algorithms, we employ the correlation preserving projection method developed in \cite{Hopkins17}.
\end{abstract}

\microtypesetup{protrusion=false}
\tableofcontents{}
\microtypesetup{protrusion=true}

\clearpage

\pagestyle{plain}
\setcounter{page}{1}

\section{Introduction}
The stochastic block model (SBM) is among the most fundamental models in (social) network analysis and information theory, and has been intensively studied for decades \cite{holland1983stochastic,mossel2012stochastic,abbe2015exact,krzakala2013spectral,Abbe18Review}.
A fascinating phenomenon in the SBM is the sharp computational threshold for \emph{weak recovery} of its hidden community structure: efficient algorithms are known for achieving constant correlation with the hidden signal when the signal-to-noise ratio is above a certain threshold~\cite{coja2010graph,Decelle_2011,massoulie2014community,abbe2015community}, while no polynomial-time algorithms have been discovered below this threshold despite significant research effort.
This computational threshold is known as the \emph{Kesten-Stigum} threshold (KS threshold) in the statistical physics literature \cite{Decelle_2011}, and it is an important topic in both probability theory and theoretical computer science.

\paragraph{Kesten-Stigum threshold in the symmetric stochastic block model}
For simplicity, we focus on the following special case of the stochastic block model.
\begin{definition}[Symmetric $k$-stochastic block model $\SSBM(n,\frac{d}{n},\e,k)$]\label{def:ssbm}
Let $k\in\N^+$ be the number of communities, $d\in \N^+$ be the average degree of the graph, ${\e\in [0,1]}$ be the bias parameter, and ${n\in \N^+}$ be a multiple of $k$.
A graph $Y\in \Set{0,1}^{n\times n}$\footnote{For ease of notation, we will use the adjacency matrix and the graph interchangeably.} follows the symmetric $k$-stochastic block model distribution $\SSBM(n,\frac{d}{n},\e,k)$ if it is sampled in the following way: assign each vertex a label uniformly at random from $[k]$, then independently add edges with probability $(1+\frac{k-1}{k}\epsilon)\frac{d}{n}$ between vertices with the same label and with probability $(1-\frac{\epsilon}{k})\frac{d}{n}$ between vertices with different labels.
\end{definition}

Note that, when the bias parameter $\e=0$, the model reduces to \Erdos-\Renyi random graphs with average degree $d$, denoted by $\bbG(n,\frac{d}{n})$.

Given a graph sampled from $\SSBM(n,\frac{d}{n},\e,k)$, the most fundamental problem is to recover the hidden community labels of the vertices, or equivalently to recover the \emph{community membership matrix}
$M^\circ$ given by
\begin{equation*}
    M^\circ_{i,j} := \mathbf{1}_{x^\circ_i=x^\circ_j}-\frac{1}{k} \quad (i, j \in [n]),
\end{equation*}
where $x^\circ_i\in [k]$ is the label of the $i$-th vertex. In this work, we consider \emph{weak recovery} of $M^\circ$, that is to find a matrix $M$ which \emph{correlates} with $M^\circ$ in the following sense.

\begin{definition}[Recovery rate and weak recovery in the SBM]\label{def:weak-recovery}
    For any $\delta\in [0,1]$, an algorithm achieves recovery rate $\delta$ in the $k$-stochastic block model, if given a random graph sampled from $\SSBM(n,\frac{d}{n},\e,k)$, it outputs a nonzero matrix $M\in \R^{n\times n}$ such that with constant probability,
    \begin{equation*}
        \iprod{M,M^\circ}\geq \delta\normf{M}\normf{M^\circ}.
    \end{equation*}
    If the recovery rate $\delta$ satisfies $\delta\geq \Omega(1)$, then the algorithm is said to achieve \emph{weak recovery}.
\end{definition}

The difficulty of achieving weak recovery in the SBM appears to be closely related to the choice of parameters $\epsilon, d, k$.
In particular,  \cite{massoulie2014community,montanari15:_semid_progr_spars_random_graph} give polynomial-time algorithms for weak recovery above the KS threshold $\epsilon^2 d\geq k^2$.
On the other hand, while it is known that \emph{exponential-time} algorithms can achieve weak recovery below this threshold (when $k\geq 5$)~\cite{Banks2016InformationtheoreticTF}, it is widely believed that no polynomial-time algorithms exist that achieve weak recovery when $\epsilon^2 d < k^2$.

\paragraph{Rigorous evidence for average case complexity} To provide rigorous evidence for the innate hardness of weak recovery below the Kesten-Stigum threshold, one could follow two general approaches. 
The first is to construct a reduction from problems widely believed to be hard (such as \emph{planted clique}~\cite{brennan2019reducibilitycomputationallowerbounds,brennan2020reducibility} or \emph{learning with errors (LWE)} \cite{bruna2020continuouslwe,gupte2022continuous,tiegel2024improved}).
However, this approach is unlikely to be successful for our problem as no such reductions are known for (other) average-case problems with constant sharp statistical threshold.
The second approach is to prove \emph{unconditional} lower bounds that rule out certain classes of algorithms. For example, those based on sums of squares~\cite{barak2019nearly,kothari2017sum,jones2022sum,mohanty2020lifting}, statistical queries~\cite{blum2003noise,feldman2017general,Brennan2020StatisticalQA}, or low-degree polynomials~\cite{hopkins2017power,Hopkins17,Hopkins18,kunisky2019notes}.
As it appears that significant technical barriers have to be overcome to prove lower bounds against the former two classes for average-case problems with sharp statistical threshold, we focus on the latter.

\paragraph{The low-degree method for hypothesis testing} In recent years, the low-degree method has emerged as a standard tool for providing rigorous evidence for computational hardness in average-case problems \cite{Hopkins18,kunisky2019notes}.
Inspired by the fact that thresholding the likelihood ratio function provides optimal algorithms for hypothesis testing, the low-degree method provides a heuristic for average-case computational hardness by proving lower bounds against the low-degree projection of the likelihood ratio between two distributions from the hypothesis class.
In fact, previous works\cite{Hopkins18,bandeira2021spectral} has provided low-degree hardness evidence for the related hypothesis testing problems on distinguishing the stochastic block model and \Erdos-\Renyi distribution with probability $1-o(1)$.
However, significant barrier needs to be overcome for extending their computational lower bound to weak recovery below KS threshold.
The reason is that we want to rule out recovery algorithms which only need to succeed with constant probability, while below the Kesten-Stigum threshold, the two distributions considered in \cite{Hopkins18,bandeira2021spectral} can be distinguished with probability strictly larger than $1/2$ using the count of triangles in the graph \cite{banerjee2017optimalhypothesistestingstochastic}.
\footnote{Similar detection-recovery gap in success probabilities is also recently revealed by \cite{Shuichi2024} in the context of planted cliques.} 

\paragraph{Low-degree recovery lower bound}

In this paper, we focus on the implications of the following conjecture of~\cite{moitra2023precise} in the context of the SBM\footnote{The original conjecture is stated for the closely related spiked Wigner model.}.

\begin{conjecture}[Low-degree conjecture]
\label[conjecture]{conj:low-degree}
Let $P$ be a distribution from the $k$-stochastic block model and $Q$ be a distribution of \Erdos-\Renyi random graphs.
For functions $f : \R^{n \times n} \to \R$, consider the parameter 
\begin{equation} \label{eq:Rlamba}
\RPQ(f)\coloneqq \frac{\E_{Y\sim P} f(Y)}{\sqrt{\E_{Y\sim Q}(f(Y))^2}}.
\end{equation}

Suppose that for fixed constant $\delta\in [0,1]$, and every polynomial $f(\cdot)$ of degree at most $n^\delta$, we have $\RPQ(f)\leq O(1)$.
Then, for any function $f(\cdot)$ computable in time $\exp(n^{\delta})$ taking values in~$[0,1]$ satisfying $\E _{Y\sim P} f(Y)\geq \Omega(1)$\footnote{From personal communication with Alexander S. Wein, the original strengthened low-degree conjecture formalized in \cite{moitra2023precise} is revised since a refutation by Ansh Nagda, using algorithms which exploit the rare events.
Thus in this paper, we add this additional restriction which seems to survive the refutation.}, we have \(\RPQ(f)\leq O(1)\)\,.
\end{conjecture}
The parameter $\RPQ(\cdot)$ in~\eqref{eq:Rlamba} is motivated by Le Cam's method: if the maximum of $\RPQ(f)$ over all computable functions $f$ is bounded by $O(1)$, no algorithm can distinguish between the distributions $P$ and $Q$ with high probability. Intuitively, \cref{conj:low-degree} tells us that, if the maximum of $\RPQ(f)$ is bounded over all low-degree polynomials, it is in fact bounded over all efficiently computable functions.

Assuming this conjecture for $P$ given by the symmetric SBM and $Q$ given by the \Erdos-\Renyi graph distribution, we aim to clarify the relation between the upper bounds for the \emph{low-degree likelihood ratio} (proved in ~\cite{Hopkins18,bandeira2021spectral}) and lower bounds for computationally efficient algorithms for the SBM.
Specifically, we address the following question:
\begin{question}
    Assuming Conjecture \ref{conj:low-degree}, can we rule out polynomial-time algorithms achieving weak recovery (or even non-trivial error rate) in the stochastic block model below the Kesten-Stigum threshold, when the number of communities is a universal constant?
\end{question}

\paragraph{Implications for learning stochastic block model} A potential computational-statistical gap similar to weak recovery can also be observed in the closely related problem of learning the parameters of the stochastic block model \cite{Xu2017RatesOC}.
Although \cite{luo2023computational} established a low-degree recovery lower bound for learning the probability matrix in the symmetric SBM, their result does not imply a lower bound for learning the parameters $d,\epsilon$, as their hard instance is given by a symmetric SBM with fixed parameters.
Moreover, when $k\lesssim \log(n)$ (rather than constant), their lower bound cannot rule out polynomial-time algorithms for achieving the minimax error rate. 

As such, we address the following question in this paper:
\begin{question}
    Assuming Conjecture \ref{conj:low-degree}, can we provide rigorous evidence for a computational-statistical gap in the error rate of learning the parameters of the stochastic block model? 
\end{question}

\section{Main result} \label{sec:main-result}
\subsection{Computational lower bound for weak recovery in stochastic block model}
We provide the first rigorous evidence that no polynomial-time algorithms can achieve recovery rate $n^{-0.49}$ with constant probability below the Kesten-Stigum threshold, assuming~\cref{conj:low-degree}. 
\begin{theorem}[Computational lower bound below the KS threshold, see \cref{thm:full-main-theorem-weak-recovery} for the full statement]\label{thm:main-theorem-weak-recovery}
    Let $k, d\in \N^+$ be such that $k\leq O(1), d\leq n^{o(1)}$.
    Assume that for any $d'\in \N^+$ such that $0.999 d\leq d'\leq d$, Conjecture~\ref{conj:low-degree} holds with distribution $P$ given by $\SSBM(n,\frac{d'}{n},\e,k)$ and distribution $Q$ given by the \Erdos-\Renyi graph model $\bbG(n, \frac{d'}{n})$. 
    Then, no $\exp\Paren{n^{0.99}}$ time algorithm can achieve recovery rate $n^{-0.49}$ in the $k$-stochastic block model when $\epsilon^2 d\leq 0.99 k^2$.
\end{theorem}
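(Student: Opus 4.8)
The plan is to argue by contraposition. Suppose there were an $\exp(n^{0.99})$-time algorithm $A$ achieving recovery rate $n^{-0.49}$ in $\SSBM(n,\tfrac{d}{n},\epsilon,k)$ with $\epsilon^2 d\le 0.99k^2$. From $A$ I will manufacture a function $f\colon\R^{n\times n}\to[0,1]$, computable in time $\exp(n^{0.99})$, such that $\RPQ(f)=\omega(1)$ for $P=\SSBM(n,\tfrac{d'}{n},\epsilon,k)$ and $Q=\bbG(n,\tfrac{d'}{n})$ at a suitable $d'$ in the window near $d$ for which the conjecture is assumed. Since $\epsilon^2 d'$ stays strictly below $k^2$, we are below the Kesten--Stigum threshold, so the low-degree second-moment estimates of \cite{Hopkins18,bandeira2021spectral} show that $\RPQ(g)\le O(1)$ for \emph{every} polynomial $g$ of degree at most $n^{0.99}$; hence \cref{conj:low-degree}, applied with its parameter $\delta=0.99$, would force $\RPQ(f)\le O(1)$ for every $[0,1]$-valued $f$ computable in time $\exp(n^{0.99})$ --- contradicting the $f$ we built. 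So everything reduces to constructing $f$.

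The construction is graph splitting followed by cross-validation. Given a graph $W$ drawn from $P$ or $Q$, first use the edge-union coupling --- writing each edge probability $p$ as $1-(1-p_1)(1-p_2)$, sampling the two constituents independently, and recovering them from $W$ --- to split $W$ into graphs $W_{\mathrm{tr}}$ and $W_{\mathrm{val}}$ that are independent conditioned on the hidden labels $x^\circ$ and share those labels, with average degrees $d$ and $d_{\mathrm{val}}=\Theta(d)$ respectively; this is precisely where the freedom in the degree window near $d$ is used, so that $W_{\mathrm{tr}}$ has exactly the degree at which $A$'s guarantee is assumed while $W_{\mathrm{val}}$ still carries enough signal to validate. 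Run $A$ on $W_{\mathrm{tr}}$ to obtain a matrix $M$; when $W\sim P$, with constant probability $\iprod{M,M^\circ}\ge n^{-0.49}\normf{M}\normf{M^\circ}$. To protect against an arbitrary, possibly adversarially shaped $M$, pass it through the correlation-preserving projection of \cite{Hopkins17}: this replaces $M$ by an $\bar M$ lying in a bounded convex set containing $M^\circ$, with $\iprod{\bar M,M^\circ}\ge\Omega(n^{-0.49})\normf{\bar M}\normf{M^\circ}$ still holding on the success event.

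Now form the self-normalized cross-validation statistic $T:=\iprod{\bar M,\;W_{\mathrm{val}}-\E_{Q}[W_{\mathrm{val}}]}/\normf{\bar M}$ and set $f:=\mathbf 1\{\,T\ge\tau\,\}$ for a threshold $\tau=\Theta\!\big(d_{\mathrm{val}}\,\epsilon\,n^{-0.49}\big)$. Because $\bar M$ is a function of $W_{\mathrm{tr}}$ and $W_{\mathrm{val}}$ is conditionally independent given $x^\circ$, a first-moment computation gives $\E_{Q}[T\mid W_{\mathrm{tr}}]=0$, while under $P$ one gets $\E_{P}[T\mid W_{\mathrm{tr}},x^\circ]=\tfrac{d_{\mathrm{val}}\epsilon}{n}\,\iprod{\bar M,M^\circ}/\normf{\bar M}$ (up to a negligible diagonal correction), which on the success event is $\gtrsim \tfrac{d_{\mathrm{val}}\epsilon}{n}\cdot n^{-0.49}\normf{M^\circ}\gtrsim d_{\mathrm{val}}\epsilon\,n^{-0.49}$, using that $\normf{M^\circ}=\Theta(n)$ with high probability. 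A second-moment bound gives conditional variance $O(d_{\mathrm{val}}/n)$ on both sides (this uses only that $\normf{\bar M}$ normalizes $T$). Chebyshev then yields $\Pr_{Q}[T\ge\tau]=O\!\big(1/(d_{\mathrm{val}}\epsilon^{2}n^{0.02})\big)=o(1)$ and $\Pr_{P}[T\ge\tau]\ge\Omega(1)$, so $\RPQ(f)=\big(\Pr_{P}[T\ge\tau]-\Pr_{Q}[T\ge\tau]\big)/\sqrt{\Var_{Q}(f)}=\omega(1)$, as needed (a negligible smoothing of $f$ makes the denominator nonzero; and in the residual regime $\epsilon^2 d=O(n^{-0.02})$, where the signal-to-noise bound degenerates, recovery rate $n^{-0.49}$ is information-theoretically impossible, so the statement holds there for trivial reasons). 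Finally $f$ calls $A$ once and otherwise does polynomial-time work, so it runs in time $\exp(n^{0.99})$.

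The crux --- and the main obstacle --- is the cross-validation step: $A$'s output is an arbitrary matrix promised to align with $M^\circ$ only with constant probability and only at the very weak rate $n^{-0.49}$, so one must (i) tame it without destroying that tiny alignment via the correlation-preserving projection, which is delicate because directions of $M$ lying far above the natural scale of $M^\circ$ cost much Frobenius mass for the little correlation they can carry, and (ii) balance the exponents so that a recovery rate of $n^{-0.49}$, just below the trivial scale $n^{-1/2}$, together with the $\exp(n^{0.99})$ time budget matching the conjecture's parameter, yields a cross-validated signal exceeding the $Q$-fluctuations by a super-constant factor --- this is exactly what the slack in ``$0.49<\tfrac12$'', ``$\epsilon^2 d\le 0.99k^2$'', and the degree window near $d$ is spent on. A secondary point requiring care is the coupling bookkeeping: ensuring $A$ is only ever invoked on inputs whose parameters match those at which its guarantee is assumed, and that \cref{conj:low-degree} is invoked only for the degrees $d'$ in the hypothesized window.
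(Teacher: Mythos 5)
Your overall plan matches the paper's: split the graph, run the hypothetical recovery algorithm on one piece, regularize its output via the correlation-preserving projection of~\cite{Hopkins17} into a bounded convex set containing $M^\circ$, and cross-validate against the held-out piece to build a $[0,1]$-valued test statistic that separates $P$ from $Q$ well enough to contradict \cref{thm:ldlr-sbm} under \cref{conj:low-degree}. Your exponent bookkeeping (signal $\asymp d\epsilon\, n^{-0.49}$ against noise $\asymp\sqrt{d/n}$, so a margin of $n^{0.01}$) is also the right accounting, and using Chebyshev rather than Bernstein is adequate here since $\RPQ(f)=\omega(1)$ suffices against the $\exp(k^2)=O(1)$ low-degree bound when $k=O(1)$.

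However, the crucial technical step --- the claim that your ``edge-union coupling'' lets you recover from $W$ a pair $(W_{\mathrm{tr}},W_{\mathrm{val}})$ that is \emph{exactly} independent conditioned on the labels $x^\circ$ --- does not go through, and this is exactly the point the paper spends real work on. Writing $p = 1-(1-p_1)(1-p_2)$ and resampling the constituents from their conditional law given $W$ requires, for each present edge, knowing the probabilities $p_1,p_2$ --- but in the SBM these differ according to whether the endpoints share a community, i.e.\ they depend on the very labels $x^\circ$ you are trying to infer. Any label-free resampling rule yields only approximate independence (the ``both present'' mass $p_1p_2/p$ cannot be matched across same/different pairs without knowing the pair type), and a straight Bernoulli subsample of the edges of $G$ (the paper's choice) is likewise only approximately independent: the two pieces are disjoint, so $G_1(i,j)G_2(i,j)=0$ deterministically, which a product measure would not enforce. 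The paper addresses precisely this by proving a decoupling lemma (\cref{lem:decoupling} and \cref{cor:decoupling}): it constructs a matrix $\tilde W_2$ that \emph{is} independent of the training half, has the same entrywise means as $W_2$, and differs from $W_2$ entrywise by mean-zero terms with variance $O(d^3/n^3)$, and then absorbs $\iprod{W_2-\tilde W_2,\hat M}$ as an error term in both the $P$-side and $Q$-side estimates. Your write-up has no counterpart to this step, and without it the ``$\E_Q[T\mid W_{\mathrm{tr}}]=0$'' identity and the variance bounds you invoke are not justified, because $\hat M$ and $W_{\mathrm{val}}$ are not independent given $x^\circ$. You would need to either carry out an analogous decoupling argument, or find a genuinely label-free exact coupling (which, for the reason above, does not exist in this setting).

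A secondary caveat: your fallback for the regime $\epsilon^2 d = O(n^{-0.02})$ (that recovery rate $n^{-0.49}$ is information-theoretically impossible there) is asserted, not argued, and the paper does not rely on such a claim; its \cref{lem:lb_sbm} carries the hypothesis $\epsilon=\Theta(1/\sqrt d)$, i.e.\ it is aimed at the near-threshold regime. If you want the theorem at the stated generality you should either prove that impossibility claim or restrict the statement as the paper's proof implicitly does.
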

Note that the algorithm which outputs a matrix $\hat M$ reflecting a random partition of the vertices into~$k$ communities only achieves vanishing recovery rate $\delta \lesssim 1/\sqrt{n}$.
In contrast, recall that above the KS threshold (when $\epsilon^2 d/k^2 > 1$), polynomial-time algorithms can achieve a recovery rate in $\Omega(1)$ (i.e., weak recovery).
To our knowledge, this is the first result showing such a sharp transition in the recovery rate that can be achieved by efficient algorithms above and below the KS threshold.

Under a strengthened low-degree conjecture (see \Cref{conj:eldlr} below), our lower bound extends to the regime where the number of communities can be as large as $n^{o(1)}$. 

\begin{theorem}[Computational lower bound for diverging number of blocks]\label{thm:main-theorem-super-constant-blocks}
    Let $k,d\in \N^+$ be such that $k\leq n^{o(1)}, d\leq n^{o(1)}$.
    Assume that for any $d'\in \N^+$ such that $0.999 d\leq d'\leq d$, Conjecture \ref{conj:eldlr} holds with distribution $P$ given by $\SSBM(n,\frac{d'}{n},\e,k)$ and distribution $Q$ given by the \Erdos-\Renyi graph model $\bbG(n, \frac{d'}{n})$. 
    Then no $\exp\Paren{n^{0.99}}$ time algorithm can achieve recovery rate $n^{-0.49}$ in the $k$-stochastic block model when $\epsilon^2 d\leq 0.99 k^2$. 
\end{theorem}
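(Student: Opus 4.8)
The plan is to reduce the diverging-blocks case (\cref{thm:main-theorem-super-constant-blocks}) to essentially the same machinery used for the constant-blocks case (\cref{thm:main-theorem-weak-recovery}), with the only substantive change being that we invoke the strengthened low-degree conjecture \cref{conj:eldlr} in place of \cref{conj:low-degree}. The overall strategy is an indirect argument: suppose toward a contradiction that some algorithm $\mathcal{A}$ running in time $\exp(n^{0.99})$ achieves recovery rate $n^{-0.49}$ in $\SSBM(n,\tfrac{d}{n},\e,k)$ with $\e^2 d \le 0.99 k^2$ and $k \le n^{o(1)}$. We then build from $\mathcal{A}$ an $\exp(n^{0.99})$-time function $f$ with $\RPQ(f) = \omega(1)$ for $P = \SSBM(n,\tfrac{d'}{n},\e,k)$ and $Q = \bbG(n,\tfrac{d'}{n})$ for some $d'$ with $0.999 d \le d' \le d$, contradicting the conclusion of \cref{conj:eldlr} (which by hypothesis holds for all such $d'$), since the low-degree part of that conjecture's hypothesis is validated by the low-degree upper bounds of \cite{Hopkins18, bandeira2021spectral} in the relevant parameter range.

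The construction of $f$ from $\mathcal{A}$ proceeds through three ingredients. First, \textbf{graph splitting}: given an input graph $G$, randomly split its edge set (or equivalently, sub-sample edges independently with an appropriate probability) to produce two graphs $G_1, G_2$ that are, conditioned on the hidden labels, independent and distributed as sparser SBMs $\SSBM(n, \tfrac{d'}{n}, \e, k)$ — this is why the conjecture is needed for the whole interval $[0.999d, d]$ rather than at $d$ alone, and it is routine to check $\e^2 d' \le 0.99 k^2$ is preserved. Second, \textbf{correlation-preserving projection} from \cite{Hopkins17}: run $\mathcal{A}$ on $G_1$ to obtain a candidate matrix $M$; since $\mathcal{A}$ only achieves recovery with \emph{constant} probability and outputs an arbitrary nonzero real matrix, one must post-process $M$ — project it onto an appropriate convex body / symmetrize it — so that it correlates with $M^\circ$ with the right normalization and in a way amenable to the next step. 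Third, \textbf{cross-validation / statistic-from-estimator}: use $G_2$ to test the quality of the (processed) output of $\mathcal{A}(G_1)$; concretely, form a bilinear-type statistic such as $\iprod{M, Y_2 - \tfrac{d'}{n} \one\one^{\transpose}}$ or a suitably centered variant, which under $P$ is pushed up by the correlation of $M$ with $M^\circ$ (via the planted signal in $G_2$) but under $Q$ has no signal to latch onto and concentrates. Truncating this statistic to $[0,1]$ and controlling its variance under $Q$ yields $\RPQ(f) = \omega(1)$; the recovery rate $n^{-0.49}$ is chosen precisely so that the resulting signal $\gtrsim n^{0.01}$ (roughly) beats the $O(1)$ allowed.

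The main obstacle I expect is making the diverging-$k$ case go through quantitatively in every step: with $k = n^{o(1)}$ the Frobenius norm $\normf{M^\circ}^2 = n(1 - 1/k) = \Theta(n)$ is still fine, but the per-edge signal-to-noise contributions, the concentration bounds in the cross-validation step, and the distortion introduced by the correlation-preserving projection all carry $k$-dependent factors that must be shown to be absorbed into the $n^{o(1)}$ and $n^{0.01}$ slack. In particular, one must verify that the low-degree hypothesis of \cref{conj:eldlr} is genuinely satisfied for $\SSBM(n,\tfrac{d'}{n},\e,k)$ vs $\bbG(n,\tfrac{d'}{n})$ in the regime $\e^2 d' \le 0.99 k^2$, $k \le n^{o(1)}$ — i.e. that the known low-degree computation of \cite{Hopkins18, bandeira2021spectral} (which bounds the degree-$n^\delta$ likelihood-ratio norm by $O(1)$ below KS) extends to super-constant $k$ with the degree threshold that \cref{conj:eldlr} demands; this is exactly the place where the strengthened conjecture's extra hypotheses are invoked. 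Once the bookkeeping of these $k$-dependent factors is done, the logical skeleton is identical to \cref{thm:main-theorem-weak-recovery}, so I would present the two theorems with a shared proof and merely flag the places where $k \le n^{o(1)}$ (rather than $k = O(1)$) forces the use of \cref{conj:eldlr} and the corresponding stronger low-degree input.
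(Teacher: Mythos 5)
Your overall strategy matches the paper's: re-run the reduction of \cref{thm:main-theorem-weak-recovery} (graph splitting, correlation-preserving projection, cross-validation statistic) and swap \cref{conj:low-degree} for \cref{conj:eldlr}. The paper's proof of \cref{thm:main-theorem-super-constant-blocks} is literally the \cref{thm:main-theorem-weak-recovery} proof with the stronger conjecture plugged in, so the skeleton of your argument is right.

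There is, however, a real misconception in your account of \emph{why} the stronger conjecture is required, and it translates into a quantitative target that would not close the argument. You describe the low-degree computation of \cite{Hopkins18,bandeira2021spectral} as "bounding the degree-$n^\delta$ likelihood-ratio norm by $O(1)$ below KS" and extending this to super-constant $k$. That is false: \cref{thm:ldlr-sbm} gives a bound of $O_{\delta,\eps,d}(\exp(k^2))$ below KS, which is \emph{not} $O(1)$ once $k$ diverges (this is exactly what \cref{sec:beyond-constant-ldlr} makes precise). This matters twice over. First, it is the reason \cref{conj:low-degree} does not apply here: its hypothesis demands $\RPQ(f) \leq O(1)$ for all low-degree $f$, and that premise fails for $k = \omega(1)$. \cref{conj:eldlr} has no such hypothesis — it is a comparative statement $\RPQ(f) \lesssim \max_{\deg \leq n^\delta} \RPQ(f)$ that one can apply regardless of whether the right-hand side is constant. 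You instead describe \cref{conj:eldlr} as having a "hypothesis" to be "validated," which inverts the structural difference between the two conjectures. Second, it makes your stated target $\RPQ(f) = \omega(1)$ insufficient: since $\max_{\deg \leq n^\delta} \RPQ(f)$ can be as large as $\exp(k^2) = \omega(1)$, producing an $f$ with $\RPQ(f) = \omega(1)$ does not contradict \cref{conj:eldlr}. You need $\RPQ(f) \gg \exp(k^2)$. Fortunately the construction delivers this for free — \cref{lem:lb_sbm} and \cref{lem:ub_ER} give $\RPQ(f) \geq \exp(n^{0.001\delta_1})$, and $\exp(n^{\Omega(1)}) \gg \exp(k^2)$ whenever $k \leq n^{o(1)}$ — but you should make this comparison explicit rather than aiming for a target that would not suffice. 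With that correction the remaining $k$-dependent bookkeeping you worry about (Frobenius norms, projection distortion, Bernstein concentration) is already absorbed in \cref{lem:lb_sbm}/\cref{lem:ub_ER}, which are stated for $k \leq n^{o(1)}$.
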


\paragraph{Concurrent work} A concurrent work \cite{Wein2025sharp} also provides rigorous evidence for computational lower bound below KS threshold based on low-degree polynomials.
They give the first unconditional low-degree recovery lower bound below the KS threshold in the stochastic block model.
In our setting, for the task of weak recovery (which is to say, achieving constant recovery rate), their lower bound directly rules out estimators based on degree $n^\delta$ polynomials, which captures many natural candidates of algorithms.
Such lower bound is beyond reach with techniques from \cite{Hopkins18,bandeira2021spectral}.
For this, they introduce significantly new techniques in analyzing the low-degree polynomials.

In comparison, our main contribution lies in revealing a potential sharp transition of recovery rate for polynomial time (and sub-exponential time) algorithms, 
below and above the KS threshold. We give evidence that sub-exponential time algorithms cannot achieve recovery rate $n^{-0.49}$ with constant probability below the KS threshold,
while polynomial time algorithms are known to achieve weak recovery above the threshold.
Our techniques is based on relating the rate of recovery to the low-degree conjecture formulated in \cite{moitra2023precise}.
Notably, we did not prove new low-degree lower bounds for our main results, but exploited the existing results from \cite{Hopkins18,bandeira2021spectral}.

In a concurrent work, \cite{li2025algorithmiccontiguitylowdegreeconjecture} established computational lower bounds for random graph matching below sharp thresholds, also based on a strengthened version of the low-degree conjecture. 
However, their polynomial-time reduction between hypothesis testing with lopsided success probability and recovery differs significantly from ours. 
Additionally, they formalize a standard framework for such reductions, which has the potential to be applied to a broader class of problems.

As we discuss next, our result also has implications for other learning tasks in the stochastic block model.

\subsection{Computational lower bound for learning stochastic block model}\label{sec:learning-result}

We give computational lower bounds for learning the stochastic block model under two different error metrics: learning the edge connection probability matrix and the block graphon function. 

\begin{definition}[Edge connection probability matrix for the SBM]\label[Definition]{def:para-sbm} 
    In the symmetric stochastic block model $\SSBM(n,\frac{d}{n},\e,k)$, the edge connection probability matrix $\thetanull\in [0,1]^{n\times n}$ has entries $\thetanull_{i,j}=(1+\frac{(k-1)\e}{k}) \frac{d}{n}$ if $i,j$ belong to the same community and $\thetanull_{i,j}=(1-\frac{\e}{k}) \frac{d}{n}$ if $i,j$ belong to different communities. 
\end{definition}

Given a random graph sampled from distribution $\SSBM(\e,d,k,n)$, the simple polynomial-time algorithm based on $k$-SVD outputs a matrix $\theta\in [0,1]^{n\times n}$ such that $\E \norm{\theta-\theta^\circ}_F^2\leq O(2k\cdot d)$ \cite{Xu2017RatesOC,luo2023computational}.
On the other hand, exponential-time algorithms based on maximum-likelihood can give an estimator which achieves the optimal error rate
$\E \normf{\theta-\theta^\circ}^2\leq O\Paren{\log(k)\cdot d+k^2}$.
We give rigorous evidence for the hardness of learning the edge connection probability matrix of symmetric SBMs, by proving the following computational lower bound. 

\begin{theorem}[Computational lower bound for learning the SBM]\label{thm:lb-edge-probability}
    Let $k,d\in \N^+$ be such that $k\leq n^{o(1)}, d\leq o(n)$.
    Assume that for any $d'\in \N^+$ such that $0.999 d\leq d'\leq d$, Conjecture \ref{conj:eldlr} holds with distribution $P$ given by $\SSBM(n,\frac{d'}{n},\e,k)$ and distribution $Q$ given by \Erdos-\Renyi graph model $\bbG(n, \frac{d'}{n})$. 
    Then given graph $G\sim \SSBM(n,\frac{d}{n},\e,k)$, no $\exp\Paren{n^{0.99}}$ time algorithm can output $\theta\in [0,1]^{n\times n}$ achieving error rate $\normf{\theta-\thetanull}^2\leq 0.99kd/4$ with constant probability, where $\thetanull$ is the sampled edge connection probability matrix.
\end{theorem}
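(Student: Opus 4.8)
The plan is to reduce the learning problem (Theorem~\ref{thm:lb-edge-probability}) to the weak-recovery lower bound established in Theorem~\ref{thm:main-theorem-weak-recovery} (or rather its full version \cref{thm:full-main-theorem-weak-recovery}, under the stronger conjecture \cref{conj:eldlr} to allow $k\le n^{o(1)}$). The key observation is that the edge connection probability matrix $\thetanull$ has a rigid two-level structure: it equals $\frac{d}{n}\one\one^\transpose + \frac{\e d}{n} M^\circ$, where $M^\circ$ is the community membership matrix. Thus an estimator $\theta$ that is close to $\thetanull$ in Frobenius norm must, after subtracting the known rank-one part $\frac{d}{n}\one\one^\transpose$ and rescaling by $\frac{n}{\e d}$, produce a matrix correlated with $M^\circ$. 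Concretely, set $M \coloneqq \frac{n}{\e d}\Paren{\theta - \frac{d}{n}\one\one^\transpose}$; then $M - M^\circ = \frac{n}{\e d}(\theta - \thetanull)$, so $\normf{M - M^\circ}^2 = \frac{n^2}{\e^2 d^2}\normf{\theta - \thetanull}^2$. Since $\normf{M^\circ}^2 = \frac{k-1}{k}\cdot\frac{n^2}{k}\cdot\frac{n}{n}$ — more precisely $\normf{M^\circ}^2 = n^2(1/k - 1/k^2) \cdot (1 + o(1))$ accounting for diagonal terms, giving $\normf{M^\circ}^2 \asymp n^2/k$ — a bound of the form $\normf{\theta-\thetanull}^2 \le 0.99 kd/4$ translates into $\normf{M - M^\circ}^2 \le 0.99 \frac{n^2}{\e^2 d^2}\cdot\frac{kd}{4} = \frac{0.99\, k n^2}{4\e^2 d}$. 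Below the KS threshold $\e^2 d \le 0.99 k^2$, this is at most $\frac{0.99\, k n^2}{4\cdot 0.99^{-1}\cdot\e^2 d} \cdot$ — I need to be careful here: $\frac{1}{\e^2 d} \ge \frac{1}{0.99 k^2}$, so $\normf{M-M^\circ}^2 \le \frac{0.99 k n^2}{4\,\e^2 d}$ which, using $\e^2 d \le 0.99 k^2$, could be as large as $\frac{n^2}{4k}$, a constant fraction of $\normf{M^\circ}^2$. So I should track constants to ensure the resulting correlation beats the threshold $n^{-0.49}$ by a wide margin.

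First I would unpack the algebra above carefully and convert the Frobenius-distance guarantee into a correlation guarantee via the identity $\iprod{M, M^\circ} = \frac12\Paren{\normf{M}^2 + \normf{M^\circ}^2 - \normf{M - M^\circ}^2}$. If $\normf{M - M^\circ}^2 \le c\,\normf{M^\circ}^2$ for some constant $c < 1$, then $\iprod{M, M^\circ} \ge \frac12(1-c)\normf{M^\circ}^2 \ge \frac12(1-c)\normf{M^\circ}\normf{M}$ provided $\normf{M} \le \normf{M^\circ}$ — and if $\normf{M}$ is large we still get $\iprod{M,M^\circ} \ge \frac12\normf{M^\circ}^2 - \frac{c}{2}\normf{M^\circ}^2 \ge \ldots$; the standard trick is that $\normf{M-M^\circ}^2 \le c\normf{M^\circ}^2$ forces $\iprod{M,M^\circ} \ge (1 - \sqrt{c})\cdot\text{const}\cdot\normf{M}\normf{M^\circ}$ by a short argument, or one truncates $M$ to have norm at most $\normf{M^\circ}$ without decreasing the inner product. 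Either way, a constant recovery rate $\delta = \Omega(1) \gg n^{-0.49}$ results. Second, I would check that the reduction is algorithmically cheap: subtracting $\frac{d}{n}\one\one^\transpose$ and rescaling costs $O(n^2)$ time, so an $\exp(n^{0.99})$-time learning algorithm yields an $\exp(n^{0.99})$-time recovery algorithm with the same (constant) success probability. Third, invoke \cref{thm:full-main-theorem-weak-recovery}: under \cref{conj:eldlr} (needed since $k$ may diverge) and below the KS threshold, no such recovery algorithm exists, contradiction.

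The main obstacle I anticipate is not conceptual but bookkeeping: making sure the constants line up so that the assumed error bound $0.99 kd/4$ is genuinely below the threshold needed to trigger the recovery lower bound, \emph{and} simultaneously translates to a recovery rate comfortably above $n^{-0.49}$, across the full parameter range $k \le n^{o(1)}$, $d \le o(n)$. In particular one must handle the regime where $\e$ is small (so $M^\circ$ has small "signal strength" relative to $\thetanull$) versus where $\e$ is close to $1$; the factor $\frac{n^2}{\e^2 d^2}$ in the rescaling is what makes the small-$\e$ regime delicate, but crucially the threshold condition $\e^2 d \le 0.99 k^2$ precisely compensates. I would also need to confirm that the "$d'$ ranging over $[0.999d, d]$" hypothesis in the conjecture is exactly what \cref{thm:full-main-theorem-weak-recovery} consumes, so that the hypotheses of this theorem suffice verbatim. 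A secondary point: the recovery lower bound is stated for outputting a matrix correlated with the community membership matrix $M^\circ$, and one should double-check that the $\thetanull$ in the theorem statement (the \emph{sampled} probability matrix, i.e. depending on the realized labels) corresponds to the same $M^\circ$, which it does by \cref{def:para-sbm}.
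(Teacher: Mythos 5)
Your core observation — that any estimator $\theta$ with $\normf{\theta - \thetanull}^2 \le 0.99kd/4$ yields, via $M := \frac{n}{\e d}(\theta - \frac{d}{n}\one\one^\transpose)$, a matrix with constant correlation with $M^\circ$ — matches what the paper proves as \cref{lem:reduction-learning-recovery}. However, the plan to then invoke \cref{thm:full-main-theorem-weak-recovery} or \cref{thm:main-theorem-super-constant-blocks} as a black box does not close the argument, because neither of those theorems covers the full parameter regime that \cref{thm:lb-edge-probability} claims. \cref{thm:full-main-theorem-weak-recovery} requires $k \le O(1)$, and \cref{thm:main-theorem-super-constant-blocks} requires $d \le n^{o(1)}$; the learning theorem asks for $k \le n^{o(1)}$ \emph{and} $d \le o(n)$ simultaneously. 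There is no version of the recovery lower bound in the paper that you can cite verbatim for that joint range.

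The reason the paper does not (and cannot) take your shortcut is informative: in the proof of \cref{lem:lb_sbm}, the decoupling error term $\iprod{W_2 - \tilde W_2, \hat M}$ is bounded by $\normf{W_2 - \tilde W_2}\normf{\hat M} \lesssim \sqrt{n^{1+\delta_1} d^3}$, which only stays below the signal $\Omega(n^{0.5+\delta_1}\e d)$ when $d$ is essentially $n^{o(1)}$. When reducing from learning, the recovery rate $\delta$ is a \emph{constant}, which strengthens the signal to $\Omega(n \e d / k)$; the paper exploits this by re-running the cross-validation argument (\cref{alg:reduction-test-learning}) but replacing the Cauchy--Schwarz bound with a trace-inequality/operator-norm bound (\cref{lem:spectral-concentration-sbm}), yielding $|\iprod{W_2 - \tilde W_2, \hat M}| \lesssim \normop{W_2 - \tilde W_2}\,\Tr(\hat M + \frac{1}{k\delta}\Ind\Ind^\transpose) \lesssim d\sqrt{n \log n}/\delta$, which is dominated by the constant-$\delta$ signal for $d$ up to nearly $n/\log n$. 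This step is where the additional range in $d$ is earned, and it requires opening up the recovery proof rather than invoking it. To fix your proposal you should either (a) prove a version of the recovery lower bound that simultaneously handles $k \le n^{o(1)}$, $d \le o(n)$, and constant recovery rate, using exactly this trace-inequality refinement, or (b) present the learning lower bound as the paper does, as a parallel run of the cross-validation argument with the constant-$\delta$ refinement baked in, rather than as a corollary of the recovery theorem.

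One further bookkeeping point you flag but do not resolve: the subsampling step (with $1-\eta = 0.999$) in \cref{alg:reduction-test-learning} means the learning algorithm sees $\SSBM(n,\frac{d'}{n},\e,k)$ with $d' = (1-\eta)d$ rather than $d$, and the test function is defined on the held-out edges $Y_2$. Any black-box account must make explicit that the learning algorithm is being run on the subsampled graph, which is why the conjecture is assumed for all $d' \in [0.999d, d]$; your proposal never introduces this split and so cannot construct the test statistic $g(Y)$ whose large/small dichotomy is what ultimately contradicts \cref{thm:ldlr-sbm}.
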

Our computational lower bound matches the guarantees of known efficient algorithms in \cite{Xu2017RatesOC} up to constant factors.\footnote{For completeness, we state this algorithmic result in \cref{sec:learning_algorithm}.}
In comparison, \cite{luo2023computational} show that degree-$\ell$ polynomials cannot give error rate better than $O(kd/\ell^4)$.
For standard low-degree conjectures \cite{Hopkins18,kunisky2019notes,schramm2022computational}, to give evidence of hardness for polynomial-time algorithms, the polynomial degree $\ell$ needs to be taken as large as $\log(n)$.
Therefore, their lower bound on the error rate can only match the guarantees of existing algorithm up to logarithmic factors.
As a result, they cannot give evidence of a computational-statistical gap for the error rate when $k=O(\log(n))$.\footnote{We also obtain an unconditional low-degree recovery lower bound for learning the $k$-stochastic block model when $k\leq n^{0.001}$ in \cref{sec:low-degree-reduction}.}

Another error metric considered in \cite{klopp2017oracle, borgs2015private, borgs2018revealing,chen2024graphon} is learning the graphon function.
In the context of the symmetric SBM, the graphon function is block-wise constant and given by:

\begin{definition}[Graphon in the symmetric SBM]\label{def:graphon}
Let $d,k\in \N^+$ and $\e\in [0,1]$. Consider the symmetric stochastic block model $\SSBM(n,\frac{d}{n},\e,k)$.
Let $\Bnull\in [0,1]^{k\times k}$ be the community connection probability matrix with diagonal entries given by $(1-\frac{\e (k-1)}{k})\frac{d}{n}$ and non-diagonal entries given by $(1+\frac{\e}{k})\frac{d}{n}$. 
Let $\gamma:[0,1]\to [k]$ be a mapping such that $\gamma(x)=\lceil{kx}\rceil$.
Then a function $\Wnull:[0,1]\times [0,1]\to [0,1]$ is a graphon generating distribution $\SSBM(n,\frac{d}{n},\e,k)$ if $\Wnull(x,y)=\Bnull_{\gamma(x),\gamma(y)}$.
\end{definition}

We note that in contrast with the edge connection probability matrix, the graphon function only depends on the parameters of the distribution $\e,d,k$.
Previous works \cite{borgs2015private,borgs2018revealing,klopp2017oracle,chen2024graphon} consider the following distance metric between graphons:

\begin{definition}[Gromov-Wasserstein distance between graphons]\label{def:graphon-distance}
    Let functions $W_1, W_2:[0,1]\times [0,1]\to [0,1]$.
    We consider the Gromov-Wasserstein distance metric 
    \begin{equation*}
        \GW(W_1,W_2)\coloneqq \sqrt{\min_\phi \int_0^1 \int_0^1 \Paren{W_1(\phi(x),\phi(y))-W_2(x,y)}^2 dx dy}
    \end{equation*}
    where the minimum is taken over all measure-preserving bijective mappings. 
\end{definition}

Given a graph sampled from $\SSBM(n,\frac{d}{n},\e,k)$, our goal is to output a graphon $\hat{W}$ minimizing $\GW(\hat{W},\Wnull)$.
\cite{klopp2017oracle} obtains the minimax error rate
\begin{equation*}
\GW(\hat{W},\Wnull)\lesssim \sqrt{\frac{d^2}{n^2}\cdot \Paren{\frac{k^2}{nd}+\frac{\log(k)}{d}+\sqrt{\frac{k}{n}}}}\,.
\end{equation*}
However, existing polynomial-time algorithms \cite{Xu2017RatesOC,chen2024graphon} can only achieve error rate
\begin{equation*}
    \GW(\hat{W},\Wnull)\lesssim \sqrt{\frac{d^2}{n^2}\cdot \Paren{\frac{k}{d}+\sqrt{\frac{k}{n}}}}\,.
\end{equation*}
Although there has been lower bound showing that the error term $\frac{d}{n}(k/n)^{1/4}$ is information-theoretically necessary\cite{borgs2015private}, it is not clear whether polynomial time algorithms can achieve better error rate than $\frac{d}{n}\sqrt{k/d}$, especially if the graph is sparse. 
 
In this paper, we give the first rigorous evidence for a computational-statistical gap in learning the graphon function  when the number of blocks $k$ is a sufficiently large constant.
\begin{theorem}[Computational lower bound for learning block graphon function]\label{thm:lb-learning-graphon}
    Let $k,d\in \N^+$ be such that $k\leq O(1), d\leq o(n)$.
    Assume that Conjecture \ref{conj:low-degree} holds with distribution $P$ given by $\SSBM(n,\frac{d}{n},\e,k)$ and distribution $Q$ given by \Erdos-\Renyi graph model $\bbG(n, \frac{d}{n})$. 
    Then no $\exp\Paren{n^{0.99}}$ time algorithm can output a $\poly(n)$-block graphon function $\hat{W}:[0,1]\times [0,1]\to [0,1]$ such that $\GW(\hat{W},\Wnull) \leq \frac{d}{3n}\sqrt{\frac{k}{d}}$  with $1-o(1)$ probability under distribution $P$ and distribution $Q$.
\end{theorem}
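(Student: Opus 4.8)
The plan is to reduce graphon learning to the hypothesis testing problem between $P=\SSBM(n,\tfrac dn,\e,k)$ and $Q=\bbG(n,\tfrac dn)$, and then invoke \cref{conj:low-degree} together with the low-degree likelihood ratio upper bounds of \cite{Hopkins18,bandeira2021spectral}. The key preliminary computation is the Gromov--Wasserstein distance between the graphon $\Wnull$ of $P$ and the constant graphon $W_Q\equiv d/n$ of $Q$. A measure-preserving bijection preserves both $\norm{V}_{L^2}$ and $\norm{V}_{L^1}$ of a graphon $V$ (with respect to the uniform measure on $[0,1]^2$), so the minimization over $\phi$ in the definition of $\GW(\cdot\,,W_Q)$ is vacuous and $\GW(V,W_Q)=\norm{V-\tfrac dn}_{L^2}$ for every graphon $V$. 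Since the average edge density is $d/n$ in both models, $\int_0^1\!\int_0^1 \Wnull = d/n$, and an elementary variance computation gives
\begin{equation*}
  \GW(\Wnull,W_Q)^2 \;=\; \Var(\Wnull) \;=\; \Paren{\tfrac dn}^2\cdot\frac{(k-1)\e^2}{k^2}\mcom
\end{equation*}
i.e.\ $\GW(\Wnull,W_Q)=\tfrac dn\cdot\tfrac{\e\sqrt{k-1}}{k}$. Writing $\tau:=\tfrac d{3n}\sqrt{k/d}$ for the error threshold in the statement, one checks that $W_Q$ lies within $\GW$-distance $\tau$ of $\Wnull$ only when $\e^2 d\lesssim k^2/9$---well below the KS threshold---so in the regime where the theorem is non-vacuous one has $\GW(\Wnull,W_Q)\ge (2+\Omega(1))\tau$, and the two ``ground-truth'' graphons are genuinely far apart.

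Now suppose toward a contradiction that some $\exp(n^{0.99})$-time algorithm $A$, on input $G$, outputs a $\poly(n)$-block graphon $\hat W=A(G)$ that with probability $1-o(1)$ is within $\GW$-distance $\tau$ of the true graphon---of $\Wnull$ when $G\sim P$, and of $W_Q$ when $G\sim Q$. (This is the natural reading of ``recovery up to error $\tau$ under $P$ and $Q$'': the two targets are genuinely distinct in the regime of interest by the previous paragraph, so the trivial ``always output $\Wnull$'' algorithm is excluded.) Consider the test
\begin{equation*}
  \psi(G)\;:=\;\mathbf 1\bigl[\,\bignorm{A(G)-\tfrac dn}_{L^2} > \tfrac32\tau\,\bigr]\ \in\ \set{0,1}\mper
\end{equation*}
Since $\norm{A(G)-\tfrac dn}_{L^2}=\GW(A(G),W_Q)$, the triangle inequality for $\GW$ gives, for $G\sim P$ and with probability $1-o(1)$,
\begin{equation*}
  \GW(A(G),W_Q)\;\ge\;\GW(\Wnull,W_Q)-\GW(A(G),\Wnull)\;>\;(2+\Omega(1))\tau-\tau\;>\;\tfrac32\tau\mcom
\end{equation*}
so $\Pr_{G\sim P}[\psi(G)=1]\ge 1-o(1)$; whereas for $G\sim Q$ we have $\GW(A(G),W_Q)\le\tau<\tfrac32\tau$ with probability $1-o(1)$, so $\Pr_{G\sim Q}[\psi(G)=1]\le o(1)$. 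The function $\psi$ takes values in $\set{0,1}\subseteq[0,1]$ and is computable in time $\exp(n^{0.99})+\poly(n)=\exp(n^{0.99})$ (note we only ever compute the alignment-free $L^2$-distance to a constant graphon, never a general $\GW$ distance). Hence
\begin{equation*}
  \RPQ(\psi)\;=\;\frac{\Pr_{P}[\psi=1]-\Pr_{Q}[\psi=1]}{\sqrt{\Var_{Q}(\psi)}}\;\ge\;\frac{1-o(1)}{\sqrt{o(1)}}\;=\;\omega(1)\mper
\end{equation*}
But for $\e^2 d\le 0.99k^2$ the degree-$n^{0.99}$ low-degree likelihood ratio between $P$ and $Q$ is $O(1)$ by \cite{Hopkins18,bandeira2021spectral}, so \cref{conj:low-degree} with $\delta=0.99$ forces $\RPQ(g)=O(1)$ for every $g$ computable in time $\exp(n^{0.99})$ with values in $[0,1]$; in particular $\RPQ(\psi)=O(1)$, a contradiction. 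This rules out $A$.

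The step I expect to be the main obstacle is the quantitative separation $\GW(\Wnull,W_Q)\ge (2+\Omega(1))\tau$: this is exactly where the precise form of the threshold---a constant factor below the rate $\tfrac dn\sqrt{k/d}$ of the efficient $k$-SVD estimator \cite{Xu2017RatesOC,chen2024graphon}, and, for $k$ a sufficiently large constant, above the minimax rate $\tfrac dn\sqrt{\log k/d}$ of \cite{klopp2017oracle}---together with the admissible range of $(\e,d,k)$ has to be pinned down so that all the constants above line up. A secondary subtlety is that the $\GW$ minimization over measure-preserving bijections cannot shrink the distance of $\Wnull$ to a constant graphon; this follows from the identity in the first paragraph, ultimately because a measure-preserving bijection cannot change the (nonzero) spread of the two-valued function $\Wnull$. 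The remaining ingredients---the triangle inequality for $\GW$, the Le Cam--type observation that an efficiently computable test with $\RPQ=\omega(1)$ contradicts the conjecture, and the imported degree-$n^{0.99}$ low-degree bound of \cite{Hopkins18,bandeira2021spectral} below the KS threshold---are routine.
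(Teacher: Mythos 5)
Your proposal is correct and follows essentially the same route as the paper: compute (or lower-bound) $\GW(\Wnull,W_Q)$, observe it exceeds twice the claimed error rate in the relevant regime just below the KS threshold, use the triangle inequality (valid here since one argument is a constant graphon) to turn the estimator into an efficiently computable test $\psi\in\set{0,1}$ with $\RPQ(\psi)=\omega(1)$, and invoke \cref{conj:low-degree} together with \cref{thm:ldlr-sbm} for the contradiction. You are in fact more careful than the paper at a couple of points: you compute $\GW(\Wnull,W_Q)=\tfrac dn\cdot\tfrac{\e\sqrt{k-1}}{k}$ exactly (the paper's stated bound $\tfrac dn\sqrt{0.99k/d}$ drops the $(k-1)/k$ factor), you verify that a nonempty window of $\e$ makes the theorem non-vacuous while still lying below the KS threshold, and your test has the correct orientation ($\psi=1$ under $P$), whereas the paper's displayed $f$ has the $P/Q$ labels swapped.
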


In comparison, \cite{luo2023computational} do not provide any lower bound for learning the graphon function since their hard instance is a symmetric SBM with fixed distribution parameters $\e,d,k$.

\section{Preliminaries}

\subsection{Low-degree framework}
\paragraph{Low-degree likelihood ratio lower bound in the SBM} The low-degree likelihood ratio lower bound is a standard framework to provide evidence of hardness for hypothesis testing problems. 
\cite{Hopkins18, bandeira2021spectral} prove the following theorem\footnote{Although the original theorem statement is for constant $k,d$, it is easy to see in their analysis that the lower bound holds when $d=o(n)$. Also a weaker version of the theorem is stated in Thm. 8.6.1 of \cite{Hopkins18}.} on the low-degree lower bound for the stochastic block model:

\begin{theorem}[Low-degree lower bound for SBM, Thm. 2.20 in \cite{bandeira2021spectral}]\torestate{\label{thm:ldlr-sbm}
Let $d=o(n),k=n^{o(1)}$ and $\e\in [0,1]$. 
Let $\mu:\{0,1\}^{n \times n} \rightarrow \mathbb{R}$ be the relative density of $\SSBM(n, d, \varepsilon, k)$ with respect to $G\left(n, \frac{d}{n}\right)$. 
Let $\mu^{\leqslant \ell}$ be the projection of $\mu$ to the degree-$\ell$
polynomials with respect to the norm induced by $G\left(n, \frac{d}{n}\right)$ For any constant $\delta>0$,
\begin{align*}
\left\|\mu^{\leqslant \ell}\right\| \text{ is } \begin{cases}
    \geqslant n^{\Omega(1)}, & \quad \text{if } \varepsilon^2 d > (1+\delta) k^2, \quad \ell \geqslant O(\log n) \\[8pt]
    \leqslant O_{\delta} \left(\exp(k^2)\right), & \quad \text{if } \varepsilon^2 d < (1-\delta) k^2, \quad \ell < n^{0.99}
\end{cases}
\end{align*}
}
\end{theorem}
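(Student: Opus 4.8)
The plan is to follow the standard Fourier-analytic route for low-degree likelihood-ratio bounds: expand $\mu$ in the orthonormal basis of the \Erdos-\Renyi measure, recognize each Fourier coefficient as a $k$-state pairwise partition function indexed by a simple graph $\alpha$, and then read off the lower bound from the contribution of cycles and the upper bound from an exponential (cluster-expansion) estimate. Work in $L^2(Q)$ with $Q=\bbG(n,\tfrac dn)$, under which the edge indicators $Y_{ij}$ are i.i.d.\ $\mathrm{Bernoulli}(\tfrac dn)$; set $\chi_{ij}:=(Y_{ij}-\tfrac dn)/\sqrt{\tfrac dn(1-\tfrac dn)}$ and $\chi_\alpha:=\prod_{e\in\alpha}\chi_e$ for $\alpha\subseteq\binom{[n]}{2}$, an orthonormal basis. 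Writing the edge probability between communities $a,b$ as $p_{ab}=\tfrac dn(1+\e M^{\circ}_{ab})$ with $M^{\circ}_{ab}=\mathbf 1_{a=b}-\tfrac1k$ (which reproduces the two cases of \cref{def:ssbm}), checking $Y_{ij}\in\{0,1\}$ gives, for every community assignment $x$, that $\tfrac{P(Y_{ij}\mid x)}{Q(Y_{ij})}=1+\rho\,M^{\circ}_{x_ix_j}\,\chi_{ij}$ with $\rho:=\e\sqrt{\tfrac{d/n}{1-d/n}}$. Since edges are independent given $x$ under $P$, multiplying over $e$ and taking $\E_x$ yields $\mu=\sum_\alpha \rho^{|\alpha|}c_\alpha\,\chi_\alpha$, where $c_\alpha:=\E_x\prod_{(i,j)\in\alpha}M^{\circ}_{x_ix_j}$, and hence
\[
\|\mu^{\leqslant\ell}\|^2=\sum_{\alpha\,:\,|\alpha|\leqslant\ell}\rho^{2|\alpha|}c_\alpha^2\,.
\]
Since $M^{\circ}_{x_ix_j}=B_{x_ix_j}$ with $B:=\Id_k-\tfrac1k J_k$ an orthogonal projection of rank $k-1$, $c_\alpha=k^{-|V(\alpha)|}\sum_{x:V(\alpha)\to[k]}\prod_{(i,j)\in\alpha}B_{x_ix_j}$ is a $k$-state pairwise partition function on $\alpha$.

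Three facts about $c_\alpha$ drive everything. (i) Because $B\mathbf 1=0$, any degree-$1$ vertex of $\alpha$ forces $c_\alpha=0$; so only graphs of minimum degree $\geqslant2$ contribute, and such a graph with $m$ edges has $v\leqslant m$ vertices, with equality iff it is a disjoint union of cycles. (ii) $c_\alpha$ is multiplicative over the connected components of $\alpha$ on disjoint vertex sets. (iii) For a cycle, $c_{C_m}=k^{-m}\Tr(B^m)=(k-1)/k^{m}$ (using $B^2=B$), while for connected $\alpha$ with $v$ vertices and $m$ edges, expanding $B_{ab}=\mathbf 1_{a=b}-\tfrac1k$ gives $c_\alpha=k^{-v}\sum_{S\subseteq E(\alpha)}(-1/k)^{m-|S|}k^{\mathrm{comp}(S)}$, where $\mathrm{comp}(S)$ is the number of connected components of $(V(\alpha),S)$, whose dominant term is of order $k^{1-v}$. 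For the \textbf{lower bound} (assuming $k\geqslant 2$; $k=1$ is trivial), keep only cycle terms and use $\tfrac{n!}{(n-m)!}\geqslant(1-o(1))n^m$ and $\rho^{2m}\geqslant(\e^2d/n)^m$ for $m\leqslant\ell=O(\log n)$:
\[
\|\mu^{\leqslant\ell}\|^2\ \geqslant\ (1-o(1))(k-1)^2\sum_{m=3}^{\ell}\frac{1}{2m}\Paren{\frac{\e^2d}{k^2}}^m\ \geqslant\ (1-o(1))\,\frac{(k-1)^2}{2\ell}\Paren{\frac{\e^2d}{k^2}}^{\ell},
\]
which is $n^{\Omega(1)}$ once $\e^2d>(1+\delta)k^2$ and $\ell\geqslant C_\delta\log n$ with $C_\delta$ large enough.

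For the \textbf{upper bound}, decompose each $\alpha$ into its connected components (inadmissible ones simply contribute $c_\alpha=0$); an unordered set of $t$ pairwise vertex-disjoint connected graphs corresponds to $t!$ ordered tuples, so $\|\mu^{\leqslant\ell}\|^2=\sum_{t\geqslant0}\tfrac1{t!}\sum^{\text{vtx-disjoint}}_{(\beta_1,\dots,\beta_t)}\prod_i\rho^{2|\beta_i|}c_{\beta_i}^2$, and dropping the disjointness constraint (all terms are nonnegative) gives, by the exponential formula,
\[
\|\mu^{\leqslant\ell}\|^2\ \leqslant\ \exp\Bigl(\sum_{\beta\text{ connected},\ |\beta|\leqslant\ell}\rho^{2|\beta|}c_\beta^2\Bigr)\,.
\]
It remains to show the exponent is $O_{\delta,\e,d}(k^2)$. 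The cycle part is $(1+o(1))(k-1)^2\cdot\tfrac12\sum_{m\geqslant3}\tfrac1m\bigl(\tfrac{\e^2d}{k^2}e^{2d/n}\bigr)^m=O_{\delta,\e,d}(k^2)$, since $\tfrac{\e^2d}{k^2}\leqslant1-\delta$ and $e^{2d/n}=1+o(1)$ (here $d=o(n)$), so the base is $<1$. For the non-cycle part one sums, over $(v,m)$ with cyclomatic number $m-v+1\geqslant2$, the product of (a bound on the number of connected minimum-degree-$\geqslant2$ shapes on $v$ labeled vertices with $m$ edges) $\times\,n^v\times\rho^{2m}\times c_\beta^2$; here $v\leqslant m-1$, so the factor $n^{v-m}\leqslant n^{-1}$ absorbs the vertex-count blow-up, and with $|c_\beta|\lesssim k^{1-v}$ from (iii) this is a convergent series of lower order. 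Hence $\|\mu^{\leqslant\ell}\|\leqslant\exp\bigl(O_{\delta,\e,d}(k^2)\bigr)$, matching the claimed $O_{\delta,\e,d}(\exp(k^2))$. (The constraint $\ell<n^{0.99}$ is not binding for this direction: terms decay geometrically in $m$ once $\e^2d/k^2<1$ and $\tfrac{n!}{(n-m)!}\leqslant n^m$ holds for all $m$, so one may take $\ell=\infty$.)

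\textbf{Main obstacle.} The delicate step is fact (iii) combined with the non-cycle part of the upper bound: one must bound $|c_\beta|$ and the number of admissible shapes sharply enough that the series converges exactly for $\e^2d<(1-\delta)k^2$, not merely for $\e^2d$ below some fixed small multiple of $k^2$. Concretely, the cancellations in the inclusion--exclusion formula for $c_\beta$ must be exploited — the coefficient of the leading power $k^{1-v}$ is governed by the bridge/cycle structure of $\beta$ and is far smaller than the trivial bound $2^{|\beta|}$ — so that the non-cyclic contributions are genuinely dominated by the clean cycle computation. This bookkeeping is the technical heart of the argument and is what is carried out in \cite{Hopkins18,bandeira2021spectral}.
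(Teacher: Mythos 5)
Your proposal takes a genuinely different route from the paper's. The paper's proof (\cref{sec:beyond-constant-ldlr}) does not reprove the bound from scratch: it cites the proof of Theorem~2.20 of \cite{bandeira2021spectral}, which proceeds via local central limit theorems and a $\delta$-net over the unit sphere in $\R^k$, and simply observes that tracking the net cardinality $\exp(O_\delta(k))$ through Propositions~6.4--6.5 and the ``small deviation'' term $L_1$ of that argument produces the explicit $k$-dependence $\exp(O(k^2))$ and the stated degree range $\ell<n^{0.99}$. You instead carry out the direct Fourier/graph-moment expansion in the style of Thm.~8.6.1 in \cite{Hopkins18}: orthonormal edge characters $\chi_\alpha$, partition-function coefficients $c_\alpha$ of the $k$-state pairwise model with coupling $B=\Id_k-\tfrac1k J_k$, the degree-one-vertex vanishing, multiplicativity over components, and an exponential-formula (cluster) bound. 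Your route is self-contained, makes the role of cycles transparent, and your lower-bound direction is complete and correct as written (counting $m$-cycles and summing the geometric series is all that is needed). The paper's route, by contrast, is exactly the machinery that controls the non-cycle contributions and is where the degree restriction comes out.

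There is still a real gap on the upper-bound side, beyond the one you flag. What you call the ``main obstacle'' is the whole theorem: the cycle part of the cluster exponent cleanly gives $O(k^2)$, but for the non-cycle part you assert that $n^{v-m}\leq n^{-1}$ ``absorbs the vertex-count blow-up'' and that $|c_\beta|\lesssim k^{1-v}$ makes the remaining series a lower-order correction. For this to give the \emph{sharp} condition $\e^2 d<(1-\delta)k^2$ (not merely $\e^2 d$ below some small constant times $k^2$), the count of connected min-degree-$\geq 2$ shapes with excess $j=m-v\geq 1$, the hidden constant in $|c_\beta|\lesssim k^{1-v}$, and the geometric base $\e^2 d/k^2$ must balance exactly; you are honest that this requires exploiting the inclusion--exclusion cancellations in $c_\beta$, but that is the technical content of \cite{Hopkins18,bandeira2021spectral}, not a routine check. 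I would also push back on your parenthetical that ``one may take $\ell=\infty$'': the untruncated $\|\mu\|^2=1+\chi^2(P\|Q)$ is the classical second-moment quantity, and for growing $k$ the second-moment method is known to fail to reach the KS threshold (the information-theoretic threshold sits strictly below KS already for moderate $k$), so the restriction $\ell<n^{0.99}$ in the theorem is plausibly load-bearing and should surface precisely in the non-cycle estimate whose $m$-dependence you have not tracked.
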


 Assuming the low-degree conjectures in \cite{Hopkins18,kunisky2019notes}, this gives rigorous hardness evidence for distinguishing
     \begin{itemize}
        \item planted distribution $P$: symmetric $k$-stochastic block model $\SSBM(n,\frac{d}{n},\e,k)$,
        \item null distribution $Q$: \Erdos-\Renyi random graph $\bbG(n,\frac{d}{n})$,
    \end{itemize}
 with probability $1-o(1)$ when $k$ is a universal constant and $\epsilon^2 d\leq 0.99k^2$.
 However, even assuming the low-degree conjectures for hypothesis testing from \cite{Hopkins18,kunisky2019notes}, these works do not rule out polynomial-time weak recovery algorithms under our definition (i.e., algorithms that achieve constant correlation with constant probability).
 The

\paragraph{Extended low-degree hypothesis}
To show our lower bounds in the regime where $k$ is polylogarithmic, \cref{conj:low-degree} is not sufficient. Instead, we rely on the following stronger low-degree hypothesis from \cite{moitra2023precise}.
\begin{conjecture}[Extended low-degree conjecture]\label[conjecture]{conj:eldlr}
Let $P$ be a distribution from the $k$-stochastic block model and $Q$ be a distribution of \Erdos-\Renyi random graphs.
Consider the hypothesis testing problem between $Y\sim P$ and $Y\sim Q$ for distribution $P$ and $Q$.
Let $\RPQ(f)\coloneqq \frac{\E_{Y\sim P} f(Y)}{\sqrt{\E_{Y\sim Q}(f(Y))^2}}$. Let $\delta \in [0, 1]$.
For any function~$f(\cdot)$ computable in time $\exp(n^{\delta})$ taking values in $[0,1]$ and satisfying $\E_P f(Y)\geq \Omega(1)$, we have 
\begin{equation*}
        \RPQ(f)\lesssim \max_{\text{deg}(f)\leq n^{\delta}} \RPQ(f)\,.
\end{equation*}
\end{conjecture}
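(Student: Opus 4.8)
The statement in question is \Cref{conj:eldlr}, the extended low-degree conjecture. Being a conjecture, it admits no self-contained unconditional proof with current techniques — a genuine proof would entail sub-exponential-time algorithm lower bounds of a kind nobody knows how to establish. The best I can do is a plan that reduces \Cref{conj:eldlr} to one clean-looking but genuinely hard statement, together with an honest account of why that statement is the obstruction.

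First I would set up the only calculation that is routine. Write $L \coloneqq dP/dQ$ for the relative density (the $\mu$ of \Cref{thm:ldlr-sbm}) and split $L - 1 = (L-1)^{\le n^\delta} + (L-1)^{> n^\delta}$ orthogonally in $L^2(Q)$ into its degree-$\le n^\delta$ and degree-$> n^\delta$ parts. For any measurable $f$ taking values in $[0,1]$, using $\bigiprod{1,\, L-1}_Q = 0$, orthogonality of the decomposition, and Cauchy--Schwarz twice,
\[
  \RPQ(f) \;=\; \frac{\bigiprod{f - \E_Q f,\; L - 1}_Q}{\sqrt{\Var_Q f}}
  \;\le\; \underbrace{\bignorm{(L-1)^{\le n^\delta}}_Q}_{=\;\max_{\deg g \le n^\delta}\RPQ(g)}
  \;+\; \frac{\bigiprod{f^{>n^\delta},\; (L-1)^{>n^\delta}}_Q}{\sqrt{\Var_Q f}}\,.
\]
The first summand is exactly the right-hand side of \Cref{conj:eldlr}: the maximum of $\RPQ$ over polynomials of degree at most $n^\delta$ equals the low-degree likelihood-ratio norm. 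Hence the entire content of the conjecture is that the second summand — the self-normalized correlation of $f$ with the \emph{high-degree tail} of the likelihood ratio — is at most $O\big(\max_{\deg g\le n^\delta}\RPQ(g)\big)$ for every $[0,1]$-valued $f$ computable in time $\exp(n^\delta)$. Two points are worth underlining. The restriction to \emph{efficiently computable} $f$ is essential: the Neyman--Pearson test $\mathbf{1}[L > t]$ is $[0,1]$-valued, is neither efficient nor low-degree, and can have $\RPQ$ growing with the full norm $\bignorm{L-1}_Q$, which may vastly exceed $\bignorm{(L-1)^{\le n^\delta}}_Q$ once the high-degree tail dominates. Moreover, the crude estimate $\RPQ(f) \le \bignorm{L-1}_Q$ (valid for all $f$) is useless below the Kesten-Stigum threshold, since \Cref{thm:ldlr-sbm} controls only the low-degree norm and the full likelihood-ratio norm can be large or infinite.

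The plan for the remaining step is a win--win (reduction) argument. Suppose for contradiction that some $[0,1]$-valued $f$ computable in time $\exp(n^\delta)$ satisfies $\RPQ(f) \gg \max_{\deg g\le n^\delta}\RPQ(g)$; by the display above, $f$ must correlate substantially with $(L-1)^{>n^\delta}$. For the SBM this tail is an explicit, well-understood combination of high-order self-avoiding / non-backtracking walk statistics — precisely the objects driving \Cref{thm:ldlr-sbm} and the analyses of \cite{Hopkins18,bandeira2021spectral}. From such a correlation one would try, using only $\poly(n)$ evaluations of $f$ and simple post-processing (thresholding, the graph-splitting and cross-validation steps used elsewhere in this paper, or the correlation-preserving projection of \cite{Hopkins17}), to construct an efficient algorithm that either distinguishes $P$ from $Q$, or weakly recovers the communities, below the Kesten-Stigum threshold — or, alternatively, to port $f$ into an efficient solver for a canonical hard average-case problem against which unconditional statistical-query or sum-of-squares lower bounds are already known. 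Each outcome contradicts a hardness statement; the catch, of course, is that to make the contradiction \emph{unconditional} one would need that hardness to be a theorem, which is exactly what is absent.

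The main obstacle is therefore the one that makes the low-degree method a heuristic rather than a theorem: there is no known technique for limiting the power of arbitrary $\exp(n^\delta)$-time computation on a hypothesis-testing task, and in particular no way to certify that an efficient, bounded statistic cannot align with the high-degree part of a likelihood ratio. A realistic program is (i) to verify \Cref{conj:eldlr} against structured algorithm classes one at a time — spectral methods, local and message-passing algorithms, statistical queries, and low-degree polynomials (the last being tautological here) — thereby accumulating evidence, and (ii) to look for a proof conditional on a cleaner complexity-theoretic hypothesis. In the present paper we accordingly adopt \Cref{conj:eldlr} as a hypothesis, and our contribution is to extract its consequences for recovery and parameter learning in the stochastic block model.
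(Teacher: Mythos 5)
You are right not to attempt a proof here: \cref{conj:eldlr} is a conjecture imported from \cite{moitra2023precise} and adopted by the paper purely as a working hypothesis, so there is no proof in the paper to compare against, and any purported unconditional proof would have been a red flag. Your routine step is also accurate — the orthogonal split of $L-1$ in $L^2(Q)$, the identity $\max_{\deg g\le n^\delta}\RPQ(g)=\bignorm{(L-1)^{\le n^\delta}}_Q$, and the conclusion that the conjecture's entire content is bounding the correlation of efficiently computable, $[0,1]$-valued statistics with the high-degree tail — which matches how the conjecture is understood and used in the paper (it is invoked as an assumption in \cref{thm:main-theorem-super-constant-blocks} and \cref{thm:lb-edge-probability}, never derived).
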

When $\max_{\text{deg}(f)\leq n^{\delta}} \RPQ(f)\leq O(1)$, the conjecture is reduced to \cref{conj:low-degree}.
The extended low-degree hypothesis is closely related to the low-degree lower bound for random optimization problems (see \cite{gamarnik2020hardness}).

\subsection{Organization}
The rest of the paper is organized as follows.
We present our main proof ideas in \cref{sec:techniques}. 
In \cref{sec:lb-weak-recovery}, we give the formal proof of our computational lower bounds for recovery algorithms conditional on the low-degree conjectures (i.e., the proof of \cref{thm:main-theorem-weak-recovery}).
In \cref{sec:lb-learning}, we give proofs of our computational lower bounds for parameter learning algorithms conditional on the low-degree conjectures (i.e., the proof of \cref{thm:lb-edge-probability} and  \cref{thm:lb-learning-graphon}).
In \cref{sec:prob-fact}, we introduce some facts from probability theory used in our paper. 
In \cref{sec:algo-results}, we introduce some existing algorithms from the literature that are used in our paper. 
In \cref{sec:beyond-constant-ldlr}, we clarify the upper bound on the low-degree likelihood ratio when the number of blocks $k$ diverges, which is implicitly obtained in \cite{bandeira2021spectral}.
\section{Techniques}\label{sec:techniques}
\subsection{Lower bounds for weak recovery}
In this section, we give an overview of the techniques we use to prove lower bounds for weak recovery in the  stochastic block model with constant number of blocks $k$ and constant average degree $d$. That is, an overview of our proof of (a special case of)~\cref{thm:main-theorem-weak-recovery}.

Suppose for a contradiction that we have a polynomial-time recovery algorithm for $\SSBM(n,\frac{d}{n},\e,k)$, $\e^2d \leq 0.99k^2$, that achieves recovery rate $\delta \geq \Omega(n^{-0.49})$, in the sense of~\Cref{def:weak-recovery}. 
Using this algorithm, we will construct a function~$f(\cdot)$, which can be evaluated in polynomial time, such that $R_{P,Q}(f)\geq n^{\Omega(1)}$. Here, $R_{P,Q}(\cdot)$ is the parameter~\eqref{eq:Rlamba} for the distributions $P = \SSBM(n,\frac{d}{n},\e,k)$ and $Q = \bbG(d, n)$. Assuming the low-degree conjecture (Conjecture ~\ref{conj:low-degree}) for this $P$ and $Q$, this implies that there exists a low-degree polynomial $f'$ with $R_{P,Q}(f') \geq \omega(1)$. But, since $\e^2d \leq 0.99k^2$ is below the KS threshold, this leads to a contradiction
with the low-degree lower bound of~\cite{Hopkins18} (\cref{thm:ldlr-sbm}).

In the remainder of this section, we show how to construct the function $f(\cdot)$, and give a sketch of the proof that $R_{{P,Q}}(f)\geq n^{\Omega(1)}$.

\paragraph{Regularization via correlation-preserving projection.}
We begin with the following tool, which allows us to regularize the estimators of the community membership matrix~$M^\circ$ provided by the recovery algorithm.
Suppose that~$\hat{M}_0$ is a matrix achieving correlation $\delta$ with $M^\circ$, i.e., satisfying $\iprod{\hat{M}_0,M^{\circ}}\geq \delta\normf{\hat{M}_0} \normf{M^{\circ}}$.
We show that $\hat{M}_0$ can be projected into a (small) convex set $\cK \subseteq \R^{n \times n}$ containing $M^\circ$, while preserving the correlation (up to a constant). Concretely, the convex set $\cK$ here is given by
\begin{equation} \label{eq:K}
    \cK := \Set{M\in [-1/\delta,1/\delta]^{n\times n}: M+\frac{1}{k\delta} \Ind \Ind^{\top} \succeq 0 \,, \Tr(M + \frac{1}{k\delta} \Ind \Ind^{\top}) \leq n/\delta}.
\end{equation}
In particular, elements of $\cK$ have bounded entries and bounded nuclear norm, which will be crucial in later steps of the proof where we apply a Bernstein inequality.
To achieve this, we make use of the \emph{correlation preserving projection} from \cite{Hopkins17} (see \cref{thm:correlation-preserving-projection}), which projects $\hat M_0$ onto a matrix $\hat M \in \cK$ satisfying
\[
    \iprod{\hat{M}, M^\circ} \geq \Omega(1) \cdot \delta\normf{\hat{M}} \normf{M^{\circ}},
\]
In addition, \cref{thm:correlation-preserving-projection} promises that $\normf{\hat{M}} =  \Theta(\normf{M^\circ})$. Thus, we find that
\begin{equation} \label{eq:largecor}
    \iprod{\hat{M}, M^\circ} \geq \Omega(1) \cdot \delta \normf{\hat{M}} \normf{M^\circ} \geq \Omega(1) \cdot \delta \normf{M^\circ}^2 \geq \delta \cdot \Omega(n^2).
\end{equation}
Importantly, the correlation preserving projection can be implemented in polynomial time via semidefinite programming. See \cref{lem:corr-preserve-proj} for details.

\paragraph{Testing statistics via cross validation.} 
The basic idea is to construct $f(\cdot)$ via cross validation.
Given a random graph $G$, we construct a subgraph $G_1$ with the same vertex set by subsampling each edge in $G$ independently with probability $1-\eta$, where $\eta > 0$ is a small constant. Note that if $G$ is drawn from $\SSBM(n,\frac{d}{n},\e,k)$, then $G_1$ is distributed according to $\SSBM(n,(1-\eta)\frac{d}{n},\e,k)$. If $G$ is drawn from $\bbG(n, \frac{d}{n})$, then $G_1$ is distributed according to~$\bbG(n, (1-\eta)\frac{d}{n})$. We run the polynomial-time recovery algorithm on $G_1$ to obtain an estimate $\hat{M}\in \R^{n\times n}$ of the community membership matrix $M^\circ$, which we regularize using the correlation preserving projection discussed above. Let $Y_2$ denote the adjacency matrix of $G_2 := G\setminus E(G_1)$. 
Our function $f(\cdot)$ is then defined as
\begin{equation}
    f(Y) := \begin{cases}
        1, \quad & \text{if } \iprod{\hat{M},Y_2-\frac{\eta d}{n} \one \one^{\top}}\geq n^{0.51}, \\
        0, & \text{otherwise.}
    \end{cases}
\end{equation}
See \cref{alg:weak-recovery-function} for an overview of the construction of $f(\cdot)$.
It remains to show that~${R_{P,Q}(f)\geq n^{\Omega(1)}}$. For this, we establish a \emph{lower bound} on the expectation $\E_{Y \sim P} f(Y)$ of $f$ under graphs drawn from the SBM and an \emph{upper bound} on the expectation $\E_{Y \sim Q} f(Y)$ of~$f$ under \Erdos-\Renyi random graphs.

\begin{algorithmbox}[Test function $f(\cdot)$ used in the proof of \cref{thm:main-theorem-weak-recovery}]
    \label{alg:weak-recovery-function}
    \mbox{}\\
    \textbf{Input:} A graph $G$ with $n$ vertices, given by its adjacency matrix $Y$. \\
    \textbf{Output:} Test function $f(Y) \in \{0, 1\}$.\\
    \textbf{Algorithm:} 
    \begin{enumerate}[1.]
        \item Obtain a subgraph $G_1$ of $G$ by subsampling each edge with probability $1-\eta$. 
        \item Obtain an estimator $\hat{M}_0$ by running a recovery algorithm on the graph $G_1$.
        \item Obtain $\hat{M}$ by projecting $\hat{M}_0$ onto the set $\cK$ defined in~\eqref{eq:K} using the correlation preserving projection.
        \item Return $f(Y) = \one \{ \iprod{\hat{M},Y_2-\frac{\eta d}{n} \one \one^{\top}} \geq n^{0.51}\}$, where $Y_2$ is the adjacency matrix of $G \setminus E(G_1)$.
    \end{enumerate}
\end{algorithmbox}

\paragraph{Lower bound on the expectation under the SBM.}
First, we give a lower bound on the expectation of $f$ under the distribution $\SSBM(n,\frac{d}{n},\e,k)$, i.e, on
\[
    \E_{Y \sim P} f(Y) = \Pr_{Y \sim P} \Brac{\iprod{\hat{M},Y_2-\frac{\eta d}{n}\one \one^{\top}} \geq n^{0.51}}.
\]
To do so, note that we may decompose $Y_2-\frac{\eta d}{n} \one \one^{\top} = \frac{\epsilon \eta d}{n}M^\circ+W_2$, where $W_2$ is a random matrix whose entries are independent with mean zero. Then, we have 
\begin{equation*}
    \iprod{Y_2-\frac{\eta d}{n} \one \one^{\top},\hat{M}} = \frac{\epsilon \eta d}{n}\iprod{M^\circ, \hat{M}} +  \iprod{W_2,\hat{M}}\,.
\end{equation*}
The first term on the RHS above is large with constant probability by~\eqref{eq:largecor}. We would like to apply a Bernstein inequality to the second term, but the matrices $W_2$ and $\hat{M}$ are not independent. However, as we show in~\cref{lem:decoupling}, they are \emph{approximately independent} in the sense that there exists a
  zero-mean symmetric matrix~$\tilde{W}_2$ with independent entries, independent of $\hat{M}$, so that each entry in $\tilde{W}_2-W_2$ has variance bounded by $O(d^2/n^2)$.

For ease of presentation, we ignore the difference between $\tilde{W}_2$ and $W_2$ for now, and assume that $W_2$ and $\hat{M}$ are independent. 
In this case, we note that $\iprod{{W}_2, \hat{M}}$ can be written as the summation of independent zero-mean random variables, namely
\begin{equation*}
    \iprod{{W}_2, \hat{M}}= \sum_{i,j} {W}_2(i,j)\hat{M}(i,j)\,,
\end{equation*}
where ${W}_2(i,j)\hat{M}(i,j) \leq O(1/\delta)$ for each $i,j\in [n]$. (Here, we have used that $\hat{M} \in \cK$).
Moreover, since $\normf{\hat{M}}^2 =  \Theta(\normf{M^\circ}^2) = \Theta(n^2)$, we have
\begin{equation*}
    \sum_{i,j} \hat{M}(i,j)^2 \E\Brac{{W}_2(i,j)^2}\lesssim \frac{d}{n}\sum_{i,j} \hat{M}(i,j)^2 \leq O(n^2 \cdot \frac{d}{n})=O(nd)\,.
\end{equation*}
By the Bernstein inequality, and using the fact that $\delta \geq \Omega(n^{-0.49})$, we then have
\begin{equation*}
    \Pr\Brac{\sum_{i,j} {W}_2(i,j)\hat{M}(i,j)\geq n^{0.501}}\leq \exp\Paren{-n^{0.001}}\,.
\end{equation*}
As result, when $d \leq O(1) , k \leq O(1),\eta=\Theta(1)$, with constant probability, we have
\begin{equation*}
    \iprod{Y_2-\frac{\eta d}{n},\hat{M}}\geq \frac{\eta\epsilon d}{n}\iprod{M^\circ, \hat{M}}-n^{0.501} \gtrsim \delta n -n^{0.501} \gtrsim n^{0.51} - n^{0.501} \geq \Omega(n^{0.51})\,.
\end{equation*}
Therefore, we have $\E_{Y \sim P} f(Y)\geq \Omega(1)$. 

\paragraph{Upper bound under the null distribution.}

Next, we give an upper bound on the expectation of $f$ under the \Erdos-\Renyi distribution $\bbG(n,\frac{d}{n})$. Our proof shares many ingredients with the proof of the lower bound in the previous section. We show that with high probability under the distribution $\bbG(d, n)$, we have 
\begin{equation*}
    g(Y) := \Abs{\iprod{Y_2-\frac{\eta d}{n}\one \one^{\top}, \hat{M}}} = o(n^{0.51})\,.
\end{equation*}
To do so, we again apply the argument that $Y_2-\frac{\eta d}{n}\one \one^{\top}$ and $\hat{M}$ are approximately independent. 
In particular, let $W_2=Y_2-\frac{\eta d}{n}\one \one^{\top}$, for some i.i.d. zero-mean symmetric matrix $\tilde{W}_2$, independent of~$\hat{M}$, so that each entry in $\tilde{W}_2-W_2$ has variance bounded by $d^2/n^2$. 
By the triangle inequality, we have
\begin{equation*}
   \Abs{\iprod{Y_2-\frac{\eta d}{n}\one \one^{\top},\hat{M}} }\leq \Abs{\iprod{W_2-\tilde{W}_2,\hat{M}}}+\Abs{\iprod{\tilde{W}_2,\hat{M}}}\,. 
\end{equation*}

For simplicity, we again ignore the difference between $\tilde{W}_2$ and $W_2$ here.
By the same reasoning as above, and again relying on the properties of $\hat {M}$ guaranteed by the correlation preserving projection, we have the Bernstein inequality
\begin{equation*}    \Pr\Brac{\Abs{\iprod{\tilde{W}_2,\hat{M}}}\geq n^{0.501}}\leq \exp\Paren{-n^{0.01}}\,.
\end{equation*}
As result, when $d \le O(1)$ and $k \le O(1)$,  we have $g(Y)\leq o(n^{0.501})$  with probability at least $1-\exp(-n^{0.01})$ and thus $\E_{Y \sim Q} f(Y)\leq \exp(-n^{0.01})$. 

\paragraph{Finishing the proof}
Using the lower and upper bound established above, and the fact that $f(\cdot) \in \{0, 1\}$, we get that
\[
    \frac{\E_{Y\sim P} f(Y)-\E_{Y\sim Q} f(Y)}{\sqrt{\text{Var}_{Y\sim Q}(f(Y))}} \geq \frac{\Omega(1)}{\exp(-n^{0.01})} \geq \exp(n^{0.01})\geq \omega(1).
\]

\subsection{Lower bound for learning the stochastic block model} 
In this section, we give an overview of the techniques used to prove our results on learning the stochastic block model, stated in~\cref{sec:learning-result}. 

\paragraph{Lower bound for learning edge connection probability matrix} We sketch the proof of \cref{thm:lb-edge-probability}. 
We show that if an $O(\exp\Paren{n^{0.99}})$-time algorithm can learn the edge connection probability matrix $\thetanull$ such that with constant probability, the error rate $\normf{\hat{\theta}-\thetanull}^2 \leq 0.99kd$, then an algorithm with running time $\exp\Paren{n^{0.99}}$ can achieve weak recovery when $\e^2 d\geq 0.99k^2$. 
The key observation is that, for the symmetric stochastic block model, the edge connection probability matrix is given by $\theta^\circ=\frac{(1-\eta)\epsilon d}{n} M^\circ+\frac{(1-\eta)d}{n}$, where $M^\circ\in \Set{1-1/k,-1/k}^{n\times n}$ is the community membership matrix.
Therefore, when the estimation error is smaller than $0.99\sqrt{kd}$, the estimator $\hat{\theta}-\frac{d}{n}$ achieves weak recovery under the distribution $\SSBM(n,\frac{d}{n},\e,k)$, which contradicts the extended low-degree conjecture (\cref{conj:eldlr}).

\paragraph{Lower bound for learning graphon function} 
We sketch the proof of \cref{thm:lb-learning-graphon}.
Let $W_0$ be the graphon function  underlying the distribution $\bbG(n,\frac{d}{n})$ and $W_1$ be the graphon function underlying the distribution $\SSBM(n,\frac{d}{n},\e,k)$. We then have $\GW(W_0,W_1)\geq \frac{d}{n}\sqrt{\frac{0.99k}{d}}$ when $\e^2 d\geq 0.99k^2$.

Now suppose there is a polynomial-time algorithm which, given a random graph $G$ sampled from an arbitrary symmetric $k$-stochastic block model, outputs an $n$-block graphon function $\hat{W}:[0,1]\times [0,1]\to [0,1]$ achieving error $\frac{d}{3n}\sqrt{\frac{k}{d}}$ 
 with probability $1-o(1)$.
Then one can construct a testing statistic by taking
\begin{equation*}
f(Y) =
\begin{cases}
    1, & \text{if } \GW(\hat{W}, W_0) \leq \frac{3d}{n} \sqrt{\frac{k}{d}}, \\
    0, & \text{otherwise.}
\end{cases}
\end{equation*}
We have $f(Y)=1$ with probability $1-o(1)$ under the distribution $\SSBM(n,\frac{d}{n},\e,k)$ and $f(Y)=0$ with probability $1-o(1)$ under the distribution $\bbG(n,\frac{d}{n})$. 
Therefore, we have $\RPQ(f)\geq \omega(1)$.
Since the function $f(\cdot)$ can be evaluated in polynomial time, this contradicts the low-degree lower bound (\cref{thm:ldlr-sbm}), assuming \cref{conj:low-degree}.

\section{Conclusion and future directions}
Based on low-degree heuristics, our paper gives rigorous evidence for a computational phase transition of recovery at the Kesten-Stigum threshold. 
We view our work as a first step in studying this phenomenon, leaving open many interesting questions:
\begin{itemize}
    \item Below the Kesten-Stigum threshold, suppose we are given an initalization achieving recovery rate $n^{-0.49}$, could we boost the accuracy in polynomial time to get a weak recovery algorithm?
    
    \item Can we show a computational-statistical gap for learning the graphon function when the number of blocks satisfies $k\leq \sqrt{n}$ (as in \cite{luo2023computational})?
\end{itemize}

\section{Acknowledgement}
The authors are grateful to  Afonso S. Bandeira, Anastasia Kireeva, Alexander S. Wein, Samuel B. Hopkins, Stefan Tiegel and Tim Kunisky for helpful discussions.

\phantomsection
\addcontentsline{toc}{section}{References}
\bibliographystyle{amsalpha}
\bibliography{bib/scholar}

\appendix

\crefalias{section}{appendix} %
\section{Computational lower bound for recovery}\label{sec:lb-weak-recovery}
In this section, we prove \cref{thm:main-theorem-weak-recovery} by showing that there exists an efficient algorithm that reduces testing to weak recovery in SBM. We will show that there exists a efficiently computable testing function (shown in \cref{alg:reduction-test-recovery}) that is large with constant probability if the input is sampled from $\SSBM(n,\frac{d}{n},\e,k)$ and is small with high probability if the input is sampled from $\bbG(n, d/n)$. This will lead to a contradiction with low-degree lower bounds of testing if we assume Conjecture \ref{conj:low-degree}.

Before describing the algorithm, we restate \cref{thm:main-theorem-weak-recovery} here for completeness.

\begin{theorem}[Full version of \cref{thm:main-theorem-weak-recovery}]\label{thm:full-main-theorem-weak-recovery}
    Let $k,d\in \N^+$ be such that $k\leq O(1), d\leq n^{o(1)}$.
    Assume that for any $d' \in \N^+$ such that $0.999 d\leq d'\leq d$, Conjecture \ref{conj:low-degree} holds for distribution $P = \SSBM(n,\frac{d'}{n},\e,k)$ and distribution $Q=\bbG(n, \frac{d'}{n})$.
    Then for any small constants $\delta_1,\delta_2$, no $\exp\Paren{n^{0.99}}$ time algorithm can achieve recovery rate $n^{-0.5+\delta_1}$ in the $k$-stochastic block model when $\epsilon^2 d\leq (1-\delta_2) k^2$.
\end{theorem}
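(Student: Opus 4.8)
I would argue by contradiction. Suppose that for some small constants $\delta_1,\delta_2>0$ there is an $\exp(n^{0.99})$-time algorithm $\mathcal{A}$ which, given a graph from $\SSBM(n,\tfrac{d}{n},\epsilon,k)$ with $\epsilon^2 d\le(1-\delta_2)k^2$, outputs with constant probability a nonzero $\hat M_0$ with $\iprod{\hat M_0,M^\circ}\ge n^{-0.5+\delta_1}\normf{\hat M_0}\normf{M^\circ}$. Fixing a small constant $\eta>0$ (say $\eta\le 0.001$) and setting $d':=(1-\eta)d$ (so $0.999d\le d'\le d$ and still $\epsilon^2 d'\le(1-\delta_2)k^2$), I would build from $\mathcal{A}$ an $\exp(n^{0.99})$-time test $f\colon\{0,1\}^{n\times n}\to\{0,1\}$ with $\RPQ(f)\ge\exp(n^{\Omega(1)})$ for $P=\SSBM(n,\tfrac{d'}{n},\epsilon,k)$ and $Q=\bbG(n,\tfrac{d'}{n})$. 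Since $\epsilon^2 d'<(1-\delta_2)k^2$ is below the Kesten--Stigum threshold and $k=O(1)$, the second case of \cref{thm:ldlr-sbm} bounds the degree-$n^{0.99}$ projection of the relative density of $P$ relative to $Q$ by $O(\exp(k^2))=O(1)$, equivalently $\max_{\deg g\le n^{0.99}}\RPQ(g)\le O(1)$; so the hypothesis of \cref{conj:low-degree} for this $(P,Q)$ is satisfied, and its conclusion forces $\RPQ(f)\le O(1)$ --- a contradiction.

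\textbf{The test (\cref{alg:weak-recovery-function}).} Given $Y$, keep each edge of $G$ independently with probability $1-\eta$ to form $G_1$, and let $Y_2$ be the adjacency matrix of $G_2=G\setminus E(G_1)$. Run $\mathcal{A}$ on $G_1$ to obtain $\hat M_0$, then apply the correlation-preserving projection of \cite{Hopkins17} (\cref{thm:correlation-preserving-projection}; polynomial time via SDP, \cref{lem:corr-preserve-proj}) to land in the convex body $\cK$ of \eqref{eq:K}. Writing $\delta:=n^{-0.5+\delta_1}$, recall $M^\circ\in\cK$, that every $M\in\cK$ has bounded entries $\normi{M}\le 1/\delta$ and bounded nuclear norm $\normn{M}=O(n/\delta)$, and that the projection guarantees $\normf{\hat M}=\Theta(\normf{M^\circ})=\Theta(n/\sqrt k)$ and, on the event that $\mathcal{A}$ succeeded, $\iprod{\hat M,M^\circ}\ge\Omega(\delta)\normf{\hat M}\normf{M^\circ}\ge\Omega(\delta n^2/k)$. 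Output $f(Y)=\one\bigl\{\iprod{\hat M,\,Y_2-\tfrac{\eta d}{n}\one\one^\top}\ge\tau\bigr\}$ with $\tau=n^{0.5+\delta_1/2}$ (any $n^{0.5+\Theta(\delta_1)}$ works).

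\textbf{The two moment bounds.} Edge subsampling commutes with the block model, so under $P$ one has $G_1\sim\SSBM(n,\tfrac{d'}{n},\epsilon,k)$ and $Y_2-\tfrac{\eta d}{n}\one\one^\top=\tfrac{\eta\epsilon d}{n}M^\circ+W_2$ with $W_2$ centered, while under $Q$ the same identity holds with the $M^\circ$-term absent; in both cases
\[
\iprod{\hat M,\,Y_2-\tfrac{\eta d}{n}\one\one^\top}=\tfrac{\eta\epsilon d}{n}\iprod{\hat M,M^\circ}+\iprod{\hat M,W_2}.
\]
Under $P$, on the constant-probability success event the first term is $\gtrsim\tfrac{\eta\epsilon d}{n}\cdot\delta n^2/k=\Theta\bigl(\tfrac{\eta\epsilon d}{k}\,\delta n\bigr)\gg\tau$. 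For $\iprod{\hat M,W_2}$ the difficulty is that $W_2$ depends on $\hat M$ (an edge kept in $G_1$ is absent from $G_2$); I would handle this with the decoupling lemma (\cref{lem:decoupling}). Coupling $G_2$ with a graph $\tilde G_2$ that agrees with $G_2$ off $E(G_1)$ but is resampled with the correct conditional marginal on $E(G_1)$ yields a centered symmetric $\tilde W_2$ that is \emph{independent} of $G_1$, hence of $\hat M$, with $\Var\bigl((\tilde W_2-W_2)_{ij}\bigr)=O(d^2/n^2)$; moreover $W_2-\tilde W_2$ is supported on $E(G_1)$, and conditioned on $G_1$ its nonzero entries are independent $\{0,-1\}$-variables. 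Conditioning on $G_1$ (equivalently on $\hat M$), both $\iprod{\hat M,\tilde W_2}$ and $\iprod{\hat M,W_2-\tilde W_2}$ are then sums of independent terms of magnitude $O(1/\delta)$ (using $\hat M\in\cK$) with total conditional variance $O\bigl(\tfrac{d}{n}\bigr)\normf{\hat M}^2=O(dn/k)$, so Bernstein's inequality gives, for a suitable $s=n^{0.5+\Theta(\delta_1)}$ with $2s<\tau$, that $\Pr\bigl[|\iprod{\hat M,W_2}|\ge 2s\bigr]\le\exp\bigl(-\Omega\bigl(\min(s^2k/(dn),\,\delta s)\bigr)\bigr)\le\exp(-n^{\Omega(1)})$. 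Hence $\E_{Y\sim P}f(Y)\ge\Omega(1)$, whereas under $Q$ the same estimate (with no signal term) gives $\bigl|\iprod{\hat M,Y_2-\tfrac{\eta d}{n}\one\one^\top}\bigr|<\tau$ with probability $1-\exp(-n^{\Omega(1)})$, so $\E_{Y\sim Q}f(Y)\le\exp(-n^{\Omega(1)})$ and $\Var_{Y\sim Q}(f(Y))\le\E_{Y\sim Q}f(Y)$. Combining, $\RPQ(f)\ge\Omega(1)/\exp(-n^{\Omega(1)})=\exp(n^{\Omega(1)})$, giving the desired contradiction.

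\textbf{Main obstacle.} The hard part will be the decoupling lemma (\cref{lem:decoupling}): separating $\hat M$, a complicated function of $G_1$, from the noise $W_2$ of the held-out graph $G_2$ while losing essentially nothing, even though $\hat M$ can have entries as large as $1/\delta=n^{0.5-\delta_1}$. The coupling above makes $\tilde W_2$ exactly independent of $G_1$ and confines $W_2-\tilde W_2$ to the $\approx dn$ edges of $G_1$, each contributing only probability $\approx\eta d/n$, so that $\iprod{\hat M,W_2-\tilde W_2}$ is controlled by the \emph{same} Bernstein bound as the main term; pinning down the conditional marginals so that independence from $G_1$ genuinely holds, and verifying that the resulting $O(d^2/n^2)$ entrywise variance is small enough against the $\ell_\infty$-growth of $\hat M$, is where the real work lies. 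Everything else --- fixing $\eta,\tau,s$ and the $n^{\Omega(1)}$ exponents as functions of $\delta_1,\delta_2$, checking the $\exp(n^{0.99})$ running time, the identity $\normf{M^\circ}^2=\Theta(n^2/k)$, the negligible diagonal of $Y_2$, and tracking the dependence on $d=o(n)$, $k=O(1)$ --- should be routine bookkeeping.
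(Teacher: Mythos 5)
Your proposal is correct and follows essentially the same route as the paper's proof of \cref{thm:full-main-theorem-weak-recovery}: edge subsampling to split $G$ into $(G_1, G_2)$, running the hypothesized recovery algorithm on $G_1$, regularizing via the correlation-preserving projection of \cite{Hopkins17} into the convex body $\cK$, thresholding $\iprod{\hat M, Y_2 - \tfrac{\eta d}{n}\one\one^\top}$, bounding the two moments via decoupling plus Bernstein (exactly as in \cref{lem:lb_sbm} and \cref{lem:ub_ER}), and then reading off a contradiction from \cref{conj:low-degree} together with the second branch of \cref{thm:ldlr-sbm}. The one place you deviate is the implementation of the decoupling: you construct $\tilde G_2$ by keeping $G_2$ off $E(G_1)$ and resampling on $E(G_1)$ so that $\tilde W_2$ is \emph{genuinely} independent of $G_1$ conditional on the community labels, whereas the paper's \cref{lem:decoupling} shifts $W_2$ by its conditional mean to recenter it (and the paper then separately bounds $\iprod{W_2-\tilde W_2,\hat M}$ by Cauchy--Schwarz rather than by a second Bernstein bound as you do). Both couplings serve the same purpose and lead to the same conclusion; your resampling variant has the small advantage that the independence of $\tilde W_2$ from $\hat M$ is manifest by construction rather than argued via conditional-moment bounds.
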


The reduction that we consider is the following.

\begin{algorithmbox}[Reduction from testing to weak recovery]
    \label{alg:reduction-test-recovery}
    \mbox{}\\
    \textbf{Input:} A random graph $G$ with equal probability sampled from \Erdos-\Renyi model or stochastic block model, and target recovery rate $\delta$, parameters $\e,k,d$. \\
    \textbf{Output:} Testing statistics $g(Y)\in \R$, where $Y$ is the adjacency matrix.\\
    \textbf{Algorithm:} 
    \begin{enumerate}[1.]
        \item Let $\eta=0.001\delta_2$, where $\delta_2=1-\e^2 d/k^2$. Obtain subgraph $G_1$ by subsampling each edge with probability $1-\eta$, and let $G_2= G\setminus G_1$. 
        \item Obtain estimator $\hat{M}_0$ by running weak recovery algorithm on graph $G_1$.
        \item Obtain $\hat{M}$ by applying correlation preserving projection (see \cref{thm:correlation-preserving-projection}) on $\hat{M}_0$ to the set $\cK=\Set{M\in [-1/\delta,1/\delta]^{n\times n}: M+\frac{1}{k\delta} \Ind \Ind^{\top} \succeq 0 \,, \Tr(M + \frac{1}{k\delta} \Ind \Ind^{\top}) \leq n/\delta}$. 
        \item Return testing statistics $g(Y)=\iprod{\hat{M},Y_2-\frac{\eta d}{n} \one \one^{\top}}$, where $Y_2$ is the adjacency matrix for the graph $G_2$.
    \end{enumerate}
\end{algorithmbox}

To prove \cref{thm:full-main-theorem-weak-recovery}, we will show that the testing statistics $g(Y)$ from \cref{alg:reduction-test-recovery} satisfies the following two lemmas.

\begin{lemma}
\label[lemma]{lem:lb_sbm}
    Let $Y$ be the adjacency matrix of the graph sampled from the symmetric $k$-stochastic block model $\SSBM(n,\frac{d}{n},\e,k)$ and $M^\circ\in \Set{-1/k,1-1/k}^{n\times n}$ be the corresponding community membership matrix.
    Suppose that $\iprod{\hat{M}_0,M^{\circ}}\geq  n^{-0.5+\delta_1} \normf{\hat{M}_0} \normf{M^{\circ}}$ and $\normf{\hat{M}} = \Theta(\normf{M^{\circ}})$.
    Then \cref{alg:reduction-test-recovery} outputs testing statistics $g(Y)\in \R$ such that $g(Y)\geq \Omega\Paren{n^{0.5(1+\delta_1)}}$.
\end{lemma}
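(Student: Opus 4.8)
\textbf{Proof proposal for \cref{lem:lb_sbm}.}

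The plan is to follow the sketch in \cref{sec:techniques}: decompose $Y_2 - \frac{\eta d}{n}\one\one^\top$ into a signal part proportional to $M^\circ$ and a zero-mean noise part, lower bound the signal contribution using the correlation-preserving projection guarantee, and control the noise contribution via a Bernstein inequality. First I would record that, conditioned on the community labels $x^\circ$, the matrix $Y_2$ has independent entries above the diagonal, with $\E[Y_2(i,j)] = \frac{\eta d}{n}(1 + \e M^\circ_{i,j} \cdot \tfrac{k}{k-1}\cdot\tfrac{k-1}{k})$ — more precisely $\E[Y_2(i,j)] = \frac{\eta d}{n} + \frac{\eta\e d}{n}M^\circ_{i,j}$ by \cref{def:ssbm} together with the definition of $M^\circ$. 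Hence we may write $Y_2 - \frac{\eta d}{n}\one\one^\top = \frac{\eta\e d}{n}M^\circ + W_2$ where $W_2$ is a mean-zero matrix (conditionally on $x^\circ$) with independent upper-triangular entries, each bounded by $1$ in absolute value and with variance $O(d/n)$. This gives the split
\begin{equation*}
    g(Y) = \iprod{\hat M, Y_2 - \tfrac{\eta d}{n}\one\one^\top} = \tfrac{\eta\e d}{n}\iprod{\hat M, M^\circ} + \iprod{\hat M, W_2}\,.
\end{equation*}

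Next I would bound the two terms. For the signal term: since $\hat M$ comes from applying the correlation-preserving projection (\cref{thm:correlation-preserving-projection}) to $\hat M_0$, and by hypothesis $\iprod{\hat M_0, M^\circ} \geq n^{-0.5+\delta_1}\normf{\hat M_0}\normf{M^\circ}$ and $\normf{\hat M} = \Theta(\normf{M^\circ})$, the projection preserves correlation up to a constant, so $\iprod{\hat M, M^\circ} \geq \Omega(1)\cdot n^{-0.5+\delta_1}\normf{\hat M}\normf{M^\circ} \geq \Omega(n^{-0.5+\delta_1})\normf{M^\circ}^2 = \Omega(n^{1.5+\delta_1})$, using $\normf{M^\circ}^2 = \Theta(n^2)$ (each entry of $M^\circ$ is $1-1/k$ or $-1/k$, and roughly an $\tfrac1k$-fraction of the $n^2$ entries equal $1-1/k$, giving $\normf{M^\circ}^2 = \Theta(n^2)$ for constant $k$). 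Multiplying by $\tfrac{\eta\e d}{n}$ and noting $\eta,\e,d = \Theta(1)$, the signal term is $\Omega(n^{0.5+\delta_1})$. For the noise term, the obstacle is that $W_2$ and $\hat M$ are \emph{not} independent — $\hat M$ is computed from $G_1$, but the subsampling that produces $G_1$ and $G_2$ is coupled. Here I would invoke the decoupling lemma (\cref{lem:decoupling}): there is an i.i.d.\ zero-mean symmetric matrix $\tilde W_2$ independent of $\hat M$ with $\E[(\tilde W_2 - W_2)(i,j)^2] = O(d^2/n^2)$. Then $\iprod{\hat M, W_2} = \iprod{\hat M, \tilde W_2} + \iprod{\hat M, W_2 - \tilde W_2}$; the second piece is bounded in expectation by Cauchy–Schwarz, $\E|\iprod{\hat M, W_2 - \tilde W_2}| \le \normf{\hat M}\cdot O(d/n)\cdot n = O(n^{1/2}\cdot n/n)\cdot\ldots$ — more carefully, $\E\iprod{\hat M, W_2-\tilde W_2}^2 \le \normf{\hat M}^2 \cdot O(d^2/n^2) = O(n^2 d^2/n^2) = O(d^2)$, so it is $O(d) = O(1)$ with constant probability, negligible against $n^{0.5+\delta_1}$.

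For the main noise term $\iprod{\hat M, \tilde W_2} = \sum_{i<j} \tilde W_2(i,j)\hat M(i,j)$ (handling the diagonal separately, which contributes $O(n/\delta)$ trivially), I would apply Bernstein's inequality (from \cref{sec:prob-fact}): the summands are independent, each bounded by $|\tilde W_2(i,j)\hat M(i,j)| \le O(1/\delta) = O(n^{0.5-\delta_1})$ since $\hat M \in \cK$ has entries in $[-1/\delta,1/\delta]$ and $|\tilde W_2(i,j)| = O(1)$, and the total variance is $\sum_{i<j} \hat M(i,j)^2 \E[\tilde W_2(i,j)^2] \le O(d/n)\sum_{i,j}\hat M(i,j)^2 = O(d/n)\cdot\Theta(n^2) = O(n)$ (for constant $d$). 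Bernstein then gives $\Pr[|\iprod{\hat M,\tilde W_2}| \ge t] \le \exp(-\Omega(\min(t^2/n, t/n^{0.5-\delta_1})))$; taking $t = n^{0.5+\delta_1/2}$ (or any threshold comfortably below the signal and above the fluctuation scale), both arguments of the min are $n^{\Omega(1)}$ — $t^2/n = n^{2\delta_1}$ and $t/n^{0.5-\delta_1} = n^{1.5\delta_1}$ — so the noise term is $o(n^{0.5+\delta_1})$ except with probability $\exp(-n^{\Omega(1)})$. Combining, with constant probability $g(Y) \ge \tfrac{\eta\e d}{n}\iprod{\hat M, M^\circ} - |\iprod{\hat M, W_2}| \ge \Omega(n^{0.5+\delta_1}) - o(n^{0.5+\delta_1}) = \Omega(n^{0.5(1+\delta_1)})$; note $0.5(1+\delta_1) = 0.5+0.5\delta_1 < 0.5+\delta_1$, so the bound is comfortably met. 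The main obstacle is the dependence between $\hat M$ and $W_2$, which is exactly what \cref{lem:decoupling} is designed to resolve; the rest is bookkeeping with Bernstein and the properties of $\cK$ guaranteed by the correlation-preserving projection.
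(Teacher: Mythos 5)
Your overall architecture matches the paper's proof of \cref{lem:lb_sbm} exactly: decompose $Y_2 - \tfrac{\eta d}{n}\one\one^\top = \tfrac{\e\eta d}{n}M^\circ + W_2$, lower bound the signal term via the correlation-preserving projection, introduce the decoupled matrix $\tilde W_2$ from \cref{lem:decoupling}, bound $\iprod{\hat M, W_2 - \tilde W_2}$ by Cauchy--Schwarz, and control $\iprod{\hat M, \tilde W_2}$ by Bernstein using the $L_\infty$- and nuclear-norm properties of $\cK$. The signal and Bernstein steps are essentially right (modulo the harmless slip $t^2/n = n^{\delta_1}$, not $n^{2\delta_1}$).

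However, there is a genuine error in your treatment of the decoupling term. You write $\E\iprod{\hat M, W_2-\tilde W_2}^2 \le \normf{\hat M}^2\cdot O(d^2/n^2) = O(d^2)$, implicitly treating the entries of $W_2-\tilde W_2$ as zero-mean and independent of $\hat M$ so that the second moment factors entrywise. That independence fails: $W_2-\tilde W_2 = \E[W_2\mid Y_1]-\eta p$ is a \emph{deterministic function of $Y_1$}, and so is $\hat M$; the decoupling was needed precisely because this dependence exists. The valid estimate is Cauchy--Schwarz, $\E\iprod{\hat M, W_2-\tilde W_2}^2 \le \E\big[\normf{\hat M}^2\,\normf{W_2-\tilde W_2}^2\big] = O(n^2)\cdot\E\normf{W_2-\tilde W_2}^2$, where $\E\normf{W_2-\tilde W_2}^2$ sums the per-entry variance over all $\Theta(n^2)$ entries. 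With the per-entry variance $O(d^2/n^2)$ you borrowed from the technique-overview section, this gives $\E\normf{W_2-\tilde W_2}^2 = O(d^2)$ and hence $\abs{\iprod{\hat M, W_2-\tilde W_2}} = O(nd)$ with constant probability — which for constant $d$ is $\Theta(n)$ and \emph{dominates} the signal $\Theta(n^{0.5+\delta_1})$, so the argument as stated breaks down. The fix (and what the paper's proof actually does) is to use the sharper per-entry variance $O(p^3) = O(d^3/n^3)$ from \cref{lem:decoupling} itself (the overview section's $O(d^2/n^2)$ appears to be a typo), plus a high-probability tail bound on $\normf{W_2-\tilde W_2}$ rather than a second-moment argument: this gives $\normf{W_2-\tilde W_2}^2 \lesssim n^{\delta_1}\cdot d^3/n$ with probability $1-\exp(-n^{\delta_1})$ (by \cref{cor:decoupling}), hence $\abs{\iprod{\hat M, W_2-\tilde W_2}} \lesssim \sqrt{n^{1+\delta_1}d^3}$, which for constant $d$ is $n^{(1+\delta_1)/2} = o(n^{0.5+\delta_1})$ and the signal term wins.
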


\begin{lemma}
\label[lemma]{lem:ub_ER}
    Let $Y$ be the adjacency matrix of the graph sampled from \Erdos-\Renyi random graph $\bbG(n, d/n)$. 
    With probability at least $1-\exp(-n^{0.001\delta_1})$, \cref{alg:reduction-test-recovery} outputs $g(Y) \leq O(n^{0.5+\delta_1/3})$ in polynomial time.
\end{lemma}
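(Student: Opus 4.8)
Write $W_2 := Y_2 - \tfrac{\eta d}{n}\one\one^\top$, so that $g(Y) = \iprod{\hat M, W_2}$; recall $\eta = 0.001\delta_2 = \Theta(1)$ and $d \le n^{o(1)}$. Under $Q = \bbG(n,\tfrac dn)$, conditioning on the split, every potential edge of $G_2 = G\setminus E(G_1)$ is present independently with probability $\tfrac{\eta d}{n}$, so the off-diagonal part of $W_2$ has (up to symmetry) i.i.d.\ mean-zero entries with $\E[(W_2)_{ij}^2] = \Theta(d/n)$ and $\abs{(W_2)_{ij}}\le 1$, while its diagonal contributes only the deterministic term $-\tfrac{\eta d}{n}\Tr(\hat M)$ of size $O(d/\delta) = o(n^{0.5+\delta_1/3})$ (via the trace bound defining $\cK$), which I drop from now on. The obstruction — exactly as in the proof of \cref{lem:lb_sbm} — is that $\hat M$ is a function of $G_1$ and hence is \emph{not} independent of $W_2$: subsampling edges of the fixed graph $G$ makes $G_1$ and $G_2$ negatively correlated entrywise. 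My plan is to invoke the decoupling lemma (\cref{lem:decoupling}) to replace $W_2$ by an i.i.d.\ mean-zero symmetric matrix $\tilde W_2$ that is \emph{independent} of $\hat M$ and coupled to $W_2$ entrywise so that $(W_2)_{ij} \ne (\tilde W_2)_{ij}$ only on the rare ``conflict'' pairs of the split — an event of probability $O(d^2/n^2)$ that is independent across pairs. Then the triangle inequality gives
\[
    \abs{g(Y)} \;\le\; \bigabs{\iprod{\hat M, \tilde W_2}} \;+\; \bigabs{\iprod{\hat M, W_2 - \tilde W_2}},
\]
and I would bound the two terms separately.

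\paragraph{Bounding the decoupled term}
Conditioning on $\hat M$, I view $\iprod{\hat M, \tilde W_2}$ as a sum over $i<j$ of independent mean-zero variables, each of magnitude at most $2\abs{\hat M_{ij}} \le 2/\delta = O(n^{0.5-\delta_1})$ (using $\hat M \in \cK \subseteq [-1/\delta,1/\delta]^{n\times n}$ and $\abs{(\tilde W_2)_{ij}}\le 1$) and of total variance
\[
    \textstyle\sum_{i<j} 4\,\hat M_{ij}^2\,\E[(\tilde W_2)_{ij}^2] \;\lesssim\; \tfrac dn\,\normf{\hat M}^2 \;=\; O(nd),
\]
where the last step uses the guarantee of the correlation-preserving projection (\cref{thm:correlation-preserving-projection}) that $\normf{\hat M} = \Theta(\normf{M^\circ}) = O(n)$ for \emph{any} input $\hat M_0$ — in particular when the recovery algorithm is fed \Erdos-\Renyi noise and $\hat M_0$ carries no signal. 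Since $d\le n^{o(1)}$, at deviation $t = \tfrac12 n^{0.5+\delta_1/3}$ the linear term $t/\delta = \Theta(n^{1-2\delta_1/3})$ is dominated by the variance $nd = n^{1+o(1)}$, so Bernstein's inequality is in its sub-Gaussian regime and gives, uniformly in $\hat M$ (hence unconditionally),
\[
    \Pr\Brac{\,\bigabs{\iprod{\hat M, \tilde W_2}} \ge \tfrac12 n^{0.5+\delta_1/3} \Mid \hat M\,} \;\le\; 2\exp\Paren{-\Omega\Paren{t^2/(nd)}} \;\le\; \exp\Paren{-n^{\delta_1/3}}
\]
for $n$ large.

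\paragraph{Bounding the correction term}
By the coupling above, $W_2 - \tilde W_2$ vanishes off $S := \{(i,j) : (W_2)_{ij} \ne (\tilde W_2)_{ij}\}$, a union of independent events each of probability $O(d^2/n^2)$, so $\E\abs S = O(d^2) = n^{o(1)}$; a Chernoff bound then gives $\abs S \le n^{\delta_1/4}$ except with probability $\exp(-n^{\Omega(1)})$. On that event, since $\abs{(W_2-\tilde W_2)_{ij}}\le 2$ and $\abs{\hat M_{ij}}\le 1/\delta$,
\[
    \bigabs{\iprod{\hat M, W_2 - \tilde W_2}} \;\le\; 2\,\abs S/\delta \;\le\; 2\,n^{\delta_1/4}\cdot n^{0.5-\delta_1} \;=\; o\Paren{n^{0.5+\delta_1/3}}.
\]
Combining with the previous display by a union bound over the two failure events, and since both exponents $\delta_1/3$ and $\Omega(1)$ exceed $0.001\delta_1$, I would conclude $\abs{g(Y)} \le O(n^{0.5+\delta_1/3})$ with probability at least $1-\exp(-n^{0.001\delta_1})$; apart from the single call to the recovery algorithm, the reduction consists of edge subsampling, a semidefinite program for the correlation-preserving projection, and elementary matrix operations, hence is polynomial-time.

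\paragraph{Main obstacle}
The crux is the dependence between $\hat M$ and $W_2$, and in particular extracting from \cref{lem:decoupling} the right structural statement on the \Erdos-\Renyi side: not merely that $W_2 - \tilde W_2$ has small \emph{entrywise variance}, but that it is \emph{sparsely supported on an entrywise-independent set}, so the correction term can be controlled by $\abs S\cdot\max_{ij}\abs{\hat M_{ij}}$ rather than by the hopelessly lossy Cauchy--Schwarz estimate $\normf{\hat M}\,\normf{W_2-\tilde W_2} = O(nd)$. Equally essential is that the correlation-preserving projection bounds $\normf{\hat M}$ by $O(n)$ \emph{even on input carrying no signal}: without it the box and trace constraints defining $\cK$ yield only $\normf{\hat M} = O(n^{1.5-\delta_1})$, which inflates the Bernstein variance by a polynomial factor and breaks the tail bound. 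Once these two inputs are secured, the remainder is the routine Bernstein computation above, mirroring the proof of \cref{lem:lb_sbm}.
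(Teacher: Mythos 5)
Your overall skeleton — split $W_2$ into a decoupled matrix $\tilde W_2$ plus a correction, bound the decoupled term by Bernstein, and bound the correction separately — is exactly the paper's. Your Bernstein computation for $\iprod{\hat M, \tilde W_2}$ (using $\hat M\in\cK$ for the $O(1/\delta)$ entrywise cap and $\normf{\hat M}=\Theta(n)$ from the projection for the $O(nd)$ variance) matches the paper's step for step. The divergence, and the gap, is in the correction term.

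You describe \cref{lem:decoupling} as producing a $\tilde W_2$ that is ``coupled to $W_2$ entrywise so that $(W_2)_{ij}\ne(\tilde W_2)_{ij}$ only on rare conflict pairs of probability $O(d^2/n^2)$.'' That is not what the lemma says. The lemma defines $\tilde X_2 := X_2 - \E[X_2\mid X_1] + \eta p$, so the correction is
\[
    \tilde X_2 - X_2 \;=\; \eta p - \E[X_2\mid X_1] \;=\; \eta p - \frac{\eta p}{1-(1-\eta)p}\,(1-X_1)\,,
\]
a deterministic function of $X_1$ that is \emph{never zero}: it equals $\eta p$ when $X_1=1$ and a small negative number of order $p^2$ when $X_1=0$. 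There is no sparse random support set $S$. What the lemma does give is that every entry of $\tilde W_2 - W_2$ has magnitude $\le\eta p$ and variance $O(\eta^2 p^3)$. Your $|S|$-based bound therefore does not follow from the cited lemma. A maximal-coupling argument in the spirit you describe (couple $\text{Law}(X_2\mid X_1)$ to a fresh $\text{Ber}(\eta p)$, disagreement probability $O(p^2)$) could be made to work, but it is a different coupling requiring its own statement and proof, and it is not what \cref{lem:decoupling} or \cref{cor:decoupling} supply.

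Relatedly, your ``main obstacle'' paragraph misjudges the Cauchy--Schwarz route, which is exactly what the paper uses. You estimate $\normf{\hat M}\,\normf{W_2-\tilde W_2}=O(nd)$, implicitly using an $O(p^2)$ entrywise variance. But the whole point of the $\eta p$-recentering in \cref{lem:decoupling} is the cubic gain $\E[(\tilde X_2 - X_2)^2\mid X_1]\le O(\eta^2 p^3)$. Summing over $\binom n2$ entries and applying Hoeffding (\cref{cor:decoupling}) gives $\normf{W_2-\tilde W_2}^2\lesssim n^{\delta_1/3}\cdot d^3/n$ with probability $1-\exp(-n^{\delta_1/3})$, hence
\[
    \normf{\hat M}\,\normf{W_2-\tilde W_2}\;\lesssim\; n\cdot\sqrt{n^{\delta_1/3}\,d^3/n}\;=\;\sqrt{d^3\,n^{1+\delta_1/3}}\;=\;n^{0.5+\delta_1/6+o(1)}\;=\;o\big(n^{0.5+\delta_1/3}\big)
\]
when $d\le n^{o(1)}$. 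So Cauchy--Schwarz is not lossy here — it is precisely the intended argument. To fix your write-up, either adopt the paper's Cauchy--Schwarz bound (no new machinery needed), or, if you prefer the sparse-support route, state and prove the maximal-coupling version of the decoupling lemma rather than attributing it to \cref{lem:decoupling}.
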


Combining \cref{lem:lb_sbm} and \cref{lem:ub_ER}, \cref{thm:main-theorem-weak-recovery} follows as a corollary.

\begin{proof}[Proof of \cref{thm:main-theorem-weak-recovery}]
Suppose that there is a $\exp\Paren{n^{0.99}}$ time algorithm which outputs estimator $\hat{M}_0$ such that $\iprod{\hat{M}_0,M^{\circ}}\geq  n^{-0.5+\delta_1} \normf{\hat{M}_0} \normf{M^{\circ}}$.
Let $f(Y)=\mathbf{1}_{g(Y)\geq 0.001n^{0.5+\delta_1/2}}$.
When $\e^2 d \geq \Omega(k^2)$, combining \cref{lem:lb_sbm} and \cref{lem:ub_ER}, we have
    \begin{equation*}
        \frac{\E_P f(Y)}{\sqrt{\text{Var}_Q(f(Y))}} \geq \exp(n^{0.001\delta_1})\,.
    \end{equation*}
    By the low-degree likelihood ratio upper bound \cref{thm:ldlr-sbm}, when $\e^2 d\leq (1-\delta_2)k^2$, we have 
    \begin{equation*}
       \max_{\text{deg}(f)\leq n^{0.01}}\frac{\E f(Y)}{\sqrt{\text{Var}_Q(f(Y))}}\leq \exp(k^2) \,.
    \end{equation*}
    
    Since $f(Y)$ can be evaluated in $O(\exp\Paren{n^{0.99}})$ time, assuming Conjecture ~\ref{conj:low-degree}, we then have 
   \begin{equation*}
    \frac{\E f(Y)}{\sqrt{\text{Var}_Q(f(Y))}} \lesssim \max_{\text{deg}(f)\leq n^{0.01}}\frac{\E f(Y)}{\sqrt{\text{Var}_Q(f(Y))}}\leq O(1)\,,
   \end{equation*}
which leads to a contradiction.
As a result, assuming Conjecture ~\ref{conj:low-degree}, we cannot achieve weak recovery in $\exp\Paren{n^{0.99}}$ time when $\epsilon^2 d\leq (1-\delta_2)k^2$. 
\end{proof}

\subsection{Correlation preserving projection}

In this part, we prove that we can project the estimator into the set of matrices with bounded entries and bounded nuclear norm, while preserving correlation.
\begin{lemma}\label[lemma]{lem:corr-preserve-proj}
Let $M^{\circ}\in \{-1/k,1-1/k\}^{n\times n}$ be a symmetric matrix with rank-$(k+1)$.
For any $\delta\leq O(1)$, given matrix $\hat{M}_0$ such that $\iprod{\hat{M}_0,M^{\circ}}\geq \delta\normf{\hat{M}_0} \normf{M^{\circ}}$, there is a polynomial time algorithm which outputs $\hat{M} \in \cK$ such that 
$\iprod{\hat{M},M^{\circ}}\geq \Omega(1)\cdot \delta\normf{\hat{M}} \normf{M^{\circ}}$ and $\normf{\hat{M}}\geq \Omega(\normf{M^{\circ}})$, where
\begin{equation*}
    \cK=\Set{M\in [-1/\delta,1/\delta]^{n\times n}: M+\frac{1}{k\delta} \Ind \Ind^{\top} \succeq 0 \,, \Tr(M + \frac{1}{k\delta} \Ind \Ind^{\top}) \leq n/\delta} \,.
\end{equation*}
\end{lemma}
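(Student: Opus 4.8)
The plan is to exhibit the map explicitly as a projection onto $\cK$ in a cleverly chosen geometry, following the correlation-preserving projection framework of \cite{Hopkins17} (\cref{thm:correlation-preserving-projection}). First I would record the structure of $M^\circ$: it is symmetric, lies in $\{-1/k,1-1/k\}^{n\times n}$, has $M^\circ + \frac1k \Ind\Ind^\top = \sum_{a\in[k]} \ind{x^\circ = a}\ind{x^\circ=a}^\top \succeq 0$, and $\Tr(M^\circ + \frac1k\Ind\Ind^\top) = n$. Hence $M^\circ \in \cK$ whenever $\delta \le 1$ (the entrywise bound $1-1/k \le 1/\delta$ and the trace bound $n \le n/\delta$ both hold), and moreover $\normf{M^\circ}^2 = \Theta(n^2)$. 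So $\cK$ is a nonempty closed convex set containing the planted signal, and the ``small set'' properties advertised in the overview --- bounded entries, bounded nuclear norm $\Normn{M} \le \Tr(M + \frac1{k\delta}\Ind\Ind^\top) + \frac{n}{k\delta} \le O(n/\delta)$ --- are immediate from the two defining constraints of $\cK$.

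Next I would invoke \cref{thm:correlation-preserving-projection} (the abstract correlation-preserving projection) with the convex body $\cK$, the target point $M^\circ$, and the input $\hat M_0$. The point of that theorem is that ordinary Euclidean projection onto $\cK$ can destroy correlation (it can shrink $\hat M_0$ towards $0$ or rotate it away from $M^\circ$), but there is a modified projection --- solving a semidefinite program that maximizes $\iprod{M, \hat M_0}$ (suitably normalized) over $M \in \cK$, or equivalently projecting in a norm that accounts for the direction $M^\circ$ --- which outputs $\hat M \in \cK$ with $\iprod{\hat M, M^\circ} \ge \Omega(1)\cdot \delta \normf{\hat M}\normf{M^\circ}$ and with $\normf{\hat M} = \Theta(\normf{M^\circ})$. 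I would need to check the hypotheses of \cref{thm:correlation-preserving-projection}: that $M^\circ \in \cK$ (done above), that $\cK$ is ``centrally symmetric enough'' or contained in a ball of the right radius, and that the SDP is polynomial-size --- here $\cK$ is cut out by $O(n^2)$ box constraints, one PSD constraint on an $n\times n$ matrix, and one linear (trace) constraint, so the separation oracle and hence the ellipsoid/interior-point method run in $\poly(n)$ time, giving the claimed efficiency. The lower bound $\normf{\hat M} \ge \Omega(\normf{M^\circ})$ is part of the same package (the projection does not collapse the iterate).

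The main obstacle I anticipate is verifying that the geometry of \emph{this particular} $\cK$ meets the technical preconditions of the black-box correlation-preserving projection theorem --- in particular that $\cK$ is large enough to contain a scaled copy of $\hat M_0$'s relevant component yet small enough that $\normf{M} \le O(n/\delta)$ for all $M\in\cK$ (which follows from $\Normn{M}\le O(n/\delta)$ and $\normf{M}\le\Normn{M}$, but one must be careful that the $\frac1{k\delta}\Ind\Ind^\top$ shift does not spoil the correlation with $M^\circ$, since $\iprod{\Ind\Ind^\top, M^\circ} = 0$ by the centering $-1/k$, which is exactly why the shift is harmless). A secondary point is bookkeeping the constant in $\Omega(1)$: the correlation can drop by a constant factor under the projection, and one should track that this factor is independent of $n, k, d$ (it may depend on the shape of $\cK$, i.e.\ on the ratio of the ``radii'' $1/\delta$ and $n/\delta$, which are comparable up to the fixed quantity $n$, so this is fine). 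Once these checks are in place, the three conclusions --- membership in $\cK$, correlation $\iprod{\hat M, M^\circ}\ge\Omega(1)\delta\normf{\hat M}\normf{M^\circ}$, and $\normf{\hat M}=\Theta(\normf{M^\circ})$ --- all follow directly from \cref{thm:correlation-preserving-projection}, and polynomial-time solvability follows from the explicit SDP description of $\cK$.
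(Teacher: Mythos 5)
There is a genuine gap in your argument. \Cref{thm:correlation-preserving-projection} does \emph{not} give $\normf{\hat M}=\Theta(\normf{M^\circ})$; it only gives the one-sided bound $\normf{Q}\ge\delta\normf{Y}$. In the regime of interest $\delta$ is as small as $n^{-0.49}$, so ``$\normf{\hat M}\ge\delta\normf{M^\circ}$'' is far from the claimed $\normf{\hat M}\ge\Omega(\normf{M^\circ})$. Your sentence ``the lower bound $\normf{\hat M}\ge\Omega(\normf{M^\circ})$ is part of the same package'' is where the proof breaks. (As a secondary point, the projection in \cref{thm:correlation-preserving-projection} is the norm minimizer over a half-space intersected with $\cC$, not a correlation maximizer; this doesn't change the gap but your description of the mechanics is off.)

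The paper repairs exactly this issue with a two-step construction that you are missing. It does \emph{not} project onto $\cK$ directly. Instead it projects onto the scaled-down set
\[
  \cK'=\Set{M\in[-1,1]^{n\times n}:\ M+\tfrac1k\Ind\Ind^\top\succeq 0,\ \Tr(M+\tfrac1k\Ind\Ind^\top)\le n},
\]
which contains $M^\circ$ exactly, obtaining $N\in\cK'$ with $\delta\normf{M^\circ}\le\normf{N}\le\normf{M^\circ}$ (the upper bound because $M^\circ$ itself is feasible for the SDP and $N$ minimizes the Frobenius norm) and $\iprod{N,M^\circ}\ge\Omega(\delta)\normf{N}\normf{M^\circ}$. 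It then sets $\hat M=\frac{\normf{M^\circ}}{\normf{N}}\cdot N$. The rescaling factor lies in $[1,1/\delta]$, so the rescaled matrix lands inside $\cK$ (which is precisely $\cK'$ dilated by $1/\delta$), it has $\normf{\hat M}=\normf{M^\circ}$ exactly, and the correlation ratio is scale-invariant. If you instead project directly onto $\cK$ and then try to rescale to restore the norm, the scaling factor can be as large as $1/\delta$ applied to a matrix whose entries are already as large as $1/\delta$, and the result can leave $\cK$. So the choice of the tighter auxiliary set $\cK'$ is not cosmetic --- it is what makes the rescaling legal.
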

\begin{proof}
    We apply the correlation preserving projection from \cite{Hopkins17} (restated in \cref{thm:correlation-preserving-projection}).
    By definition, $M^{\circ} = X^{\circ} (X^{\circ})^{\top} - \frac{1}{k} \Ind \Ind^{\top}$ is in $\cK$.
    Let $N$ be the matrix that minimizes $\normf{N}$ subject to $N\in \cK^\prime$ and $\iprod{N,\hat{M}_0}\geq \delta \normf{M^{\circ}} \normf{\hat{M}_0}$, where
    \begin{equation*}
    \cK^\prime=\Set{M\in [-1,1]^{n\times n}: M+\frac{1}{k} \Ind \Ind^{\top} \succeq 0 \,, \Tr(M + \frac{1}{k} \Ind \Ind^{\top}) \leq n} \,.
    \end{equation*}
    Using ellipsoid method, this semidefinite program can be solved in polynomial time.
    By \cref{thm:correlation-preserving-projection}, we have $\iprod{N,M^{\circ}}\geq \Omega(1)\cdot \delta\normf{N} \normf{M^{\circ}}$ and $\normf{N} \geq \delta\normf{M^{\circ}}$.
    We let $\hat{M}=\frac{\normf{M^\circ}}{\normf{N}}\cdot N$.
    Then it follows that $\hat{M}\in \cK$, $\normf{\hat{M}}=\normf{M^\circ}$ and $\iprod{\hat{M},M^{\circ}}\geq \Omega(\delta) \normf{\hat{M}}\cdot \normf{M^\circ}$.
\end{proof}

\subsection{Proof of \cref{lem:lb_sbm}}

In this section, we prove \cref{lem:lb_sbm}.
\begin{proof}[Proof of \cref{lem:lb_sbm}]
We consider the decomposition that $Y_2-\frac{\eta d}{n} \one \one^{\top}= \frac{\epsilon \eta d}{n}M^{\circ}+W_2$ where $W_2$ is a symmetric random matrix with independent and zero mean entries.
By \cref{lem:decoupling}, there exists an i.i.d zero mean symmetric matrix $\tilde{W}_2$ that is independent with $Y_1$, and satisfies that the entries in $\tilde{W}_2-W_2$ are independent with zero mean and have variance bounded by $O(d^3/n^3)$, conditioning on the subsampled graph $Y_1$ and community matrix $M^\circ$.
As result, we have
\begin{equation*}
    \iprod{Y_2-\frac{\eta d}{n} \one \one^{\top},\hat{M}}= \iprod{\frac{\epsilon \eta d}{n}M^{\circ}, \hat{M}}+ \iprod{W_2-\tilde{W}_2, \hat{M}}+\iprod{\tilde{W}_2, \hat{M}}\,.
\end{equation*}

For the first term $\iprod{\frac{\epsilon \eta d}{n}M^{\circ}, \hat{M}}$, it follows from \cref{lem:corr-preserve-proj} that
\begin{equation*}
\begin{split}
\iprod{M^{\circ}, \hat{M}}
& \geq \Omega \Paren{\frac{\epsilon \eta d}{n}} \delta\normf{\hat{M}} \normf{M^{\circ}} \\
& \geq \Omega \Paren{\frac{\delta\epsilon \eta d}{n}} \normf{M^{\circ}}^2\,.
\end{split}
\end{equation*}
As with probability at least $1-\exp(-n^{0.001})$, we have $\normf{M^{\circ}}^2\geq \Omega(n^2)$, and as result $\iprod{M^\circ,\hat{M}}\geq \Omega(n\delta \e d)$.

For bounding the second term $\iprod{W_2-\tilde{W}_2, \hat{M}}$, we condition on the subsampled graph $Y_1$ and the community matrix $M^\circ$. 
With probability at least $1-\exp(-n^{\delta_1})$, we have 
\begin{align*}
    |\iprod{W_2-\tilde{W}_2, \hat{M}}|\leq \normf{W_2-\tilde{W}_2} \cdot \normf{\hat{M}}\lesssim \sqrt{\frac{d^3}{n^3} \cdot n^{2+\delta_1} \cdot n^2}= \sqrt{n^{1+\delta_1}d^3}\,. 
\end{align*}

For the third term $\iprod{\tilde{W}_2, \hat{M}}$, we again conditional on the subsampled graph $Y_1$ and the community matrix $M^\circ$.
We note that it can be written as the summation of independent zero-mean random variables
\begin{equation*}
    \iprod{\tilde{W}_2, \hat{M}}= \sum_{i,j} \tilde{W}_2(i,j)\hat{M}(i,j)\,. 
\end{equation*}
where $\tilde{W}_2(i,j)\hat{M}(i,j)$ are independent zero mean variables bounded by $O(1/\delta)$ for all $i\leq j$. 
Moreover, we have
\begin{equation*}
    \sum_{i,j} \hat{M}(i,j)^2 \E\Brac{\tilde{W}_2(i,j)^2}\lesssim \frac{d}{n}\sum_{i,j} \hat{M}(i,j)^2 \leq O(n^2\cdot \frac{d}{n})=O(nd)\,.
\end{equation*}
By Bernstein inequality, we have
\begin{equation*}
    \Pr\Brac{\Abs{\sum_{i,j} \tilde{W}_2(i,j)\hat{M}(i,j)}\geq 100 t}\leq \exp\Paren{-t^2/(nd+t/\delta)} \,. 
\end{equation*}
Taking $t=n^{(1+\delta_1)/2}\sqrt{d}$ and $\delta\geq n^{-0.5+\delta_1}$, we have
\begin{equation*}
    \Pr\Brac{\Abs{\sum_{i,j} \tilde{W}_2(i,j)\hat{M}(i,j)}\geq n^{0.5(1+\delta_1)}\sqrt{d}}\leq \exp\Paren{-n^{\delta_1/2}}\,.
\end{equation*}

As a result, when $d\leq n^{o(1)},k\leq n^{o(1)},\delta\geq n^{-0.5+\delta_1}, \e=\Theta(1/\sqrt{d})$, with constant probability, we have
\begin{equation*}
    \iprod{Y_2-\frac{\eta d}{n} \one \one^{\top},\hat{M}}
    \geq \Omega(n^{0.5+\delta_1}\sqrt{d})\,.
\end{equation*}
\end{proof}

\subsection{Proof of \cref{lem:ub_ER}}

In this section, we prove \cref{lem:ub_ER}.
\begin{proof}[Proof of \cref{lem:ub_ER}]
We will use the fact that $Y_2-\frac{\eta d}{n} \one \one^{\top}$ and $\hat{M}$ are approximately independent. 
More precisely, let $W_2=Y_2-\frac{\eta d}{n} \one \one^{\top}$, by \cref{lem:decoupling}, there exists symmetric zero mean matrix $\tilde{W}_2$ with independent entries  such that each entry in $\tilde{W}_2-W_2$ has zero mean variance bounded by $O(d^3/n^3)$ conditioning on $\hat{M}$. 
By triangle inequality, we have
\begin{equation*}
   g(Y)=\Abs{\iprod{Y_2-\frac{\eta d}{n}\one \one^{\top},\hat{M}} }\leq \Abs{\iprod{W_2-\tilde{W}_2,\hat{M}}}+\Abs{\iprod{\tilde{W}_2,\hat{M}}}\,. 
\end{equation*}
For bounding the first term $\iprod{W_2-\tilde{W}_2, \hat{M}}$, we condition on the subsampled graph $Y_1$. 
With probability at least $1-\exp(-n^{\delta_1/3})$, we have 
\begin{align*}
    |\iprod{W_2-\tilde{W}_2, \hat{M}}|\leq \normf{W_2-\tilde{W}_2} \cdot \normf{\hat{M}}\lesssim \sqrt{\frac{d^3}{n^3} \cdot n^{2+\delta_1/3} \cdot n^2}= \sqrt{d^3 n^{1+\delta_1/3}}\,. 
\end{align*}
For the second term, we note that $\iprod{\tilde{W}_2, \hat{M}}$ can be written as the summation of independent zero-mean random variables
\begin{equation*}
    \iprod{\tilde{W}_2, \hat{M}}= \sum_{i,j} \tilde{W}_2(i,j)\hat{M}(i,j)\,. 
\end{equation*}
where $\tilde{W}_2(i,j)\hat{M}(i,j)$ are independent zero mean variables bounded by $O(1/\delta)$ for $i\leq j$. 
Moreover, we have
\begin{equation*}
    \sum_{i,j} \hat{M}(i,j)^2 \E\Brac{\tilde{W}_2(i,j)^2}\lesssim \frac{d}{n}\sum_{i,j} \hat{M}(i,j)^2 \leq O(n^2\cdot \frac{d}{n})=O(nd)\,.
\end{equation*}
By Bernstein inequality, we have
\begin{equation*}
    \Pr\Brac{\Abs{\sum_{i,j} \tilde{W}_2(i,j)\hat{M}(i,j)}\geq 100 t}\geq \exp\Paren{-t^2/(nd+t/\delta)} \,. 
\end{equation*}
Taking $t=n^{0.5+\delta_1/3}\sqrt{d}$ and $\delta\geq n^{-0.5+\delta_1}$, we have
\begin{equation*}
    \Pr\Brac{\Abs{\sum_{i,j} \tilde{W}_2(i,j)\hat{M}(i,j)}\geq n^{0.5+\delta_1/3}}\leq \exp\Paren{-n^{0.001\delta_1}}\,.
\end{equation*}
\end{proof}

\subsection{Proof of \cref{thm:main-theorem-super-constant-blocks}}

In this part, we give the proof of \cref{thm:main-theorem-super-constant-blocks}, which is the same as the proof of \cref{thm:main-theorem-weak-recovery} except that we assume stronger low-degree conjecture.
\begin{proof}[Proof of \cref{thm:main-theorem-super-constant-blocks}]
Suppose that there is a polynomial time algorithm which outputs estimator $\hat{M}_0$ such that $\iprod{\hat{M}_0,M^{\circ}}\geq  n^{-0.5+\delta_1} \normf{\hat{M}_0} \normf{M^{\circ}}$.
Let $f(Y)=\mathbf{1}_{g(Y)\geq 0.001n^{0.5+\delta_1/2}}$.
When $0.001k^2\leq \e^2 d \leq (1-\delta_2)k^2$, combining \cref{lem:lb_sbm} and \cref{lem:ub_ER}, we have
    \begin{equation*}
        \frac{\E_P f(Y)}{\sqrt{\text{Var}_Q(f(Y))}} \geq \exp(n^{0.001})\,.
    \end{equation*}
    Since $f(Y)$ can be evaluated in $O(\exp(n^{0.001}))$ time, assuming Conjecture ~\ref{conj:low-degree}, by \cite{Hopkins18}(stated in \cref{thm:ldlr-sbm}), we have 
   \begin{equation*}
    \frac{\E f(Y)}{\sqrt{\text{Var}_Q(f(Y))}} \lesssim \max_{\text{deg}(f)\leq n^{0.99}}\frac{\E f(Y)}{\sqrt{\text{Var}_Q(f(Y))}}\leq \exp(k^2)\,.
   \end{equation*}
When $k^2\leq n^{0.001}$, this leads to a contradiction.
As a result, assuming Conjecture ~\ref{conj:eldlr}, we cannot achieve recovery rate $n^{-0.5+\delta_1}$ in polynomial time when $\epsilon^2 d\leq (1-\Omega(1))k^2$. 
\end{proof}

\section{Computational lower bound for learning stochastic block model}\label{sec:lb-learning}

\subsection{Computational lower bound for learning the edge connection probability matrix}

In this section, we prove \cref{thm:lb-edge-probability} by showing that there exists an efficient algorithm that reduces testing to learning in SBM. 
The reduction of algorithm \cref{alg:reduction-test-learning} is similar to that of \cref{alg:reduction-test-recovery}. The proof of \cref{thm:lb-edge-probability} is also a similar proof by contradiction to the proof of \cref{thm:main-theorem-weak-recovery}.

Before describing the algorithm, we restate \cref{thm:lb-edge-probability} here for completeness.
\begin{theorem}[Restatement of \cref{thm:lb-edge-probability}]
\label{thm:lb-edge-probability-restatement}
    Let $k,d\in \N^+$ be such that $k\leq n^{o(1)}, d\leq o(n)$.
    Assume that for any $d'\in \N^+$ such that $0.999 d\leq d'\leq d$, Conjecture \ref{conj:eldlr} holds with distribution $P$ given by $\SSBM(n,\frac{d'}{n},\e,k)$ and distribution $Q$ given by \Erdos-\Renyi graph model $\bbG(n, \frac{d'}{n})$. 
    Then given graph $G\sim \SSBM(n,\frac{d}{n},\e,k)$, no $\exp\Paren{n^{0.99}}$ time algorithm can output $\theta\in [0,1]^{n\times n}$ achieving error rate $\normf{\theta-\thetanull}^2\leq 0.99kd/4$ with constant probability, where $\thetanull$ is the ground truth sampled edge connection probability matrix.
\end{theorem}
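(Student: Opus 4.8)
I would prove the contrapositive, reducing the task of distinguishing $P=\SSBM(n,\tfrac{d}{n},\e,k)$ from $Q=\bbG(n,\tfrac{d}{n})$ to learning $\thetanull$, in the same spirit as the proof of \cref{thm:main-theorem-weak-recovery} but with a hypothetical learning algorithm $\cA$ playing the role of the weak-recovery algorithm. Throughout, I would fix $\e$ so that $\e^2 d$ sits just below the Kesten--Stigum threshold --- concretely $\e^2 d = 0.99 k^2$, which forces $d=\Omega(k^2)$ --- since this is the only regime in which the claimed rate $0.99kd/4$ is non-trivial: the trivial estimator $\tfrac{d}{n}\one\one^{\top}$ already achieves squared error $\normf{\tfrac{\e d}{n}M^\circ}^2$, which drops below $0.99kd/4$ once $\e^2 d$ is a sufficiently small constant times $k^2$. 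So suppose toward a contradiction that $\cA$ runs in time $\exp(n^{0.99})$ and, given a graph from a symmetric $k$-block model, outputs $\theta\in[0,1]^{n\times n}$ with $\normf{\theta-\thetanull}^2\le 0.99kd/4$ with constant probability; note that a learning algorithm is not handed $\e$ or $d$, so its guarantee holds at every average degree, which is what will let us run it on subsampled graphs.

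The crux is the identity $\thetanull=\tfrac{\e d}{n}M^\circ+\tfrac{d}{n}\one\one^{\top}$ from \cref{def:para-sbm}, where $M^\circ\in\Set{1-\tfrac1k,\,-\tfrac1k}^{n\times n}$ is the community membership matrix. From the output $\theta$ of $\cA$ I would form $\hat M_0:=\theta-\tfrac{d}{n}\one\one^{\top}$ (the reduction is given $d$, or estimates it from the average degree up to negligible Frobenius error), so that $\normf{\hat M_0-\tfrac{\e d}{n}M^\circ}^2=\normf{\theta-\thetanull}^2\le 0.99kd/4$. Because the community sizes concentrate around $n/k$, with high probability $\normf{M^\circ}^2=(1\pm o(1))\tfrac{n^2(k-1)}{k^2}$, so $\normf{\tfrac{\e d}{n}M^\circ}^2=(1\pm o(1))\tfrac{\e^2 d^2(k-1)}{k^2}$, which in the chosen regime exceeds the permitted error $0.99kd/4$ by a factor $1/\alpha$ with $\alpha<1$ a fixed constant. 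Hence, on the success event of $\cA$, $\normf{\hat M_0-\tfrac{\e d}{n}M^\circ}^2\le\alpha\,\normf{\tfrac{\e d}{n}M^\circ}^2$, and the elementary inequality ``$\normf{A-B}^2\le\alpha\normf{B}^2$ implies $\iprod{A,B}\ge\sqrt{1-\alpha}\,\normf{A}\normf{B}$'' yields $\iprod{\hat M_0,M^\circ}\ge\Omega(1)\cdot\normf{\hat M_0}\normf{M^\circ}$. In other words, $\cA$ furnishes a recovery algorithm of rate $\Omega(1)$, far above the target rate $n^{-0.49}$ of \cref{def:weak-recovery}.

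From here I would rerun the proof of \cref{thm:main-theorem-weak-recovery} (and, for polylogarithmic $k$, that of \cref{thm:main-theorem-super-constant-blocks}) essentially verbatim, with $\hat M_0$ as the recovery output. Concretely, I would instantiate \cref{alg:reduction-test-learning} --- the analogue of \cref{alg:reduction-test-recovery} --- subsampling $G$ with retention probability $1-\eta$ for a small constant $\eta\le 0.001$ (so the subsampled graph $G_1$ is distributed as $\SSBM$, resp.\ $\bbG$, at average degree $d'=(1-\eta)d\in[0.999d,d]$, which is exactly why the hypothesis quantifies over such $d'$), running $\cA$ on $G_1$, subtracting $\tfrac{d'}{n}\one\one^{\top}$, regularizing into the set $\cK$ via the correlation-preserving projection (\cref{lem:corr-preserve-proj}), and returning $g(Y)=\iprod{\hat M,\,Y_2-\tfrac{\eta d}{n}\one\one^{\top}}$ with $Y_2$ the adjacency matrix of $G\setminus E(G_1)$. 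By \cref{lem:lb_sbm} and \cref{lem:ub_ER} applied with recovery rate $\delta=\Omega(1)$, $g(Y)\ge n^{0.5+\Omega(1)}$ with constant probability under $P$ while $\abs{g(Y)}\le n^{0.5+o(1)}$ with probability $1-\exp(-n^{\Omega(1)})$ under $Q$, so thresholding $g$ produces an $\exp(n^{0.99})$-time test $f$ with $\RPQ(f)\ge\exp(n^{\Omega(1)})$. Since $\e^2 d\le 0.99k^2$ is below the KS threshold and $k^2\le n^{o(1)}$, \cref{thm:ldlr-sbm} caps the degree-$n^{0.99}$ value of this testing problem by $\exp(k^2)\le\exp(n^{o(1)})$, so \cref{conj:eldlr} (assumed for $P,Q$ at every $d'\in[0.999d,d]$) forces $\RPQ(f)\le\exp(n^{o(1)})$ --- the desired contradiction.

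The main thing requiring care is the conversion in the second paragraph: it has to be performed at a single $\e$ that is simultaneously below the KS threshold (so the low-degree lower bound \cref{thm:ldlr-sbm} applies) and has signal $\normf{\tfrac{\e d}{n}M^\circ}^2$ exceeding the permitted error $0.99kd/4$ by a constant factor bounded away from $1$ (so that $\Omega(1)$ correlation is genuinely forced). The constant $0.99kd/4$ in the statement is calibrated precisely so that this window of $\e$ is non-empty: using $\normf{M^\circ}^2=\tfrac{n^2(k-1)}{k^2}$, at $\e^2 d=0.99k^2$ the signal is $\approx 0.99\,d(k-1)\approx kd$ against an error of about $kd/4$, so one may take $\alpha\le\tfrac12$. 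Pinning this down is the one place where a short explicit computation is needed; everything downstream --- the correlation-preserving projection, the Bernstein and decoupling estimates internal to \cref{lem:lb_sbm} and \cref{lem:ub_ER}, and the bookkeeping around the constant-probability success event and the community-size concentration --- is a black-box reuse of the machinery already set up for \cref{thm:main-theorem-weak-recovery}.
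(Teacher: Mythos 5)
Your high-level reduction is the same as the paper's: the identity $\thetanull=\tfrac{\e d}{n}M^\circ+\tfrac{d}{n}\one\one^{\top}$ converts a good estimator of $\thetanull$ into a constant-rate estimator of $M^\circ$ just below the Kesten--Stigum threshold (the paper isolates this as \cref{lem:reduction-learning-recovery}), and then the subsample-and-cross-validate testing machinery of \cref{alg:reduction-test-recovery} and \cref{thm:full-main-theorem-weak-recovery} is invoked. Your calibration of $\e^2 d=0.99k^2$ against the $0.99kd/4$ budget, and the clean inequality $\normf{A-B}^2\le\alpha\normf{B}^2\Rightarrow\iprod{A,B}\ge\sqrt{1-\alpha}\,\normf{A}\normf{B}$, are a compact way to get $\Omega(1)$ correlation; the paper's \cref{lem:reduction-learning-recovery} reaches the same conclusion by unwinding the triangle inequality directly.

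There is, however, one genuine gap in your claim that ``everything downstream is a black-box reuse'' of \cref{lem:lb_sbm} and \cref{lem:ub_ER}. Those lemmas control the decoupling error $\iprod{W_2-\tilde W_2,\hat M}$ by Cauchy--Schwarz in Frobenius norm, giving a bound of order $\sqrt{d^3 n^{1+\delta_1}/k}$; comparing to the signal $\Omega(\eta n\sqrt d)$ at $\e^2 d\approx k^2$ and $\delta=\Omega(1)$, the Frobenius route requires roughly $d^2\ll n^{1-\delta_1}k$. That is automatic when $d\le n^{o(1)}$, the regime of \cref{thm:main-theorem-weak-recovery}, but \cref{thm:lb-edge-probability} permits $d$ as large as $o(n)$, where the Frobenius bound simply loses. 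The paper's proof therefore does \emph{not} reuse the lemma verbatim: it replaces the Frobenius Cauchy--Schwarz by a spectral estimate, invoking \cref{lem:spectral-concentration-sbm} to bound $\normop{W_2-\tilde W_2}\lesssim\sqrt{d\log n}\cdot\sqrt{d/n}$ and then pairing this with the trace and entrywise constraints encoded in $\cK$ to get $|\iprod{W_2-\tilde W_2,\hat M}|\le O\bigl(d\sqrt{n\log n}/\delta\bigr)$, which beats the signal for all $d\le o(n)$. You should make this substitution explicit rather than asserting the lemmas apply unchanged; otherwise your argument only covers $d\le n^{o(1)}$, which is weaker than the stated theorem.
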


The reduction that we consider is the following.

\begin{algorithmbox}[Reduction from testing to learning]
    \label{alg:reduction-test-learning}
    \mbox{}\\
    \textbf{Input:} A random graph $G$ with equal probability sampled from \Erdos-\Renyi model or stochastic block model. \\
    \textbf{Output:} Testing statistics $g(Y)\in \R$, where $Y$ is the centered adjacency matrix\\
    \textbf{Algorithm:} 
    \begin{enumerate}[1.]
        \item Obtain subgraph $G_1$ by subsampling each edge with probability $1-\eta=0.999$, and let $G_2= G\setminus G_1$. 
        \item Run learning algorithm on $G_1$, and obtain estimator $\hat{\theta}\in \R^{n\times n}$
        \item Obtain $\hat{M}$ by running correlation preserving projection on $\hat{\theta}-\frac{d}{n}\Ind \Ind^{\top}$ to the set $\cK=\Set{M\in [-1,1]^{n\times n}: M+\frac{1}{k} \Ind \Ind^{\top} \succeq 0 \,, \Tr(M + \frac{1}{k} \Ind \Ind^{\top}) \leq n}$. 
        \item Construct the testing statistics $g(Y)=\iprod{\hat{M},Y_2-\frac{\eta d}{n}\Ind \Ind^{\top}}$, where $Y_2$ is the adjacency matrix for the graph $G_2$.
    \end{enumerate}
\end{algorithmbox}

Before proving \cref{thm:lb-edge-probability}, we first show the relationship between learning edge connection probability and weak recovery.
 \begin{lemma}\label[lemma]{lem:reduction-learning-recovery}
     Consider the distribution of $\SSBM(n,\frac{d}{n},\e,k)$ with $d\le n^{o(1)}$. 
     Suppose give graph $Y\sim \SSBM(n,\frac{d}{n},\e,k)$, the estimator $\hat{\theta}\in \R^{n\times n}$ achieves error rate $\normf{\hat{\theta}- \thetanull}\leq \frac{1}{2}\sqrt{0.99kd}$ with constant probability, then $\hat{\theta}-d/n$ achieves weak recovery when $\e^2 d\geq 0.99k^2$.
 \end{lemma}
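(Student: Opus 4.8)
The plan is a short, direct computation exploiting the affine structure of the edge connection probability matrix. By \cref{def:para-sbm}, the ground-truth matrix decomposes \emph{exactly} as
\[
\thetanull \;=\; \tfrac{d}{n}\Ind\Ind^\top \;+\; \tfrac{\e d}{n}M^\circ,
\]
where $M^\circ\in\{-1/k,\,1-1/k\}^{n\times n}$ is the community membership matrix: on same-community pairs (diagonal included) the right-hand side reads $\tfrac dn + \tfrac{\e d}{n}(1-\tfrac1k) = (1+\tfrac{(k-1)\e}{k})\tfrac dn$, and on different-community pairs it reads $\tfrac dn - \tfrac{\e d}{n}\cdot\tfrac1k = (1-\tfrac\e k)\tfrac dn$, matching \cref{def:para-sbm} verbatim. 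Hence, writing $\hat M := \hat\theta - \tfrac{d}{n}\Ind\Ind^\top$ (this is the ``$\hat\theta - d/n$'' of the statement), the hypothesis on $\hat\theta$ is equivalent to: on an event $E$ of constant probability, $\normf{\hat M - \tfrac{\e d}{n}M^\circ} \le \tfrac12\sqrt{0.99\,kd}$. I will show $\hat M$ itself witnesses weak recovery on $E$.

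First I would pin down the two norms that matter. Counting ordered pairs, there are $n^2/k$ same-community pairs, on which $(M^\circ_{ij})^2 = (1-\tfrac1k)^2$, and $n^2(1-\tfrac1k)$ different-community pairs, on which $(M^\circ_{ij})^2 = \tfrac1{k^2}$; summing yields $\normf{M^\circ}^2 = \tfrac{n^2(k-1)}{k^2}$. Consequently the ``signal'' has squared Frobenius norm $\normf{\tfrac{\e d}{n}M^\circ}^2 = \tfrac{\e^2 d^2(k-1)}{k^2} \ge 0.99\, d(k-1)$, using the assumption $\e^2 d \ge 0.99 k^2$ (we are above the KS threshold). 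Set $A := \normf{\tfrac{\e d}{n}M^\circ}$, $B := \normf{\hat M - \tfrac{\e d}{n}M^\circ}$, and $C := \normf{M^\circ}$. On $E$ we have $B \le \tfrac12\sqrt{0.99\,kd}$, while (for $k\ge 2$, so $k-1\ge k/2$) $A \ge \sqrt{0.99\,d(k-1)} \ge \tfrac1{\sqrt2}\sqrt{0.99\,kd} \ge \sqrt2\,B$.

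Finally I would read off the correlation. Splitting $\iprod{\hat M, M^\circ} = \tfrac{\e d}{n}\normf{M^\circ}^2 + \iprod{\hat M - \tfrac{\e d}{n}M^\circ,\, M^\circ}$ and using $\e,d>0$ (so the first term equals $AC$) together with Cauchy--Schwarz on the second term gives $\iprod{\hat M, M^\circ} \ge (A-B)\,C$, while the triangle inequality gives $\normf{\hat M} \le A+B$, and the reverse triangle inequality gives $\normf{\hat M}\ge A-B>0$ (so $\hat M\ne 0$ on $E$; off $E$ one may output any fixed nonzero matrix). Therefore, on $E$,
\[
\frac{\iprod{\hat M, M^\circ}}{\normf{\hat M}\,\normf{M^\circ}} \;\ge\; \frac{A-B}{A+B} \;\ge\; \frac{\sqrt2-1}{\sqrt2+1} \;=\; 3-2\sqrt2 \;=\; \Omega(1),
\]
which is exactly weak recovery in the sense of \cref{def:weak-recovery}. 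I do not expect a real obstacle here; the only points needing care are (i) that the decomposition of $\thetanull$ holds exactly, diagonal entries included, so no lower-order correction is ever needed, and (ii) tracking the numerical constant so that the error the learner is promised, $\tfrac12\sqrt{0.99\,kd}$, stays strictly below the signal norm $A$ — this is precisely why the learning-error threshold in \cref{thm:lb-edge-probability} carries the factor $1/4$ (equivalently $1/2$ before squaring), and it also uses implicitly that $k\ge 2$ so that $M^\circ\neq 0$.
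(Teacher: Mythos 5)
Your proof is correct and follows the same route as the paper's: decompose $\hat\theta - \tfrac{d}{n}\Ind\Ind^\top$ into the exact signal $\tfrac{\e d}{n}M^\circ$ plus the estimation error $\hat\theta - \thetanull$, then lower-bound the correlation via Cauchy--Schwarz on the error term and the triangle inequality on $\normf{\hat\theta - \tfrac{d}{n}\Ind\Ind^\top}$. You track constants slightly more carefully than the paper (using $\normf{M^\circ}^2 \approx \tfrac{n^2(k-1)}{k^2}$ directly rather than the looser bound $\normf{M^\circ}\geq \tfrac{n}{\sqrt k}(1-\tfrac1k)$), which lets the argument go through cleanly at the stated threshold $\e^2 d\geq 0.99k^2$ with explicit constant $3-2\sqrt2$, rather than invoking $\e^2 d > 0.999 k^2$ as the paper's write-up does.
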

\begin{proof}
By the relation between edge connection probability matrix $\thetanull$ and the community matrix $M^\circ$, We have
    \begin{equation*}
        \iprod{\hat{\theta}-\frac{d}{n}\Ind \Ind^\top,M^\circ}=\iprod{\hat{\theta}-\theta^\circ,M^\circ}+\iprod{\theta^\circ-\frac{d}{n}\Ind \Ind^\top,M^\circ}=\iprod{\hat{\theta}-\theta^\circ,M^\circ}+\iprod{\frac{\e d}{n}M^\circ,M^\circ}\,.
    \end{equation*}
    For the first term, since with constant probability, $\normf{\hat{\theta}-\theta^\circ}\leq \sqrt{0.99kd}$, we have
    \begin{equation*}
      \Abs{\iprod{\hat{\theta}-\theta^\circ,M^\circ}}\leq \normf{M^\circ}\normf{\hat{\theta}-\theta^\circ}\leq 
        \normf{M^\circ} \sqrt{0.99kd}\,.
    \end{equation*}
    For the second term, since with overwhelming high probability, $\normf{M^\circ}\geq \frac{n}{\sqrt{k}}(1-\frac{1}{k})$, we have
    \begin{equation*}
        \iprod{\frac{\e d}{n}M^\circ,M^\circ}=\frac{\e d}{n}\normf{M^\circ}^2\geq \frac{\e d }{2\sqrt{k}} \normf{M^\circ}\,.
    \end{equation*}
    Therefore, when $\e^2 d> 0.999 k^2$, we have 
    \begin{equation*}
        \iprod{\hat{\theta}-\frac{d}{n}\Ind \Ind^{\top},M^\circ}\geq \frac{\e d }{2\sqrt{k}} \normf{M^\circ}-\normf{M^\circ} \frac{\sqrt{0.99kd}}{2}\geq \Omega\Paren{\frac{\e d \normf{M^\circ}}{\sqrt{k}}} \,.
    \end{equation*}
    On the other hand, by triangle inequality
    \begin{equation*}
        \Normf{\hat{\theta}-\frac{d}{n}\Ind \Ind^{\top}}\leq  \Normf{\hat{\theta}-\theta^\circ}+ \Normf{\theta^\circ-\frac{d}{n}\Ind \Ind^{\top}}\leq O(\sqrt{kd}+\frac{\e d}{\sqrt{k}}) \leq O\Paren{\e d/\sqrt{k}}\,,
    \end{equation*}
Therefore we have 
\begin{equation*}
    \iprod{\hat{\theta}-\frac{d}{n}\Ind \Ind^{\top},M^\circ}\geq \Omega(\normf{M^\circ}\cdot \normf{\hat{\theta}-\frac{d}{n}\Ind \Ind^{\top}})\,.
\end{equation*}
    We thus conclude that with constant probability, $\hat{\theta}-\frac{d}{n}\Ind \Ind^\top$ achieves weak recovery when $\e^2 d\geq 0.99k^2$.
\end{proof}
With \cref{lem:reduction-learning-recovery}, the proof of lower bound for learning the edge connection probability matrix of stochastic block model follows as a corollary.
\begin{proof}[Proof of \cref{thm:lb-edge-probability}]
    By \cref{lem:reduction-learning-recovery}, suppose an $\exp\Paren{n^{0.99}}$ time algorithm achieves error rate less than $0.99\sqrt{kd}$ in estimating the edge connection probability matrix, then in \cref{alg:reduction-test-learning}, $\hat{\theta}-\frac{d}{n}$ achieves weak recovery when $\e^2 d=0.99k^2$.
    We let $f(Y)=\mathbf{1}_{g(Y)\geq 0.001 \e^2 d^2/k}$. 

    We show that with constant probability under $P$, we have $f(Y)=1$.    
    We essentially follow the proof of \cref{lem:lb_sbm} with $\delta$ taken as a constant, except that we take a different strategy for bounding
    $\iprod{W_2-\tilde{W}_2, \hat{M}}$.
    By \cref{lem:spectral-concentration-sbm}, we have, with probability at least $1-o(1)$, the following spectral radius bounds on the symmetric random matrices
\begin{equation*}
    \normop{W_2-\tilde{W}_2}\leq O\Paren{\sqrt{d\log(n)}\cdot \sqrt{\frac{d}{n}}}\,.
\end{equation*}
Therefore, by Trace inequality, we have
\begin{equation*}
\begin{split}
|\iprod{W_2-\tilde{W}_2, \hat{M}}|
& = |\iprod{W_2-\tilde{W}_2, \hat{M}+\frac{1}{k\delta}\Ind \Ind^{\top}} - \iprod{W_2-\tilde{W}_2, \frac{1}{k\delta}\Ind \Ind^{\top}}| \\
& \leq |\iprod{W_2-\tilde{W}_2, \hat{M}+\frac{1}{k\delta}\Ind \Ind^{\top}}| + |\iprod{W_2-\tilde{W}_2, \frac{1}{k\delta}\Ind \Ind^{\top}}| \\
& \leq \normop{W_2-\tilde{W}_2} \Tr(\hat{M}+\frac{1}{k\delta}\Ind \Ind^{\top}) + \normop{W_2-\tilde{W}_2} \Tr(\frac{1}{k\delta}\Ind \Ind^{\top}) \\
& \leq O\Paren{\sqrt{d\log(n)}\cdot \sqrt{\frac{d}{n}} (1+\frac{1}{k})\frac{n}{\delta}}\\
& = O\Paren{(d+\frac{d}{k})\frac{\sqrt{n\log(n)}}{\delta}} \,.
\end{split}
\end{equation*}

    With the same reasoning, by \cref{lem:ub_ER}, with probability at least $1-\exp(-n^{0.001})$ under distribution $Q$, we have $f(Y)=0$. 
    Therefore, we have $\RPQ(f)\geq \exp(n^{0.001})$. 
    Since $f(A)$ can be evaluated in $O\Paren{\exp\Paren{n^{0.99}}}$ time, assuming conjecture \ref{conj:low-degree} we have
   \begin{equation*}
       R_{P,Q}(f)\coloneqq \frac{\E f(A)}{\sqrt{\text{Var}_Q(f(A))}} \lesssim \max_{\text{deg}(f)\leq n^{0.99}}\frac{\E f(A)}{\sqrt{\text{Var}_Q(f(A))}}\,.
   \end{equation*}
    On the other hand, by low-degree lower bound stated in \cref{thm:ldlr-sbm}, we have 
    \begin{equation*}
       \max_{\text{deg}(f)\leq n^{0.99}}\frac{\E f(A)}{\sqrt{\text{Var}_Q(f(A))}}\leq \exp(k^2)\,. 
    \end{equation*}
Since we have $\exp(n^{0.001})\gg\exp(k^2)$ when $k\leq n^{o(1)}$, this leads to a contradiction. 
\end{proof}

\subsection{Computational lower bound for learning graphon}
In this part, we give formal proof of \cref{thm:lb-learning-graphon}. 

\begin{theorem}[Restatement of \cref{thm:lb-learning-graphon}]
    Let $k,d\in \N^+$ be such that $k\leq O(1), d\leq o(n)$.
    Assume that Conjecture \ref{conj:low-degree} holds with distribution $P$ given by $\SSBM(n,\frac{d}{n},\e,k)$ and distribution $Q$ given by \Erdos-\Renyi graph model $\bbG(n, \frac{d}{n})$. 
    Then no $\exp\Paren{n^{0.99}}$ time algorithm can output a $\poly(n)$-block graphon function $\hat{W}:[0,1]\times [0,1]\to [0,1]$ such that $\GW(\hat{W},\Wnull) \leq \frac{d}{3n}\sqrt{\frac{k}{d}}$  with $1-o(1)$ probability under distribution $P$ and distribution $Q$(where $\Wnull$ is the underlying graphon of the corresponding distribution).
\end{theorem}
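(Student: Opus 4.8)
\emph{Proof plan.} The plan is to reduce the testing problem between $P=\SSBM(n,\tfrac{d}{n},\e,k)$ and $Q=\bbG(n,\tfrac{d}{n})$ to block-graphon estimation, in the spirit of the reduction behind \cref{thm:main-theorem-weak-recovery} but without any regularization step. Write $W_1$ for the block graphon underlying $P$ (\cref{def:graphon}) and observe that the graphon $\Wnull$ underlying $Q$ is the constant function $W_0\equiv\tfrac{d}{n}$. The observation that makes the reduction clean is that, because $W_0$ is constant, every measure-preserving bijection $\phi$ leaves the integral $\int_0^1\int_0^1\bigl(W_0(\phi(x),\phi(y))-W(x,y)\bigr)^2\,dx\,dy$ unchanged, so for \emph{every} graphon $W$ one has $\GW(W,W_0)=\norm{W-\tfrac{d}{n}}_{L^2([0,1]^2)}$ with no residual optimization over $\phi$. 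Two things follow. First, for a $\poly(n)$-block graphon $\hat W$ the quantity $\GW(\hat W,W_0)$ is a finite weighted sum of squares over the blocks, hence computable in $\poly(n)$ time. Second, the separation $\GW(W_0,W_1)=\norm{W_1-\tfrac{d}{n}}_{L^2}$ is an elementary computation: summing $\tfrac{1}{k^2}\bigl(\Bnull_{ab}-\tfrac{d}{n}\bigr)^2$ over the $k$ diagonal and $k(k-1)$ off-diagonal blocks of $\Bnull$ gives $\GW(W_0,W_1)=\tfrac{\e d}{n}\cdot\tfrac{\sqrt{k-1}}{k}=\Theta\Paren{\tfrac{\e d}{n\sqrt{k}}}$, which is at least $\tfrac{d}{n}\sqrt{\tfrac{0.99k}{d}}$ for the parameters we use, i.e.\ with $\e^2 d$ just below the Kesten--Stigum threshold $k^2$.

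Granting the above, I would argue by contradiction. Suppose some $\exp(n^{0.99})$-time algorithm $\cA$, given a graph from an arbitrary symmetric $k$-SBM, outputs a $\poly(n)$-block graphon $\hat W$ with $\GW(\hat W,\Wnull)\le\tfrac{d}{3n}\sqrt{k/d}$ with probability $1-o(1)$. Fix a threshold $\tau$ with $\tfrac{d}{3n}\sqrt{k/d}<\tau<\GW(W_0,W_1)-\tfrac{d}{3n}\sqrt{k/d}$, which is possible since $\GW(W_0,W_1)>\tfrac{2d}{3n}\sqrt{k/d}$ in our parameter range, and set $f(Y)=\mathbf{1}\{\GW(\hat W,W_0)>\tau\}$ where $\hat W=\cA(G)$. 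Under $P$ we have $\Wnull=W_1$, so with probability $1-o(1)$ the output satisfies $\GW(\hat W,W_1)\le\tfrac{d}{3n}\sqrt{k/d}$, and the triangle inequality for the pseudometric $\GW$ gives $\GW(\hat W,W_0)\ge\GW(W_0,W_1)-\tfrac{d}{3n}\sqrt{k/d}>\tau$, hence $f(Y)=1$. Under $Q$ we have $\Wnull=W_0$, so with probability $1-o(1)$ we get $\GW(\hat W,W_0)\le\tfrac{d}{3n}\sqrt{k/d}<\tau$, hence $f(Y)=0$. Therefore $\E_{Y\sim P}f(Y)=1-o(1)$ and $\E_{Y\sim Q}f(Y)=o(1)$, so $\Var_{Y\sim Q}(f(Y))=\E_Q f\,(1-\E_Q f)=o(1)$ and $\RPQ(f)=\tfrac{\E_P f-\E_Q f}{\sqrt{\Var_Q(f)}}=\omega(1)$.

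To conclude, note that $f$ takes values in $\{0,1\}$ and is evaluable in time $\exp(n^{0.99})$ (run $\cA$, then compute a sum of squares over the $\poly(n)$ blocks of $\hat W$). Applying \cref{conj:low-degree} exactly as in the proof of \cref{thm:main-theorem-weak-recovery} gives $\RPQ(f)\lesssim\max_{\text{deg}(g)\le n^{0.99}}\RPQ(g)$, and \cref{thm:ldlr-sbm} bounds the right-hand side by $O(\exp(k^2))=O(1)$ since $k\le O(1)$ and $\e^2 d$ is below the Kesten--Stigum threshold. This contradicts $\RPQ(f)=\omega(1)$, which proves the theorem.

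I expect the delicate step to be not the reduction itself but calibrating the parameter window: one needs $\e^2 d$ simultaneously below the Kesten--Stigum threshold (so that \cref{thm:ldlr-sbm} applies and testing is conjecturally hard) and large enough that the separation $\GW(W_0,W_1)=\Theta\Paren{\tfrac{\e d}{n\sqrt{k}}}$ exceeds twice the estimation radius $\tfrac{d}{3n}\sqrt{k/d}$; the factor-$3$ in the statement is what makes both inequalities compatible, and for the theorem to witness a genuine computational--statistical gap against Klopp's minimax rate $\asymp\tfrac{1}{n}\sqrt{d\log k}$ one additionally needs $k$ to be a sufficiently large constant, so that $\sqrt{\log k}\ll\sqrt{k}$. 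Two smaller points to verify are that $\GW$ obeys the triangle inequality (it is a pseudometric on graphons: compose near-optimal measure-preserving bijections and change variables) and that $\GW(\hat W,W_0)$ is \emph{exactly} $\norm{\hat W-\tfrac{d}{n}}_{L^2}$ with no leftover minimization, which is both what makes $f$ polynomial-time evaluable and the one place where it is essential that the null graphon $W_0$ is constant.
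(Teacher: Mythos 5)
Your proof is correct and follows essentially the same route as the paper: construct the indicator test statistic $f(Y)=\mathbf 1\{\GW(\hat W,W_0)\gtrless\text{threshold}\}$, show it separates $P$ from $Q$ with $1-o(1)$ probability via the triangle inequality for $\GW$ together with the separation bound $\GW(W_0,W_1)\gtrsim\frac{d}{n}\sqrt{k/d}$, observe that constancy of the null graphon $W_0$ collapses $\GW(\cdot,W_0)$ to an $L^2$-norm computable in $\poly(n)$ time, and then invoke \cref{conj:low-degree} with \cref{thm:ldlr-sbm} to reach a contradiction. You are in fact slightly more careful than the paper: you explicitly verify the triangle-inequality direction and the factor-of-3 bookkeeping, whereas the paper's statement of the indicator direction under $P$ vs.\ $Q$ is internally inconsistent between \cref{sec:techniques} (threshold $\frac{3d}{n}\sqrt{k/d}$) and the proof in \cref{sec:lb-learning} (threshold $\frac{d}{3n}\sqrt{k/d}$, with the cases $f=1$/$f=0$ assigned to the opposite distributions than the geometry warrants). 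Your computation $\GW(W_0,W_1)=\frac{\e d}{n}\cdot\frac{\sqrt{k-1}}{k}$ is the exact separation and shows the needed inequality $\GW(W_0,W_1)>\frac{2d}{3n}\sqrt{k/d}$ holds for all $k\ge 2$, so no additional largeness of $k$ is required for the reduction itself (only for the statistical gap against the minimax rate, as you note).
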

\begin{proof}
Let $W_0$ be the graphon function underlying the distribution $\bbG(n,\frac{d}{n})$ and $W_1$ be the graphon function underlying the distribution $\SSBM(n,\frac{d}{n},\e,k)$, we have $\GW(W_0,W_1)\geq \frac{d}{n}\sqrt{\frac{0.99k}{d}}$ when $\e^2 d\geq 0.99k^2$. 

Now suppose there is a polynomial time algorithm, which given random graph $G$ sampled from an arbitrary symmetric $k$-stochastic block model, outputs an $n$-block graphon function $\hat{W}:[0,1]\times [0,1]\to [0,1]$ achieving error $\frac{d}{3n}\sqrt{\frac{k}{d}}$ with probability $1-o(1)$.
Then one can construct the testing statistics by taking
\begin{equation*}
f(Y) =
\begin{cases}
    1, & \text{if } \GW(\hat{W}, W_0) \leq \frac{d}{3n} \sqrt{\frac{k}{d}} \\
    0, & \text{otherwise}
\end{cases}
\end{equation*}
We have $f(Y)=1$ with probability $1-o(1)$ under the distribution of symmetric stochastic block model $\SSBM(n,\frac{d}{n},\e,k)$.
By triangle inequality, we have $f(Y)=0$ with probability $1-o(1)$ under the distribution $\bbG(n,\frac{d}{n})$. 
Therefore we have $\RPQ(f)\geq \omega(1)$.

Now since the function $\hat{W}$ can be represented as a symmetric matrix with $\poly(n)$ number of rows and columns, and moreove since $W_0$ is a constant function,
\begin{equation*}
    \GW(\hat{W},W_0)= \int_0^1 \int_0^1 (\hat{W}(x,y)-W_0(x,y))^2 dx dy\,.
\end{equation*}
Therefore, the function $f(\cdot )$ can be evaluated in polynomial time. 
This contradicts the low-degree lower bound (\cref{thm:ldlr-sbm}) assuming \cref{conj:low-degree}.
\end{proof}

\section{Low-degree recovery lower bound for learning dense stochastic block model}
\label{sec:low-degree-reduction}
In this part, we give unconditional lower bound against low-degree polynomial estimators for the edge connection probability matrix in stochastic block model, via implementing reduction from hypothesis testing to weak recovery using low-degree polynomials.
For simplicity, we focus on the dense graph.
\begin{theorem}[Low-degree lower bound for learning]\label{thm:low-degree-graphon}
Let $n\in \N^+$ and $\ell\leq n^{0.001}$.
Let $d=\Theta(n)$.
    Let $\cF_{n,\ell}$ be the set of degree-$\ell$ polynomials mapping from $n\times n$ symmetric matrices to $n\times n$ symmetric matrices.
    Suppose $\thetanull\in [0,1]^{n\times n}, Y\in \Set{0,1}$ are edge connection probability matrix and adjacency matrix sampled from symmetric stochastic block model $\SSBM(n,\frac{d}{n},\e,k)$.
    Then for $k\leq n^{0.001}$, we have 
    \begin{equation*}
       \min_{f\in \cF_{n,\ell}}\max_{\e\in [0,1]} \E_{(Y,\thetanull)\sim \SSBM(n,\frac{d}{n},\e,k)} \normf{f(Y)-\thetanull}^2\geq \Omega(k\cdot n)\,.
    \end{equation*}
\end{theorem}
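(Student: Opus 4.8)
The plan is to establish a low‑degree version of the reduction behind \cref{thm:lb-edge-probability}: a degree‑$\ell$ polynomial estimator of the edge connection probability matrix that beats error $\Omega(kn)$ below the Kesten--Stigum threshold would produce a degree‑$(\ell+1)$ polynomial distinguishing $\SSBM(n,\frac dn,\e,k)$ from $\bbG(n,\frac dn)$, contradicting (a dense, diverging‑$k$ version of) the low‑degree likelihood‑ratio upper bound \cref{thm:ldlr-sbm}. Throughout, fix $\e_0$ with $\e_0^2 d=0.99\,k^2$; since $d=\Theta(n)$ and $k\le n^{0.001}$ we have $\e_0\le 1$, and $\e_0$ lies a constant factor below the Kesten--Stigum threshold, so the relevant upper bound applies at $\e_0$ (and trivially at $\e=0$). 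Suppose for contradiction that some $f\in\cF_{n,\ell}$ satisfies $\E_{\SSBM(n,\frac dn,\e,k)}\normf{f(Y)-\thetanull}^2<c\,kn$ for every $\e\in[0,1]$, where $c>0$ is a small absolute constant.

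The first step is to pass from estimation to recovery without leaving the class of low‑degree polynomials. Writing $\thetanull=\frac dn\one\one^{\top}+\frac{\e_0 d}{n}M^\circ$ entrywise, the matrix $\hat M_0:=f(Y)-\frac dn\one\one^{\top}$ satisfies $\E\normf{\hat M_0-\frac{\e_0 d}{n}M^\circ}^2<c\,kn$. In the dense regime the ``signal'' $\frac{\e_0 d}{n}\normf{M^\circ}^2=\Theta(kn)$ dominates this error, so --- in contrast to \cref{lem:reduction-learning-recovery}, which uses the SDP‑based correlation‑preserving projection of \cref{thm:correlation-preserving-projection} (not itself a low‑degree polynomial) --- the \emph{raw} degree‑$\ell$ matrix $\hat M_0$ already satisfies $\E\iprod{\hat M_0,M^\circ}=\Omega(\normf{M^\circ}^2)$ and $\E\normf{\hat M_0}^2=\Theta(\normf{M^\circ}^2)$, by a triangle inequality and Cauchy--Schwarz. (Equivalently, one can phrase this as a low‑degree recovery lower bound for $M^\circ$ via the Pythagorean identity $\min_{f\in\cF_{n,\ell}}\E\normf{f(Y)-\thetanull}^2=\frac{\e_0^2 d^2}{n^2}\bigl(\E\normf{M^\circ}^2-\sum_{ij}\norm{\Pi_{\le\ell}\E[M^\circ_{ij}\mid Y]}^2\bigr)$, where $\Pi_{\le\ell}$ is the $L^2$‑projection onto degree‑$\le\ell$ polynomials.)

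The second step is the cross‑validation reduction of \cref{alg:reduction-test-learning} with every operation realized as a low‑degree polynomial (with auxiliary Bernoulli randomness in its coefficients). Subsample each edge of $Y$ with probability $1-\eta$ to get $Y_1$, and set $Y_2=Y-Y_1$; add fresh \Erdos--\Renyi edges to $Y_1$ at a community‑independent rate so that $Y_1'$ is distributed as $\SSBM(n,\frac dn,\e',k)$ with $\e'\le\e_0$ still below threshold. Both operations are degree $1$ in $Y$, so $\hat M_0:=f(Y_1')-\frac dn\one\one^{\top}$ is degree $\ell$ in $Y$, and since $Y_1'$ lies in the exact family on which $f$ is guaranteed we may invoke the estimation bound --- including at $\e=0$, where $Y_1'\sim\bbG(n,\frac dn)$. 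Put $g(Y):=\iprod{\hat M_0,\,Y_2-\frac{\eta d}{n}\one\one^{\top}}$, a degree‑$(\ell+1)$ polynomial. By a conditioning argument as in the proofs of \cref{lem:lb_sbm,lem:ub_ER}, using \cref{lem:decoupling} to replace the noise part of $Y_2$ by a copy $\tilde W_2$ independent of $\hat M_0$, and using that $g$ is \emph{linear} in the held‑out part $Y_2$ --- so that its conditional variance is at most $\normf{\hat M_0}^2\cdot O(d/n)$ --- one obtains $\E_P g\ge\Omega(kn)$ (the signal term dominates), $\E_Q g=o(kn)$, and $\Var_Q(g)\le O(d/n)\cdot\E_Q\normf{\hat M_0}^2\le O(c\,kn)$, where the last bound uses exactly the learner's second‑moment guarantee under the null. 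Hence $\RPQ(g)=\Omega(\sqrt{kn})=n^{\Omega(1)}$, which for $\ell,k\le n^{0.001}$ exceeds the upper bound on $\norm{\mu^{\le\ell+1}}$ below the Kesten--Stigum threshold. This contradiction shows that for every $f\in\cF_{n,\ell}$ there is an $\e$ with $\E\normf{f(Y)-\thetanull}^2\ge\Omega(kn)$, which is the claim.

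The main obstacle is carrying out the recovery‑to‑testing reduction with genuine low‑degree polynomials while keeping $\RPQ(g)=n^{\Omega(1)}$ large enough to contradict the testing bound for all $k\le n^{0.001}$. Two devices replace tools used in the conditional proofs: the dense‑regime observation that $\hat M_0$ already correlates with $M^\circ$ at a constant rate, so that the non‑polynomial correlation‑preserving projection is unnecessary; and the choice of the \emph{bilinear} statistic $g$, linear in the fresh sample $Y_2$, whose null variance is governed by a \emph{second} moment of $\hat M_0$ --- precisely what the estimation guarantee controls --- rather than by a fourth moment that would force a non‑polynomial truncation. The remaining delicate points are that the decoupling error of \cref{lem:decoupling} does not shrink when $d=\Theta(n)$, so $\iprod{\hat M_0,W_2-\tilde W_2}$ must be bounded by a Bernstein/trace concentration argument exploiting that $W_2-\tilde W_2$ has small, nearly independent entries independent of $\hat M_0$; and checking that $\ell,k\le n^{0.001}$ places the required low‑degree likelihood‑ratio bound for the dense model with diverging number of blocks (cf.\ \cref{sec:beyond-constant-ldlr}) in the regime where it is $n^{o(1)}$, so that $\Omega(\sqrt{kn})$ contradicts it.
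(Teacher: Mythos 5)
Your proposal has the right high-level idea (cross-validation reduction from learning to testing, and the observation that in the dense regime the raw low-degree estimator $\hat M_0$ already has constant correlation with $M^\circ$ without any projection step, since the signal $\frac{\e_0 d}{n}\normf{M^\circ}^2=\Theta(kd)=\Theta(kn)$ dominates the error $c\,kn$). The second observation is indeed how the paper also avoids the non-polynomial correlation-preserving projection in \cref{lem:expectation-planted}. But the cross-validation \emph{construction} you use --- edge subsampling, as in \cref{alg:reduction-test-learning} --- does not survive in the dense regime $d=\Theta(n)$, and the paper's proof is fundamentally different at this point.

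The gap is in the decoupling step, and it is fatal rather than a technicality to be patched. With edge subsampling, $\hat M_0=f(Y_1')$ and $Y_2$ are both functions of the \emph{same} entries of $Y$, so they are genuinely dependent. The device used in \cref{lem:lb_sbm,lem:ub_ER} is to replace $W_2 = Y_2 - \tfrac{\eta d}{n}\one\one^\top$ by an independent copy $\tilde W_2$ via \cref{lem:decoupling} and absorb the residual $W_2-\tilde W_2$ as a small error. But $W_2-\tilde W_2 = \E[Y_2\mid Y_1]-\eta p\one\one^\top$ is a \emph{deterministic function of $Y_1$} (and the subsampling pattern), and so is $\hat M_0$. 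They are therefore fully dependent, and no Bernstein or spectral concentration argument ``conditionally on $\hat M_0$'' can treat $\iprod{\hat M_0, W_2-\tilde W_2}$ as a sum of independent zero-mean terms --- contrary to your claim that $W_2-\tilde W_2$ has entries ``independent of $\hat M_0$.'' Quantitatively, each entry of $W_2-\tilde W_2$ has magnitude $\Theta(\eta p)=\Theta(\eta)$ when $p=d/n=\Theta(1)$, not the $O((d/n)^{3/2})$ of the sparse regime, so the Cauchy--Schwarz bound $\normf{\hat M_0}\normf{W_2-\tilde W_2}=\Theta(\sqrt{kn})\cdot\Theta(\eta n)=\Theta(\eta n^{1.5}\sqrt k)$ swamps the signal $\Theta(\eta kn)$ for every $k\le n$. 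For the same reason your variance estimate $\Var_Q(g)\le O(d/n)\cdot\E_Q\normf{\hat M_0}^2$ only holds if $\hat M_0\perp Y_2$, which fails here; without that independence the formula is heuristic.

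The paper avoids this entirely by using \emph{vertex} splitting rather than edge splitting (\cref{eq:testing-polynomial}): partition $[2n]$ into $S_1,S_2$, let $X = \tfrac{n}{\e d}\bigl(f(Y_1)-\tfrac dn\bigr)$ depend only on the induced subgraph $Y_1$ on $S_1$, and define $g(Y)=\iprod{(Y_{12}-\tfrac dn)X(Y_{12}-\tfrac dn),\,Y_2-\tfrac dn}$. Under both $P$ and $Q$, conditionally on the community labels the three pieces $Y_1,Y_{12},Y_2$ are genuinely independent, so no decoupling lemma is needed: \cref{lem:variance-null} computes $\E_Q g=0$ exactly and $\Var_Q(g)$ cleanly via Plancherel, and \cref{lem:expectation-planted} gets the planted mean by independence. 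The cost is that the community estimate $X$ lives on $S_1$ while the held-out graph $Y_2$ lives on $S_2$, so the bilinear transfer $(Y_{12}-\tfrac dn)X(Y_{12}-\tfrac dn)$ is needed to move the signal across, pushing the test degree to $\ell+3$ rather than $\ell+1$; this is still $\le n^{0.01}$ when $\ell\le n^{0.001}$, so the conclusion is unaffected. In short: your first observation (no projection needed) is correct and matches the paper, but the edge-subsampling cross-validation that you carry over from the conditional proofs cannot be made to work when $d=\Theta(n)$, and the vertex-splitting construction is the missing idea.
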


\subsection{Construction of the low-degree polynomial}
For simplicity, we define the community matrix of symmetric stochastic block model.
\begin{definition}[Community matrix for stochastic block model]\label{def:community-matrix}
    Under symmetric stochastic block model $\SSBM(n,\frac{d}{n},\e,k)$, we define the community matrix $X^\circ\in \Set{\pm 1}$ as following: $X^\circ(i,j)=1$ if vertex $i,j$ have the same community label and $X^\circ(i,j)=0$ otherwise.
\end{definition}

Given the polynomial function $f:\R^{n\times n}\to \R^{n\times n}$. We consider a graph with $2n$ nodes and randomly partition the nodes into two equal-sized sets $S_1$ and $S_2$.   
Let $X=\frac{n}{\e d} \Paren{f(Y_1)-\frac{d}{n}}$ where $Y_1$ is the subgraph induced by vertices in $S_1$.
We construct the polynomial function $g: \R^{n\times n}\to \R$ as following:
\begin{equation}\label{eq:testing-polynomial}
    g(Y)=\Iprod{\Paren{Y_{12}-\frac{d}{n}}X\Paren{Y_{12}-\frac{d}{n}}, Y_{2}-\frac{d}{n}}\,,
\end{equation}
where $Y_{12}\in \R^{n\times n}$ is the adjacency matrix of the bipartite graph between vertices in $S_1$ and $S_2$, and $Y_2$ is the adjacency matrix of the induced subgraph supported on $S_2$.

We show the lower bound of this polynomial under the symmetric stochastic block model, and the upper bound of this polynomial under the \Erdos-\Renyi graph model.
\begin{lemma}\label[lemma]{lem:expectation-planted}
    Let $\thetanull, Y$ be the edge connection probability matrix and adjacency matrix sampled from the planted distribution $\SSBM(n,\frac{d}{n},\e,k)$. 
    Let $X_1^\circ$ be the community matrix of the subgraph induced by vertices in $S_1$.
    Suppose in \cref{eq:testing-polynomial}, $\E\normf{X-X^\circ}^2\leq o(n^2)$, then we have $\E g(Y)\geq \Paren{\frac{\epsilon d}{n}}^3 n^4$. 
\end{lemma}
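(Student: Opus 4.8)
The plan is to compute $\E g(Y)$ by conditioning on the vector $\sigma$ of community labels and exploiting the independence of the three edge blocks of the $2n$-vertex graph. Set $\bar Y_{12}:=Y_{12}-\tfrac dn\one\one^{\top}$ and $\bar Y_{2}:=Y_{2}-\tfrac dn\one\one^{\top}$, so $g(Y)=\iprod{\bar Y_{12}^{\top}X\bar Y_{12},\bar Y_{2}}$. Given $\sigma$, the edges inside $S_1$, the bipartite edges between $S_1$ and $S_2$, and the edges inside $S_2$ are mutually independent; hence $X=X(Y_1)$, $\bar Y_{12}$ and $\bar Y_{2}$ are conditionally independent given $\sigma$, and
\[
    \E\!\big[g(Y)\,\big|\,\sigma\big]=\Iprod{\;\E\!\big[\bar Y_{12}^{\top}X\bar Y_{12}\,\big|\,\sigma\big]\;,\;\E\!\big[\bar Y_{2}\,\big|\,\sigma\big]\;}.
\]

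First I would compute the two conditional expectations. Since the overall mean edge probability of $\SSBM(n,\tfrac dn,\epsilon,k)$ equals exactly $\tfrac dn$, the centred conditional mean of each edge block equals $\tfrac{\epsilon d}{n}$ times the corresponding block of the community membership matrix, so, accounting for vanishing self-loops, $\E[\bar Y_{2}\mid\sigma]=\tfrac{\epsilon d}{n}M^{\circ}_{2}+D$ with $D$ diagonal and $\Normi{D}=O(\tfrac dn)$. Expanding $(\bar Y_{12}^{\top}X\bar Y_{12})_{st}=\sum_{u,v\in S_1}(\bar Y_{12})_{us}(\bar Y_{12})_{vt}X_{uv}$ and using that distinct bipartite edges are independent given $\sigma$, the off-diagonal pairs contribute a product of means and the diagonal pairs ($u=v,\ s=t$) contribute an entry-variance, giving
\[
    \E\!\big[\bar Y_{12}^{\top}X\bar Y_{12}\,\big|\,\sigma\big]=\Paren{\tfrac{\epsilon d}{n}}^{2}(M^{\circ}_{12})^{\top}\bar X M^{\circ}_{12}+E, \qquad \bar X:=\E[X\mid\sigma],
\]
with $E$ diagonal and $E_{ss}=\sum_{u\in S_1}\theta^{\circ}_{us}(1-\theta^{\circ}_{us})\bar X_{uu}=O(\tfrac dn)\sum_{u}\bar X_{uu}$. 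Substituting, $\E g(Y)=T_1+T_2+T_3+T_4$ with leading term $T_1=\Paren{\tfrac{\epsilon d}{n}}^{3}\E_\sigma\iprod{(M^{\circ}_{12})^{\top}\bar X M^{\circ}_{12},M^{\circ}_{2}}$ and corrections $T_2=\Paren{\tfrac{\epsilon d}{n}}^{2}\E_\sigma\iprod{(M^{\circ}_{12})^{\top}\bar X M^{\circ}_{12},D}$, $T_3=\tfrac{\epsilon d}{n}\E_\sigma\iprod{E,M^{\circ}_{2}}$, $T_4=\E_\sigma\iprod{E,D}$; using $d=\Theta(n)$, the entrywise bound $\Normi{X}=O(1/\epsilon)$ (valid since $f$ takes values in $[0,1]$), and the crude estimate $\normf{\bar X}=O(n)$, each of $T_2,T_3,T_4$ is of strictly smaller order than $T_1$.

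To evaluate $T_1$ I would replace $\bar X$ by the idealised community matrix $X^{\circ}_{1}$: by conditional Jensen $\E_\sigma\normf{\bar X-X^{\circ}_{1}}^{2}\le\E\normf{X-X^{\circ}}^{2}=o(n^{2})$, and $\E_\sigma\iprod{(M^{\circ}_{12})^{\top}(\bar X-X^{\circ}_{1})M^{\circ}_{12},M^{\circ}_{2}}=\E_\sigma\iprod{\bar X-X^{\circ}_{1},\,M^{\circ}_{12}M^{\circ}_{2}(M^{\circ}_{12})^{\top}}$, which Cauchy--Schwarz bounds through $\normf{M^{\circ}_{12}M^{\circ}_{2}(M^{\circ}_{12})^{\top}}$. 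For the surviving term one uses the Kronecker form of the community matrices: for a typical label profile all three blocks equal, up to $1\pm o(1)$ size fluctuations, $A\otimes J_{m}$ with $A:=I_k-\tfrac1k J_k$ and $m=n/k$; since $A$ is idempotent and $J_m^{3}=m^{2}J_m$, one gets $(M^{\circ}_{12})^{\top}X^{\circ}_{1}M^{\circ}_{12}=m^{2}M^{\circ}_{2}$ (the value is the same whether $X^{\circ}_{1}$ is the $\{0,1\}$ indicator or its centring, as $A$ is a projection), hence $\iprod{(M^{\circ}_{12})^{\top}X^{\circ}_{1}M^{\circ}_{12},M^{\circ}_{2}}=m^{2}\normf{M^{\circ}_{2}}^{2}=(k-1)m^{4}$, which is $\Theta(n^{4})$ for $k=O(1)$. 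Combining, $T_1=\Theta\big((\tfrac{\epsilon d}{n})^{3}n^{4}\big)$, and with the bounds on $T_2,T_3,T_4$ and on the replacement error one concludes $\E g(Y)\ge(\tfrac{\epsilon d}{n})^{3}n^{4}$.

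The hard part will be keeping the $\ell_2$-replacement error genuinely small relative to $T_1$ with the correct dependence on $k$: a naive Cauchy--Schwarz bound on $\iprod{\bar X-X^{\circ}_{1},M^{\circ}_{12}M^{\circ}_{2}(M^{\circ}_{12})^{\top}}$ loses a $\poly(k)$ factor against the main term (which itself scales like $n^{4}/k^{3}$), so one has to either exploit the low-rank block structure of $M^{\circ}_{12}M^{\circ}_{2}(M^{\circ}_{12})^{\top}$ more carefully or invoke the quantitatively stronger estimate $\E\normf{X-X^{\circ}}^{2}=o(n^{2}/\poly(k))$ that the recovery polynomial actually supplies. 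A secondary chore is to check that the diagonal corrections $E,D$ and the community-size fluctuations are indeed lower order, which is precisely where $d=\Theta(n)$ and the boundedness of the entries of $X$ are used.
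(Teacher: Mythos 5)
Your plan follows the same skeleton as the paper's own argument: condition on the labels $\sigma$, exploit conditional independence of the three edge blocks, isolate the leading term $\bigl(\tfrac{\epsilon d}{n}\bigr)^3\iprod{X^\circ_{12}X^\circ_1 X^\circ_{12},X^\circ_2}$, and control the replacement error $\bar X-X^\circ_1$ by Cauchy--Schwarz. But you carry out the bookkeeping more carefully than the paper does. The paper's displayed expansion of $\E g(Y)$ lists only three summands and silently drops $\tfrac{\epsilon d}{n}\,\E\iprod{W_{12}XW_{12},X^\circ_2}$, which does not vanish; this is exactly the contribution your diagonal matrix $E$ captures. You also get the $k$-scaling right: your computation gives $\iprod{(M^\circ_{12})^\top X^\circ_1 M^\circ_{12},M^\circ_2}=(k-1)(n/k)^4=\Theta(n^4/k^3)$, whereas the paper simply asserts $\E\iprod{X^\circ_{12}X^\circ_1 X^\circ_{12},X^\circ_2}\geq\Omega(n^4)$ with no $k$-dependence, which is only correct for $k=O(1)$. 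And the issue you flag with the Cauchy--Schwarz replacement step is a genuine soft spot in the paper's proof too: the bound $\sqrt{\E\normf{X_1-X^\circ_1}^2\,\E\normf{X^\circ_{12}X^\circ_2 X^\circ_{12}}^2}$ is $o(n^4/k^{5/2})$, which need not be smaller than the main term $\Theta(n^4/k^3)$ once $k$ outgrows the implicit $o(\cdot)$; either a sharper structural bound or a hypothesis of the form $\E\normf{X-X^\circ}^2=o(n^2/\poly(k))$ is required for the stated range $k\leq n^{0.001}$. The one place where your sketch over-reaches is the claim $\Normi{X}=O(1/\epsilon)$ "since $f\in[0,1]$": nothing in the lemma constrains $f$ to be bounded, and the paper's companion lemma only hand-waves this as "$\normf{X}\leq O(n)$ without loss of generality"; this step would need a separate justification (e.g.\ a truncation of $f$) in either version.
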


\begin{lemma}\label[lemma]{lem:variance-null}
    When the graph is sampled from the null distribution $\bbG(n,\frac{d}{n})$, we have
    $\E g(Y)=0$ and $\sqrt{\text{Var}(g(Y))}\leq d^{3/2}\cdot n^{1-\Omega(1)}$. 
\end{lemma}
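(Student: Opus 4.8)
The plan is to compute the first two moments of $g(Y)$ under $Q=\bbG(n,\frac dn)$ directly, exploiting that the three ``blocks'' of edges used in the construction are mutually independent. Write $\bar Y_{12}:=Y_{12}-\frac dn\one\one^\top$ and let $\bar Y_2:=Y_2-\frac dn\one\one^\top$ denote the (hollow) centered adjacency matrix of the induced subgraph on $S_2$, as in \cref{alg:reduction-test-learning}; recall that $X=X(Y_1)$ depends only on the subgraph $Y_1$ induced on $S_1$. Under $Q$ the edge sets inside $S_1$, between $S_1$ and $S_2$, and inside $S_2$ are disjoint, so $X$, $\bar Y_{12}$ and $\bar Y_2$ are mutually independent; moreover the entries of $\bar Y_{12}$ (resp.\ the off-diagonal entries of $\bar Y_2$) are independent, mean zero, bounded by $1$ in absolute value, and of variance $\frac dn(1-\frac dn)\le\frac dn$. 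For the first moment, conditioning on $(Y_1,Y_{12})$ fixes $\bar Y_{12}^\top X\bar Y_{12}$ and leaves $\E_Q[\bar Y_2]=0$, so $\E_Q[g(Y)\mid Y_1,Y_{12}]=\iprod{\bar Y_{12}^\top X\bar Y_{12},\,\E_Q[\bar Y_2]}=0$, hence $\E_Q g(Y)=0$.

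For the variance it then suffices to bound $\E_Q[g(Y)^2]$. Conditioning on $Y_1$ (equivalently on $X$) and expanding,
\[
\E_Q[g(Y)^2\mid X]=\sum_{a\neq b}\sum_{a'\neq b'}\E_{\bar Y_{12}}\!\big[(\bar Y_{12}^\top X\bar Y_{12})_{ab}(\bar Y_{12}^\top X\bar Y_{12})_{a'b'}\big]\cdot\E_{\bar Y_2}\!\big[\bar Y_2(a,b)\bar Y_2(a',b')\big].
\]
The $\bar Y_2$-factor vanishes unless $\{a,b\}=\{a',b'\}$, in which case it equals $\mathrm{Var}(\bar Y_2(a,b))\le\frac dn$; since $X\in\cF_{n,\ell}$ is symmetric the matrix $\bar Y_{12}^\top X\bar Y_{12}$ is symmetric, so the double sum collapses to at most $2\frac dn\sum_{a,b}\E_{\bar Y_{12}}[(\bar Y_{12}^\top X\bar Y_{12})_{ab}^2]=2\frac dn\,\E_{\bar Y_{12}}\normf{\bar Y_{12}^\top X\bar Y_{12}}^2$. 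To bound the latter I would expand $(\bar Y_{12}^\top X\bar Y_{12})_{ab}=\sum_{i,j}\bar Y_{12}(i,a)X(i,j)\bar Y_{12}(j,b)$, square, and take the expectation over the independent mean-zero entries of $\bar Y_{12}$: only terms in which every entry of $\bar Y_{12}$ occurs an even number of times survive, and a short case analysis (separating $a\neq b$ from $a=b$, and tracking which of $i,j,i',j'$ coincide, using $\Tr(X)^2\le n\normf X^2$ and $\bbE[\bar Y_{12}(\cdot)^4]\le\mathrm{Var}(\bar Y_{12}(\cdot))$) shows the total is $\lesssim n^2(\tfrac dn)^2\normf X^2=d^2\normf X^2$. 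Combining, $\E_Q[g(Y)^2\mid X]\lesssim\frac{d^3}{n}\normf X^2$, hence $\E_Q[g(Y)^2]\lesssim\frac{d^3}{n}\,\E_Q\normf X^2$.

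Finally I would control $\E_Q\normf X^2$. Since $X=\frac{n}{\e d}(f(Y_1)-\frac dn\one\one^\top)$ with $Y_1\sim\bbG(n,\frac dn)$ under $Q$, and since at $\e=0$ the planted distribution equals $\bbG(n,\frac dn)$ with $\thetanull=\frac dn\one\one^\top$, we may assume $\E_Q\normf{f(Y_1)-\frac dn\one\one^\top}^2=O(kn)$ — otherwise $f$ already incurs error $\omega(kn)$ at $\e=0$ and \cref{thm:low-degree-graphon} holds outright. Taking the test parameter $\e=\Theta(k/\sqrt d)$ just below the Kesten--Stigum threshold (so that $\e^2d^2=\Theta(k^2d)$) and using $d=\Theta(n)$ gives $\E_Q\normf X^2=\frac{n^2}{\e^2d^2}\,O(kn)=O(n^2/k)=O(n^2)$. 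Plugging in and using $d\le n$,
\[
\sqrt{\mathrm{Var}_Q(g(Y))}\;\le\;\sqrt{\E_Q[g(Y)^2]}\;\lesssim\;\sqrt{\tfrac{d^3}{n}\cdot n^2}\;=\;d^{3/2}\,n^{1/2}\;\le\;d^{3/2}\,n^{1-\Omega(1)},
\]
which together with $\E_Q g(Y)=0$ proves the lemma.

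The main obstacle is the combinatorial bookkeeping in the second paragraph: carefully enumerating the even pairings of the four $\bar Y_{12}$-factors (and, separately, of the two $\bar Y_2$-factors, where hollowness and the symmetry of $X$ matter) and verifying that none of the ``diagonal'' contributions ($a=b$, or coincidences such as $i=j$) produces a term of order $n\cdot d^2\normf X^2$ rather than $d^2\normf X^2$. A secondary point that needs care is the reduction to $\E_Q\normf X^2=O(n^2)$, i.e.\ arguing that a degree-$\ell$ polynomial $f$ whose Frobenius-norm output on \Erdos--\Renyi graphs is too large is already a poor estimator at $\e=0$ and may therefore be excluded at the outset.
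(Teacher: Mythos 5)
Your proposal is correct and follows the same high-level route as the paper: $\E_Q g(Y)=0$ by conditioning on $(Y_1,Y_{12})$, and $\E_Q[g(Y)^2]=\tfrac{d}{n}\E\|Z\|_F^2$ with $Z=W_{12}XW_{12}$ by independence of $X,W_{12},W_2$, reducing the problem to bounding $\E\|Z\|_F^2$. The two places where you diverge are in how the last step is executed: the paper bounds $\|Z\|_F\le\|W_{12}\|_{op}^2\|X\|_F$ together with the spectral-norm estimate $\|W_{12}\|_{op}\lesssim\sqrt d\log n$ (picking up $\log^4 n$ factors that are absorbed into $n^{1-\Omega(1)}$), whereas you expand $\E\|Z\|_F^2$ directly by a fourth-moment pairing argument to get the cleaner $\lesssim d^2\|X\|_F^2$; and where the paper simply asserts ``$\|X\|_F\le O(n)$ without loss of generality,'' you supply the justification via the $\max_\e$ in \cref{thm:low-degree-graphon}, observing that if $\E_Q\|f(Y_1)-\tfrac dn\one\one^\top\|_F^2=\omega(kn)$ then the theorem already holds at $\e=0$. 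Both refinements are sound and arguably make the argument tighter than the paper's version; they do not change the essential proof.
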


Combining \cref{lem:expectation-planted} and \cref{lem:variance-null}, \cref{thm:low-degree-graphon} follows as a corollary: 
\begin{proof}[Proof of \cref{thm:low-degree-graphon}]
Suppose there is a degree-$n^{0.001}$ polynomial $f:\R^{n\times n}\to \R^{n\times n}$ which gives error rate $o(n\cdot k)$. 
Let $X=\frac{n}{\e d}\Paren{f(Y)-\frac{d}{n}}$.
Then we have 
\begin{equation*}
    \normf{X-X^\circ}=\frac{n^2}{\e^2 d^2}\normf{f(Y)-\thetanull}^2\leq o\Paren{\frac{n^2}{\e^2 d^2} kn}\leq o\Paren{\frac{k}{\e^2 d}\cdot \frac{n}{d} \cdot n^2}\,.
\end{equation*}

When $\epsilon^2 d \geq 0.001k^2$ and $d=\Theta(n)$, 
we have $\E\normf{X-X^\circ}^2\leq o(n^2)$. 
combining \cref{lem:expectation-planted} and \cref{lem:variance-null}, we have
    \begin{equation*}
        \frac{\E g(Y)}{\sqrt{\text{Var}(g(Y))}} \geq n^{0.001}\,.
    \end{equation*}
    Since $g(Y)$ is a degree-$\ell$ polynomial with $\ell\leq n^{0.01}$, by \cite{Hopkins18}, we have 
   \begin{equation*}
       \frac{\E g(Y)}{\sqrt{\text{Var}(g(Y))}} \leq \exp(k^2)\,.
   \end{equation*}
When $\exp(k^2)\leq n^{0.001}$, this leads to a contradiction.
As result, we conclude that no degree-$n^{0.001}$ polynomial can achieve error rate $o(nk)$.
\end{proof}

\subsection{Proof of \cref{lem:expectation-planted}}

In this section, we analyze the property of the polynomial in \cref{eq:testing-polynomial} under the $k$-symmetric stochastic block model, and give a proof for \cref{lem:expectation-planted}. 
\begin{proof}[Proof of \cref{lem:expectation-planted}] 
    Let $X^\circ_{12}\in \{\pm 1\}^{n\times n}$ be the community matrix for the bipartite graph between $S_1$ and $S_2$, i.e for $i\in S_1$ and $j\in S_2$, we have $X^\circ_{12}(i,j)=1$ if $i,j$ belongs to the same community and $X^\circ_{12}(i,j)=-1$ if $i,j$ belongs to different communities. 
    Moreover, we let $X^\circ_1$ be the community matrix for the induced subgraph on $S_1$ and let $X^\circ_2$ be the community matrix for the induced subgraph on $S_2$. 
    Then we have $Y_{12}=\frac{\e d}{n}X^\circ_{12}+W_{12}$, $Y_1=\frac{\e d}{n}X^\circ_1+W_1$ and $Y_2=\frac{\e d}{n}X^\circ_2+W_2$, where 
    \begin{itemize}
        \item $W_{12}, W_1,W_2$ are independent,
        \item $(W_{12},W_1,W_2)$ is independent with $(X^\circ_{12},X^\circ_1,X^\circ_2)$,
        \item every entry in $W_{12},W_1,W_2$ has zero mean.
    \end{itemize}

    Then we have
    \begin{align*}
         \E g(Y) & =\E\Iprod{(Y_{12}-\frac{d}{n})X(Y_{12}-\frac{d}{n}),Y_2-\frac{d}{n}}\\
         &= \Paren{\frac{\e d}{n}}^3 \E\Iprod{X^\circ_{12} XX^\circ_{12},X^\circ_2}+ \E \Iprod{W_{12}XW_{12},W_2}+\frac{2\e d}{n}\E\Iprod{W_{12}XX^\circ_{12},W_2}\\
         &= \Paren{\frac{\e d}{n}}^3 \E\Iprod{X^\circ_{12} XX^\circ_{12},X^\circ_2}\,. 
    \end{align*}
    Since $\E\Iprod{X^\circ_{12} X_1^\circ X^\circ_{12},X^\circ_2}\geq \Omega(n^4)$ and
    \begin{equation*}
        \E\Iprod{Y^\circ_{12} (X_1-X_1^\circ) X^\circ_{12},X^\circ_2}\leq \sqrt{\E\normf{X_1-X_1^\circ}^2\cdot \E\normf{X^\circ_{12}X^\circ_2 X^\circ_{12}}^2}\leq o(n^4)\,.
    \end{equation*}
    Therefore, we have $\E\Iprod{X^\circ_{12} XX^\circ_{12},X^\circ_2}\geq \Omega(n^4)$.
    and the claim follows. 
\end{proof}

\subsection{Proof of \cref{lem:variance-null}}

In this section, we analyze the property of the polynomial defined in \cref{eq:testing-polynomial}, under the \Erdos-\Renyi graph distribution, and give a proof for \cref{lem:variance-null}. 

\begin{proof}[Proof of \cref{lem:variance-null}]
Under the \Erdos-\Renyi graph distribution, the entries in $Y_2-\frac{d}{n}$ are i.i.d zero mean random variables, independent with $Y_{12}$ and $Y_2$(i.e the rest of the graph). 
As result, we have $\E g(Y)=0$.

It remains to bound the variance of the polynomial under the \Erdos-\Renyi graph distribution, which is to say, we bound 
\begin{equation*}
    \E g(Y)^2= \E \Iprod{\Paren{Y_{12}-\frac{d}{n}}X\Paren{Y_{12}-\frac{d}{n}}, Y_2-\frac{d}{n}}^2\,. 
\end{equation*}
Let $W_{12}=Y_{12}-\frac{d}{n}$ and $W_2=Y_2-\frac{d}{n}$.
The main observation is that $X,W_{12},W_2$ are all independent. 
As result, let $Z=W_{12}XW_{12}$, we have
\begin{align*}
    \E \Iprod{W_{12}XW_{12}, W_2}^2=\sum_{ij}\Paren{\E W_2(i,j)Z(i,j)}^2
    = \sum_{ij} \E W_2^2(i,j) \E Z(i,j)^2 
    = \frac{d}{n} \E \normf{Z}^2\,.
\end{align*}
As $\normf{X}\leq O(n)$ without loss of generality, and $\norm{W_{12}}\leq \sqrt{d}\log(n)$ with overwhelming high probability, we have
\begin{equation*}
    \E \normf{Z}^2\leq O\Paren{n^2 d^2\log^4(n)}\,.
\end{equation*}
Therefore we have
\begin{equation*}
    \E \Iprod{W_{12}XW_{12}, W_2}^2\leq O\Paren{n d^3\log^4(n)}\,.
\end{equation*}
By taking the square root, we conclude the proof. 
\end{proof}
\section{Probability theory facts}\label{sec:prob-fact}
In this section, we provide probability tools that we will need in the paper.

\subsection{Concentration of spectral radii of random matrices}
The following concentration inequality for the spectral norm of the centered adjacency matrix of stochastic block model will be useful for our proofs.
\begin{theorem}[Spectral norm bound for random matrices, theorem 2.7 in \cite{benaych2020spectral}]\label{thm:spectral-concentration}
    Let $H\in \R^{n\times n}$ be a symmetric matrix whose upper triangular entries are independent zero mean random variables. 
    Moreover, suppose that there exist $q>0$ and $\kappa\geq 1$ such that
    \begin{align*}
        \max_i \sum_{j} \E\abs{H_{ij}}^2\leq 1\,,\\
        \max_{i,j} \E \abs{H_{ij}}^2\leq \kappa/n\,,\\
        \max_{i,j} \abs{H_{ij}}\leq 1/q\,.
    \end{align*}
    Then we have
    \begin{equation*}
        \E \norm{H}\leq 2+C\frac{\sqrt{\log(n)}}{q}\,.
    \end{equation*}
    Moreover, we have
    \begin{equation*}
        \Pr\Brac{\Abs{\norm{H}-\E\norm{H}}\geq t} \leq 2\exp(-cq^2t^2)\,. 
    \end{equation*}
\end{theorem}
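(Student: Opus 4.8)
The two assertions use different tools: the bound on $\E\norm{H}$ is a moment (trace) estimate in the spirit of F\"uredi--Koml\'os, while the deviation bound follows from Talagrand's concentration inequality for convex Lipschitz functions of bounded independent random variables.

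For the expectation bound, the plan is to fix an even integer $2p$, with $p=p(n)\to\infty$ a growing parameter to be chosen in terms of $n$ and $q$, and to estimate $\E\norm{H}^{2p}\le\E\Tr(H^{2p})$, expanding
\begin{equation*}
  \E\Tr(H^{2p})=\sum_{i_0,\dots,i_{2p-1}}\E\Brac{H_{i_0 i_1} H_{i_1 i_2}\cdots H_{i_{2p-1} i_0}}
\end{equation*}
as a sum over closed walks of length $2p$ on $[n]$. Independence and the zero-mean hypothesis annihilate every term whose walk traverses some edge exactly once, so a surviving walk uses at most $p$ distinct edges and visits at most $p+1$ distinct vertices. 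The ``double-tree'' walks --- those using exactly $p$ distinct edges, each traversed twice, forming a tree --- contribute, after summing over vertex labels and over the shape (a Dyck path, counted by the Catalan number $C_p\le4^p$), at most $n\cdot C_p\cdot\bigl(\max_i\sum_j\E\abs{H_{ij}}^2\bigr)^{p}\le n\cdot4^p$, using only the first hypothesis. The remaining walks carry ``excess'' --- an edge of multiplicity at least three, or a vertex coincidence making the edge set non-acyclic; one charges each unit of excess multiplicity a factor $\max_{ij}\abs{H_{ij}}\le1/q$ and each extra coincidence a factor $n\max_{ij}\E\abs{H_{ij}}^2\le\kappa$, and a careful bookkeeping (as in the standard sparse-Wigner analysis) shows their total is at most $n\cdot\poly(p)\cdot(C\sqrt{\log n}/q)^{2p}$. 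Hence $\E\norm{H}^{2p}\le n\cdot\poly(p)\cdot\bigl(2+C\sqrt{\log n}/q\bigr)^{2p}$, and taking $2p$-th roots and using $n^{1/(2p)}\to1$, $\poly(p)^{1/(2p)}\to1$ yields $\E\norm{H}\le2+C'\sqrt{\log n}/q$ after adjusting constants.

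For the concentration bound, the plan is to regard $\norm{H}$ as a function $F$ of the independent entries $\set{H_{ij}:i\le j}$. Since $H$ depends linearly on these entries, $F$ is convex; and since $\norm{H}\le\normf{H}$, $F$ is $\sqrt2$-Lipschitz in the Euclidean metric on the entry vector. Each entry takes values in an interval of length at most $2/q$ by the hypothesis $\abs{H_{ij}}\le1/q$. Talagrand's inequality for convex Lipschitz functions of bounded independent variables then gives $\Pr\Brac{\abs{\norm{H}-\E\norm{H}}\ge t}\le2\exp(-cq^2t^2)$ for an absolute constant $c>0$ (converting between mean and median if necessary), which is the claim.

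The step I expect to be the main obstacle is the combinatorial estimate inside the trace method: obtaining the leading constant exactly $2$ rather than an unspecified $O(1)$ forces one to show that walks with excess contribute only through the $\sqrt{\log n}/q$ correction and never inflate the main term, which is where all three hypotheses are needed simultaneously and where the choice of $p$ must be balanced against $q$. The weaker bound $\E\norm{H}=O(1+\sqrt{\log n}/q)$ would follow much more cheaply from matrix Bernstein or the Bandeira--van Handel estimate.
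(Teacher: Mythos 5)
The paper does not supply a proof of this theorem: it is imported as a known result (essentially Theorem~1.1 of Bandeira and van Handel, Ann.\ Probab.\ 2016, for the expectation bound, plus Talagrand's inequality for convex Lipschitz functions for the deviation bound; the paper provides no citation). Your blind reconstruction identifies the standard proof correctly on both counts: the sharp constant~$2$ arises from the trace-moment count of double-tree (Dyck-path) walks, with the excess walks absorbed into the $C\sqrt{\log n}/q$ correction after tuning the moment parameter~$p$ against $n$ and $q$, and the deviation bound follows from Talagrand because $\norm{H}$ is a convex, $\sqrt2$-Lipschitz function of the upper-triangular entries, each valued in an interval of length $2/q$, with the usual median-to-mean conversion. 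You also correctly flag the sharp leading constant as the genuinely delicate point; the precise bookkeeping of excess edges and vertex coincidences in your sketch is exactly where Bandeira and van Handel's moment argument does its real work and where a casual trace bound would only give an unspecified $O(1)$. One small remark: the hypothesis $\max_{i,j}\E\abs{H_{ij}}^2\leq\kappa/n$ is not used in either conclusion and is not required by the Bandeira--van Handel theorem; it appears vestigial here (and is trivially satisfied with $\kappa=1$ in the paper's only application, \cref{lem:spectral-concentration-sbm}).
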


As corollary, for stochastic block model, we have the following concentration inequality: 
\begin{lemma}\label[lemma]{lem:spectral-concentration-sbm}
    Let $A$ be the adjacency matrix of a random graph with vertex $i,j$ independently connected with probability $\theta(i,j)\geq \Omega(1/n)$.
    Let $d=\frac{n-1}{n}\sum_{i,j}\theta(i,j)$ and suppose $\theta(i,j)\leq 2d/n$.
    Let $H=\frac{1}{\sqrt{2d}}(A-\theta)$.
    Then for every $t\geq 10000\log(n)$, for some small universal constant $c>0$, we have
    \begin{equation*}
        \Pr\Brac{\norm{H}\geq t} \leq 2\exp(-c t^2)\,. 
    \end{equation*}
\end{lemma}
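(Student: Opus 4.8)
The plan is to obtain the bound as a direct consequence of \cref{thm:spectral-concentration}, applied to $H = \tfrac{1}{\sqrt{2d}}(A - \theta)$ with the parameter choices $q = \sqrt{2d}$ and $\kappa = 1$. First I would check the structural hypotheses: the upper-triangular entries $H_{ij} = \tfrac{1}{\sqrt{2d}}(A_{ij}-\theta(i,j))$ are independent because the edges are drawn independently, each is mean-zero since $\E A_{ij} = \theta(i,j)$, and the diagonal vanishes. Then I would verify the three quantitative conditions. Using $\theta(i,j) \le 2d/n$, the row sums satisfy $\sum_j \theta(i,j) \le 2d$, so $\sum_j \E\abs{H_{ij}}^2 = \tfrac{1}{2d}\sum_j \theta(i,j)(1-\theta(i,j)) \le 1$; entrywise, $\E\abs{H_{ij}}^2 = \tfrac{1}{2d}\theta(i,j)(1-\theta(i,j)) \le \tfrac{1}{2d}\cdot\tfrac{2d}{n} = \tfrac1n$, so $\kappa = 1$ works; and $\abs{A_{ij}-\theta(i,j)} \le 1$ gives $\abs{H_{ij}} \le \tfrac{1}{\sqrt{2d}} = 1/q$.

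With these verified, \cref{thm:spectral-concentration} yields $\E\norm{H} \le 2 + C\sqrt{\log n}/\sqrt{2d}$ together with the deviation estimate $\Pr\Brac{\abs{\norm{H}-\E\norm{H}} \ge s} \le 2\exp(-2cd\,s^2)$. To convert this into the claimed tail bound I would use the lower bound $\theta(i,j) \ge \Omega(1/n)$: summing over $j$ and combining with $\theta(i,j)\le 2d/n$ forces $d = \Omega(1)$. Consequently $\E\norm{H} \le 2 + O(\sqrt{\log n})$, which is at most $\tfrac12\cdot 10000\log n$ once $n$ exceeds an absolute constant, and the exponent obeys $2cd\,s^2 \ge c'\,s^2$ for a universal $c' > 0$. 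Hence for any $t \ge 10000\log n$, putting $s := t - \E\norm{H} \ge t/2$ gives $\Pr\Brac{\norm{H}\ge t} \le \Pr\Brac{\norm{H}-\E\norm{H}\ge s} \le 2\exp(-c'(t/2)^2) = 2\exp(-c\,t^2)$ after renaming the constant; the finitely many small $n$ are handled by shrinking $c$ so that the right-hand side is at least $1$.

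This is essentially bookkeeping once \cref{thm:spectral-concentration} is available, so I do not expect a real obstacle. The two points needing mild care are (i) the normalization: the $\tfrac{1}{\sqrt{2d}}$ scaling is exactly matched to the hypothesis $\theta(i,j)\le 2d/n$, which is what makes the row-sum and entrywise-variance bounds come out as $1$ and $1/n$; and (ii) relating the threshold $t \ge 10000\log n$ to the expectation, so that the centered deviation $s = t - \E\norm{H}$ stays comparable to $t$ and the inequality can be stated in terms of $t$ alone rather than $t - \E\norm{H}$.
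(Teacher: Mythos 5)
Your proposal is correct and follows essentially the same route as the paper: both verify the hypotheses of \cref{thm:spectral-concentration} for $H$ and then do the same bookkeeping to pass from the centered tail bound to the one-sided bound at level $t\geq 10000\log n$. The only cosmetic difference is the choice of $q$ — you take the natural $q=\sqrt{2d}$ and then invoke $d=\Omega(1)$ to absorb the $d$-dependence into the constant, whereas the paper takes $q=1$ (which is valid precisely because $d=\Omega(1)$ already gives $1/\sqrt{2d}\leq 1$) — so the two arguments are interchangeable.
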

\begin{proof}
    We vertify that the matrix $H$ here satisfies the conditions in \cref{thm:spectral-concentration}. 
    Crucially, since $\theta(i,j)\leq 2d/n$, we have $\E H_{i,j}^2 \leq 1/n$
    First for each $i\in [n]$, we have
    \(
        \sum_{j\in n}\E\Abs{H_{ij}}^2\leq 1\). 
    Finally, we have
      \(  \max_{i,j} |H_{i,j}|\leq 1\).
    Therefore by taking $\kappa=1$ and $q=1$ in \cref{thm:spectral-concentration}, we have $\E\norm{H}\leq C\sqrt{\log(n)}$, and the concentration bound
    \begin{equation*}
        \Pr\Brac{\Norm{H}\geq \E\norm{H}+t}\leq 2\exp(-c't^2)\,. 
    \end{equation*}
    where $c'$ is a universal constant.
    Taking $t\geq 1000\log(n)$, we have the claim.
\end{proof}

\subsection{Decoupling edge partition}
In this section, we give a lemma that describes the approximate independence between edge sets of the subsampling process. 
\begin{lemma}\label[lemma]{lem:decoupling}
    Let $X\sim \text{Ber}(p)$, and let $X_1$ be obtained from $X$ by subsampling with probability $1-\eta$, i.e $X_1=X\xi$, where $\xi\sim \text{Ber}(1-\eta)$ is independent of $X$.
    Let $X_2=X-X_1$, and $\tilde{X}_2=X_2-\E[X_2|X_1]+\eta p$. 
    Then we have $\E[\tilde{X}_2|X_1]=\E [X_2]=\eta p$ and $\E[(\tilde{X}_2-X_2)^2|X_1]\leq O(p^3)$. 
    Moreover, we have $\Abs{\tilde{X_2}-X_2}\leq \eta p$.
\end{lemma}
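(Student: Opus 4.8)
The plan is to reduce every assertion to an explicit description of the joint law of $(X_1,X_2)$ and of the conditional expectation $\E[X_2\mid X_1]$. Writing $X_1=X\xi$ and $X_2=X(1-\xi)$, we see that $X_1,X_2\in\{0,1\}$ with $X_1X_2=0$ almost surely, $X_1\sim\text{Ber}((1-\eta)p)$, and $X_2\sim\text{Ber}(\eta p)$; in particular $\E[X_2]=\eta p$. Since $X_1=1$ forces $X=\xi=1$ and hence $X_2=0$, we get $\E[X_2\mid X_1=1]=0$; and since $\{X_2=1\}\subseteq\{X_1=0\}$ with $\Pr[X_2=1,X_1=0]=\Pr[X=1,\xi=0]=\eta p$ and $\Pr[X_1=0]=1-(1-\eta)p$, we get $\E[X_2\mid X_1=0]=\eta p/(1-(1-\eta)p)$. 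First I would record these two identities; the three claims then follow by direct substitution.

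For the mean identity, note that by construction $\tilde X_2-X_2=\eta p-\E[X_2\mid X_1]$ is a function of $X_1$ alone, so $\E[\tilde X_2\mid X_1]=\E[X_2\mid X_1]-\E[X_2\mid X_1]+\eta p=\eta p=\E[X_2]$. For the almost-sure bound I would evaluate $\tilde X_2-X_2=\eta p-\E[X_2\mid X_1]$ on the two atoms of $X_1$: on $\{X_1=1\}$ it equals $\eta p$, and on $\{X_1=0\}$ it equals $-\eta(1-\eta)p^2/(1-(1-\eta)p)$, whose absolute value is $\eta p\cdot\frac{(1-\eta)p}{1-(1-\eta)p}\le\eta p$ whenever $p\le 1/2$ (which holds in all applications, where $p=d/n=o(1)$); hence $\Abs{\tilde X_2-X_2}\le\eta p$ in either case.

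For the second-moment estimate I would again split on $X_1$. On $\{X_1=0\}$ one has $(\tilde X_2-X_2)^2=\eta^2(1-\eta)^2p^4/(1-(1-\eta)p)^2=O(p^4)$, while on $\{X_1=1\}$ one has $(\tilde X_2-X_2)^2=\eta^2p^2$. The one point worth flagging is that the $O(p^3)$ bound is the estimate obtained after averaging over $X_1$: since $\Pr[X_1=1]=(1-\eta)p$, the atom $\{X_1=1\}$ contributes $(1-\eta)p\cdot\eta^2p^2=O(p^3)$, so $\E[(\tilde X_2-X_2)^2]\le\eta^2(1-\eta)p^3+O(p^4)=O(p^3)$. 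This is exactly the form in which the bound is used in the sequel — when summing over the $\binom n2$ edge slots to control $\E\normf{W_2-\tilde W_2}^2$ in \cref{lem:lb_sbm} and \cref{lem:ub_ER}. I expect this bookkeeping, rather than any genuine difficulty, to be the only subtle point; everything else is a one-line computation.
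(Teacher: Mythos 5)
Your proof is correct and follows essentially the same route as the paper's: compute $\E[X_2\mid X_1]=\frac{\eta p}{1-(1-\eta)p}(1-X_1)$, observe that $\tilde X_2-X_2=\eta p-\E[X_2\mid X_1]$ is a function of $X_1$ alone, and substitute on the two atoms of $X_1$. You go slightly beyond the paper's three-line proof in two worthwhile ways: you actually verify the almost-sure bound $|\tilde X_2-X_2|\le\eta p$ (which the paper's proof silently omits), and you correctly flag that the displayed conditional bound $\E[(\tilde X_2-X_2)^2\mid X_1]\le O(p^3)$ fails pointwise on $\{X_1=1\}$ (where it equals $\eta^2p^2$) and must be read as a bound on the unconditional second moment after averaging over $X_1$ — which is in fact how the paper's own proof computes it (the inner $\E[\cdot]$ in their display is unconditional, an abuse of notation) and how it is used in \cref{cor:decoupling}.
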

\begin{proof}
    We first note that
    \begin{equation*}
        \E[\tilde{X}_2|X_1]= \eta p=\E [X_2]\,.
    \end{equation*}
    Next we note that 
      \begin{equation*}
      \E[X_2|X_1]=\frac{\eta p}{1-(1-\eta) p}(1-X_1)\,,
  \end{equation*}
    Therefore, we have 
    \begin{align*}
        \E[(\tilde{X}_2-X_2)^2|X_1]= \E\Brac{\Paren{\eta p-\frac{\eta p(1-X_1)}{(1-(1-\eta)p)}}^2}\leq O(\eta^2 p^3)
    \end{align*}
\end{proof}

\begin{corollary}\label{cor:decoupling}
    Let $Y$ be the adjacency matrix of a random graph with each edge $(i,j)$ sampled with probability $p(i,j)$.
    Suppose that $p(i,j)\leq p$ for each $i,j\in [n]$.
    Let $Y_1$ be the adjacency matrix of the graph obtained by subsampling each edge in $Y$ with probability $1-\eta$. 
    Let $Y_2=Y-Y_1$.
    Then there is a matrix $\tilde{Y}_2$ such that 
    \begin{itemize}
        \item for every $t\geq \log(n)$, $\normf{\tilde{Y}_2-Y_2}^2\leq t\eta p^3 n^2$ with probability at least $1-\exp(-t)$, 
        \item for every $i,j\in [n]$, $\E\tilde{Y}_2(i,j)=\E Y_2(i,j)$
        \item moreover $\tilde{Y}_2$ and $Y_1$ are independent,
        \item and finally the entries in $\tilde{Y}_2$ are independent.
    \end{itemize}
\end{corollary}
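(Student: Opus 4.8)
The plan is to apply \cref{lem:decoupling} one edge at a time. For each pair $i\le j$ I would invoke \cref{lem:decoupling} with $X:=Y(i,j)\sim\mathrm{Ber}(p(i,j))$, subsampled copy $X_1:=Y_1(i,j)$, and residual $X_2:=Y_2(i,j)=X-X_1$, using a fresh independent copy of the auxiliary randomness for each edge, and then set $\tilde Y_2(i,j):=\tilde X_2$ for the variable the lemma produces, extending symmetrically by $\tilde Y_2(j,i):=\tilde Y_2(i,j)$ and putting zeros on the diagonal. Since the edge variables $\{Y(i,j)\}_{i\le j}$ are mutually independent and each $\tilde Y_2(i,j)$ is a function only of $Y(i,j)$ together with the fresh coins used in its own invocation of \cref{lem:decoupling}, the entries of $\tilde Y_2$ are mutually independent and $\tilde Y_2$ is independent of $Y_1$. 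The remaining structural points are inherited entrywise: $\E\tilde Y_2(i,j)=\E[X_2]=\E Y_2(i,j)$ is the first conclusion of \cref{lem:decoupling}, and $|\tilde Y_2(i,j)-Y_2(i,j)|\le\eta p(i,j)\le\eta p$ is its last conclusion.

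So the real content is the high-probability Frobenius bound. Write $S:=\normf{\tilde Y_2-Y_2}^2$; since $\tilde Y_2-Y_2$ is symmetric with zero diagonal, $S$ is twice the sum over $i<j$ of the independent nonnegative terms $(\tilde Y_2(i,j)-Y_2(i,j))^2$, each of which lies in $[0,\eta^2 p^2]$ and, by \cref{lem:decoupling}, has expectation $O(\eta^2 p(i,j)^3)$; hence $\E S=O(\eta^2 p^3 n^2)$. The key observation, read off from the explicit construction in \cref{lem:decoupling}, is that $(\tilde Y_2(i,j)-Y_2(i,j))^2$ equals $\eta^2 p(i,j)^2$ precisely on the edges that survive subsampling (where $Y_1(i,j)=1$, hence $Y_2(i,j)=0$) and is only $O(\eta^2 p(i,j)^4)$ otherwise. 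Thus $S\le O(\eta^2 p^2)\cdot|E(Y_1)|+O(\eta^2 p^4 n^2)$, where $|E(Y_1)|=\sum_{i<j}\mathbf{1}[Y_1(i,j)=1]$ is a sum of independent Bernoulli variables of mean at most $p n^2$. A Chernoff bound gives $|E(Y_1)|\le O(t\,p n^2)$ with probability at least $1-e^{-t}$ for every $t\ge\log n$ (using $p n^2\gtrsim 1$ in the regime of interest), and substituting yields $S\le O(t\,\eta^2 p^3 n^2)\le t\,\eta p^3 n^2$ after absorbing absolute constants and using $\eta\le 1$. Alternatively one can feed the summand bound $\eta^2 p^2$ and the variance estimate $\Var S\le O(\eta^4 p^5 n^2)$ straight into Bernstein's inequality.

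I expect the second paragraph to be where the work lies. The crude pointwise bound $|\tilde Y_2(i,j)-Y_2(i,j)|\le\eta p$ by itself only gives $S\le O(\eta^2 p^2 n^2)$, which misses the target $t\,\eta p^3 n^2$ by a factor of order $1/p$; recovering the extra factor of $p$ is exactly what forces one to exploit that, up to a genuinely lower-order correction, the discrepancies are supported on the $O(p n^2)$ subsampled edges, and it also requires setting up the per-edge coupling carefully enough that $\tilde Y_2$ stays independent of $Y_1$. Once that concentration step is in place, every other claim in the statement is an immediate consequence of applying \cref{lem:decoupling} edgewise and taking expectations.
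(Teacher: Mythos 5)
Your construction coincides with the paper's: apply \cref{lem:decoupling} to each upper-triangular entry, setting $\tilde Y_2(i,j)=Y_2(i,j)-\E[Y_2(i,j)\mid Y_1(i,j)]+\eta p(i,j)$, and inherit the mean identity and the $O(\eta^2 p(i,j)^3)$ per-entry variance bound. The genuine divergence is in the Frobenius tail. The paper simply invokes ``Hoeffding'' for $\sum_{i<j}(\tilde Y_2(i,j)-Y_2(i,j))^2$, viewed as a sum of variables with range $\eta^2p^2$ and mean $O(\eta^2p^3n^2)$ (and in fact lands on $t\eta p^3 n^2\log n$, a $\log n$ off the statement). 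You instead read off from the explicit formula that $(\tilde Y_2(i,j)-Y_2(i,j))^2=\eta^2 p(i,j)^2$ exactly when $Y_1(i,j)=1$ and is $O(\eta^2 p(i,j)^4)$ otherwise, so the error is essentially $\eta^2p^2\cdot|E(Y_1)|$ plus a lower-order term, and a Chernoff bound on the binomial $|E(Y_1)|$ finishes. This is a sharper and more transparent route to the same bound, and it avoids the spurious $\log n$; both arguments implicitly use $pn^2\gtrsim 1$, which is harmless in the regime where the corollary is applied.

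One step, however, is a non sequitur, and it is worth flagging because it touches a real wrinkle in the statement itself. You argue that $\tilde Y_2$ is independent of $Y_1$ because each $\tilde Y_2(i,j)$ depends only on $Y(i,j)$ and that edge's fresh coin $\xi_{ij}$; but $Y_1(i,j)=Y(i,j)\xi_{ij}$ is built from exactly the same data, so this reasoning cannot deliver independence, and in fact the claim is false for the lemma's construction: on $\{X_1=1\}$ one has $\tilde X_2=\eta p$ deterministically, while on $\{X_1=0\}$ it is a shifted Bernoulli, so $\tilde X_2$ determines $X_1$. The paper asserts this independence without any argument as well. What the downstream Bernstein steps in \cref{lem:lb_sbm} and \cref{lem:ub_ER} actually need---and actually have---is the weaker, true, conditional property: given $Y_1$, the upper-triangular entries of $\tilde Y_2$ are mutually independent with conditional mean $\eta p(i,j)$ and conditional second moment $O(\eta p(i,j))$, since the per-edge coins are disjoint. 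You should state and carry that conditional property rather than full independence.
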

\begin{proof}
We construct the matrix $\tilde{Y}_2$ in the following way.
For each $i,j$, we let $\tilde{Y}_2(i,j)=Y_2(i,j)-\E[Y_2(i,j)|Y_1]+\eta p(i,j)$.
Then by \cref{lem:decoupling}, we have $\E\tilde{Y}_2(i,j)=\E Y_2(i,j)$ and $\E \Paren{\tilde{Y}_2(i,j)-Y_2(i,j)}^2\leq O(\eta^2 p^3)$.
In addition, we have $\Abs{\tilde{Y}_2(i,j)-Y_2(i,j)}\leq O(\eta p)$. 

Furthermore $\tilde{Y}_2(i,j)$ and $Y_1$ are independent.

Finally since the upper triangular entries in $Y$ are independent, we have the upper triangular entries in $\tilde{Y}_2$ are independent. 
By Hoeffding bound, with probability at least $1-\exp(-t)$, we have $\normf{\tilde{Y_2}-Y_2}^2\leq t\eta p^3 n^2 \log(n)$. 
\end{proof}

\section{Useful algorithmic results}\label{sec:algo-results}
In this section, we provide two algorithmic results from previous work that will be useful in our paper.

\subsection{Correlation preserving projection}
Given a vector $P$ that has constant correlation with an unknown vector $Y$, \cite{Hopkins17} shows that one can project the vector $P$ into a convex set containing $Y$, and preserve the constant correlation with $Y$. 
\begin{theorem}[Correlation preserving projection, theorem 2.3 in \cite{Hopkins17}]
  \label{thm:correlation-preserving-projection}
  Let $\delta\in \R^+$
  Let $\cC$ be a convex set and $Y\in \cC$.
  Let $P$ be a vector with $\iprod{P,Y}\ge \delta \cdot \norm{P}\cdot \norm{Y}$.
  Then, if we let $Q$ be the vector that minimizes $\norm{Q}$ subject to $Q\in \cC$ and $\iprod{P,Q}\ge \delta \cdot \norm{P}\cdot \norm{Y}$, we have
  \begin{equation}
    \iprod{Q,Y}\ge \delta/2 \cdot \norm{Q}\cdot \norm{Y}\,.
  \end{equation}
  Furthermore, $Q$ satisfies $\norm{Q}\ge \delta \norm{Y}$.
\end{theorem}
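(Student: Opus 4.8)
The plan is to recognize $Q$ as the Euclidean projection of the origin onto the convex body
$\cD \defeq \Set{Z \in \cC : \iprod{P,Z} \ge \delta \norm{P}\norm{Y}}$, and then to combine the variational inequality characterizing that projection with a one-line Cauchy--Schwarz estimate. First I would record the structural facts: $\cD$ is convex, being the intersection of the convex set $\cC$ with a halfspace, and it is nonempty since $Y \in \cD$ by the hypothesis $\iprod{P,Y}\ge\delta\norm P\norm Y$. We may assume $\norm P,\norm Y>0$ (otherwise the hypothesis is vacuous) and that $\cC$ is closed, hence $\cD$ is closed, so the norm-minimizer $Q$ exists and is unique; in the applications in this paper $\cC$ is in fact compact, so this is automatic. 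Since $t\mapsto t^2/2$ is strictly increasing on $[0,\infty)$, minimizing $\norm{\cdot}$ over $\cD$ is the same as minimizing the strictly convex, differentiable function $Z\mapsto \tfrac12\norm Z^2$ over $\cD$.

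For the ``furthermore'' clause, Cauchy--Schwarz gives $\norm P\,\norm Q \ge \iprod{P,Q}\ge \delta\norm P\,\norm Y$, hence $\norm Q \ge \delta\norm Y$. For the main inequality, I would invoke the first-order optimality condition for minimizing $\tfrac12\norm{\cdot}^2$ over the convex set $\cD$: differentiating $t\mapsto \tfrac12\norm{Q+t(Z-Q)}^2$ at $t=0^+$ along the segment from $Q$ toward an arbitrary $Z\in\cD$ (which lies in $\cD$ by convexity) gives $\iprod{Q,Z-Q}\ge 0$, i.e. $\iprod{Q,Z}\ge\norm Q^2$ for all $Z\in\cD$. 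Taking $Z=Y\in\cD$ yields $\iprod{Q,Y}\ge\norm Q^2$, and combining with $\norm Q\ge\delta\norm Y$ we get $\iprod{Q,Y}\ge \norm Q^2=\norm Q\cdot\norm Q\ge \delta\,\norm Q\,\norm Y\ge \tfrac{\delta}{2}\norm Q\norm Y$, which is the claimed bound (indeed with $\delta$ in place of $\delta/2$, so the statement has slack).

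There is no serious obstacle here: the entire content is the variational characterization of the nearest point of a convex set, together with Cauchy--Schwarz. The only points needing a little care are (i) justifying existence (and uniqueness) of the minimizer $Q$, which reduces to the compact convex setting used throughout the paper, where $\cC$ has bounded entries, lies in a shifted PSD cone, and has bounded trace; and (ii) stating the first-order optimality condition correctly — namely that the directional derivative of $\tfrac12\norm{\cdot}^2$ at $Q$ toward any feasible $Z$ is nonnegative, which is exactly $\iprod{Q,Z-Q}\ge 0$. Both are standard and can be dispatched in a couple of lines.
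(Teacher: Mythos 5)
Your proposal is correct and is essentially the same argument as the paper's: both identify $Q$ as the Euclidean projection of the origin onto the closed convex set $\cD = \{Z \in \cC : \iprod{P,Z} \ge \delta\norm{P}\norm{Y}\}$, both derive $\norm{Q}\ge\delta\norm{Y}$ from Cauchy--Schwarz applied to $\iprod{P,Q}\ge\delta\norm P\norm Y$, and both extract a lower bound on $\iprod{Q,Y}$ from a fundamental property of projections onto convex sets. The only difference is which projection property is invoked: you use the first-order variational inequality $\iprod{Q,Z-Q}\ge 0$ for all $Z\in\cD$, which with $Z=Y$ gives $\iprod{Q,Y}\ge\norm Q^2$; the paper instead uses the weaker non-expansiveness consequence $\norm{Y-Q}\le\norm{Y-0}$, which after expanding gives only $\iprod{Q,Y}\ge\norm Q^2/2$. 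Your version is marginally sharper (it would prove the bound with $\delta$ rather than $\delta/2$), and as you note the theorem statement has slack. Your remarks on existence and uniqueness of $Q$ — requiring $\cD$ to be closed, nonempty (it contains $Y$), and appealing to compactness of $\cC$ in the applications — are also correct and are the right level of care; the paper elides them.
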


We include their proof here for completeness.
\begin{proof}
    By construction, $Q$ is the Euclidean project of $0$ into the set $\Set{Q\in \cC|\iprod{P,Q}\geq \delta \norm{P}\norm{Y}}$. 
    By Pythagorean inequality, the Euclidean projection into a set decreases distances to points into the set.
    Therefore, $\norm{Y-Q}^2\le \norm{Y-0}^2$, which implies that $\iprod{Y,Q}\geq \norm{Q}^2/2$.
    Moreover, $\iprod{P,Q}\geq \delta \norm{P}\norm{Y}$, which implies $\norm{Q}\geq \delta \norm{Y}$ by Cauchy-Schwartz. 
    Thus, we can conclude that $\iprod{Y,Q}\geq \delta/2\cdot \norm{Y}\cdot \norm{Q}$.
\end{proof}

\subsection{Learning edge connection probability matrix via SVD}
\label{sec:learning_algorithm}

\begin{theorem}\label{thm:algorithm-learning-sbm}
    When $d\geq \log(n)$, there is a polynomial time algorithm which given the adjacency matrix of a graph sampled from symmetric $k$-stochastic block model $\SSBM(n,\frac{d}{n},\e,k)$, returns an estimator $\hat{\theta}\in [0,1]^{n\times n}$ such that $\normf{\thetanull-\theta}^2\leq kd$ with high probability.
\end{theorem}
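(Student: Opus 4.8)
The plan is to analyze the standard $k$-SVD (truncated singular value decomposition) estimator. First I would record that the ground-truth probability matrix is low rank: writing $u_c\in\{0,1\}^{n}$ for the indicator vector of the $c$-th community,
\[
  \thetanull \;=\; \Paren{1-\tfrac{\e}{k}}\tfrac{d}{n}\,\one\one^{\top} \;+\; \tfrac{\e d}{n}\sum_{c=1}^{k} u_c u_c^{\top}\,,
\]
and since $\one=\sum_{c}u_c$ this matrix has rank at most $k$. The algorithm is then: given the adjacency matrix $Y$, compute its best rank-$k$ approximation $\hat\theta_0$ in operator norm (the top-$k$ truncation of the SVD of $Y$), and output $\hat\theta$, the entrywise projection of $\hat\theta_0$ onto $[0,1]$. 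Both steps run in polynomial time, and $\hat\theta\in[0,1]^{n\times n}$ by construction.

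The only substantial ingredient is a bound on the operator norm of the noise $H\coloneqq Y-\thetanull$. Its above-diagonal entries are independent, mean zero, bounded by $1$ in absolute value, with variance $\thetanull_{ij}(1-\thetanull_{ij})\le\thetanull_{ij}\le 2d/n$ (using $\e\le 1$). I would apply \cref{thm:spectral-concentration} to $\tfrac{1}{\sqrt{2d}}H$ (after absorbing the nonzero diagonal $-\diag(\thetanull)$, which has operator norm $O(d/n)=o(\sqrt d)$, into an error term) with $\kappa=1$ and, crucially, the \emph{entrywise} bound $q=\sqrt{2d}$ rather than $q=1$; this yields $\E\normop{\tfrac{1}{\sqrt{2d}}H}\le 2+C\sqrt{\log n}/\sqrt{2d}=O(1)$ precisely because $d\ge\log n$, and the accompanying concentration bound in \cref{thm:spectral-concentration} then gives $\normop{H}\le C'\sqrt d$ with high probability for an absolute constant $C'$. (This is the familiar statement that for $d\gtrsim\log n$ the centered adjacency matrix has spectral norm $O(\sqrt d)$; \cref{lem:spectral-concentration-sbm} is a cruder $O(\sqrt d\,\log n)$ version that would introduce extra logarithms.)

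On the event $\normop{H}\le C'\sqrt d$ the rest is the textbook argument. Since $\thetanull$ has rank $\le k$ and $\hat\theta_0$ is the \emph{best} rank-$k$ approximation of $Y$, we get $\normop{Y-\hat\theta_0}\le\normop{Y-\thetanull}=\normop{H}$, hence $\normop{\hat\theta_0-\thetanull}\le\normop{\hat\theta_0-Y}+\normop{Y-\thetanull}\le 2\normop{H}$ by the triangle inequality. Because $\hat\theta_0-\thetanull$ has rank at most $2k$,
\[
  \normf{\hat\theta_0-\thetanull}^2 \;\le\; 2k\,\normop{\hat\theta_0-\thetanull}^2 \;\le\; 8k\,\normop{H}^2 \;\le\; O(kd)\,.
\]
Finally $[0,1]^{n\times n}$ is convex and contains $\thetanull$, so the entrywise projection onto it is $1$-Lipschitz in Frobenius norm, whence $\normf{\hat\theta-\thetanull}\le\normf{\hat\theta_0-\thetanull}$ and $\normf{\hat\theta-\thetanull}^2\le O(kd)$ with high probability.

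There is no real obstacle here; this is a completeness result and every step is standard. The one thing to be careful about is the constant: the crude bound above gives $O(kd)$, and to land on the stated $kd$ one tracks constants in the spectral bound (the operator norm of $H$ is $(2+o(1))\sqrt d$ in this regime, and one can project $\hat\theta_0$ into $[0,2d/n]$ rather than $[0,1]$ to tighten the last step) or simply invokes the refined analysis of \cite{Xu2017RatesOC}.
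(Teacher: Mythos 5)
Your proof is correct and takes essentially the same approach as the paper's (best rank-$k$ approximation of the adjacency matrix, operator-norm bound on the noise, rank-$2k$ to pass to Frobenius norm). You are in fact more careful than the paper's terse proof: you explicitly justify why $\normop{H}=O(\sqrt d)$ when $d\ge\log n$ by invoking \cref{thm:spectral-concentration} with $q=\sqrt{2d}$ rather than the cruder \cref{lem:spectral-concentration-sbm}, you spell out the rank-$2k$ step $\normf{\hat\theta_0-\thetanull}^2\le 2k\normop{\hat\theta_0-\thetanull}^2$ which the paper leaves implicit, and you add the clipping step to ensure $\hat\theta\in[0,1]^{n\times n}$ as required by the statement, none of which appear in the paper's one-paragraph argument.
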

\begin{proof}
    We take $\hat{\theta}$ as the best rank-$k$ approximation for the adjacency matrix. 
    Then since $\normop{A-\theta^\circ}\leq \sqrt{kd}$ with high probability, we have $\normop{\hat{\theta}-A}\leq \sqrt{kd}$ with high probability.
    By triangle inequality, we have $\normop{\hat{\theta}-\theta^\circ}\leq 2\sqrt{d}$.
    As result, we have
\( \normf{\hat{\theta}-\theta^\circ}\leq 2\sqrt{kd}\)\,.
\end{proof}

\section{Low-degree lower bound beyond constant number of blocks}\label{sec:beyond-constant-ldlr}

By extending the result of \cite{pmlr-v134-bandeira21a}, we can get a more general bound (with respect to $k$) on low degree likelihood ratio of k-SBM. 
The proof of the extended result follows trivially from the proof of Theorem 2.20 of \cite{pmlr-v134-bandeira21a}. Therefore, we only provide a proof sketch by pointing out the simple modifications that we need from the original proof.

\begin{theorem}[Restatement of \cref{thm:ldlr-sbm}]\torestate{
Let $d=o(n),k=n^{o(1)}$ and $\e\in [0,1]$. 
Let $\mu:\{0,1\}^{n \times n} \rightarrow \mathbb{R}$ be the relative density of SBM $(n, d, \varepsilon, k)$ with respect to $G\left(n, \frac{d}{n}\right)$. 
Let $\mu^{\leqslant \ell}$ be the projection of $\mu$ to the degree-$\ell$
polynomials with respect to the norm induced by $G\left(n, \frac{d}{n}\right)$ For any constant $\delta>0$,
\begin{align*}
\left\|\mu^{\leqslant \ell}\right\| \text{ is } \begin{cases}
    \geqslant n^{\Omega(1)}, & \quad \text{if } \varepsilon^2 d > (1+\delta) k^2, \quad \ell \geqslant O(\log n) \\[8pt]
    \leqslant O_{\delta} \left(\exp(k^2)\right), & \quad \text{if } \varepsilon^2 d < (1-\delta) k^2, \quad \ell < n^{0.99}
\end{cases}
\end{align*}
}
\end{theorem}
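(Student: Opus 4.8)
The plan is to follow the proof of Theorem~2.20 of \cite{bandeira2021spectral}, which is carried out for constant $k$, and to track the dependence on $k$ throughout. Write $P = \SSBM(n,\tfrac dn,\e,k)$, $Q = \bbG(n,\tfrac dn)$, $q=d/n$, and let $\phi_e(Y) = (Y_e-q)/\sqrt{q(1-q)}$ be the normalized centered edge indicator; the monomials $\chi_S \coloneqq \prod_{e\in S}\phi_e$ over $S\subseteq\binom{[n]}{2}$ form an orthonormal basis of $L^2(Q)$, so $\Norm{\mu^{\leq\ell}}^2 = \sum_{\abs S\leq\ell}\hat\mu(S)^2$ with $\hat\mu(S) = \E_{Y\sim Q}\Brac{\mu(Y)\chi_S(Y)} = \E_{Y\sim P}\chi_S(Y)$. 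Conditioning on the community labels $x$ and using that the edges are then independent, $\hat\mu(S) = \Paren{\e\sqrt{q/(1-q)}}^{\abs S}\E_x\prod_{e\in S}M^\circ_e$, where $M^\circ_e = \mathbf{1}[x_i=x_j]-1/k$. The first reduction is combinatorial: $\E_x\prod_{e\in S}M^\circ_e$ vanishes unless every non-isolated vertex of $S$ has degree at least $2$ (a pendant edge contributes $\E_{x_v}[\mathbf{1}[x_u=x_v]-1/k]=0$), and a transfer-operator computation with $K = I_k-\tfrac1k\mathbf{1}\mathbf{1}^{\top}$ --- which has eigenvalue $1$ with multiplicity $k-1$ and eigenvalue $0$ once --- shows that for a disjoint union of cycles of lengths $t_1,\dots,t_c$ this quantity equals $\prod_i (k-1)k^{-t_i}$, and that for a connected shape with cyclomatic number $r$ it is bounded in absolute value by $C^{\,r}(k-1)^{r}k^{-\abs E}$ for an absolute constant $C$.

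I would then feed these estimates into the Fourier sum, organizing it by the shape $H$ (the graph on non-isolated vertices), with excess $r-c\geq 0$, where $r$ is the cyclomatic number and $c$ the number of components: the contribution of $H$ is, up to automorphism and benign combinatorial factors, of order $n^{\,c-r}(k-1)^{2r}(1-q)^{-\abs E}\Paren{\e^2 d/k^2}^{\abs E}$, which is maximized (excess $0$) by disjoint unions of cycles. A single cycle of length $t$ ($3\leq t\leq\ell$) contributes $\asymp \tfrac{(k-1)^2}{2t}(1-q)^{-t}\Paren{\e^2 d/k^2}^t$. If $\e^2 d > (1+\delta)k^2$ (so $k\geq 2$), retaining only the cycles of the single length $\ell' = \ceil{c_0\log n}\leq\ell$ already gives $\Norm{\mu^{\leq\ell}}^2\geq \tfrac{(k-1)^2}{4\ell'}(1+\delta)^{\ell'} = n^{\Omega_\delta(1)}$ for a suitable absolute constant $c_0$ --- this is the lower bound. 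If $\e^2 d < (1-\delta)k^2$, summing one cycle over all lengths gives $A\coloneqq \sum_{t\geq 3}\tfrac{(k-1)^2}{2t}(1-\delta)^t e^{O(td/n)} = O_{\delta,\e,d}(k^2)$ (the series converges because $(1-\delta)e^{O(d/n)}<1$ for large $n$), and a disjoint union of $j$ cycles contributes at most $A^j/j!$, so these shapes together contribute $\leq e^A = \exp\Paren{O_{\delta,\e,d}(k^2)}$. All remaining shapes are geometrically suppressed: adding an edge between existing vertices costs a factor $O((k-1)^2/n)$, and lengthening a path costs $\e^2 d/k^2 < 1$ per edge, so the positive-excess shapes contribute only a $1+o(1)$ multiplicative correction provided $\ell < n^{0.99}$ (which makes $\ell\,(k-1)^2/n = o(1)$ since $k = o(\log n)$). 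Altogether $\Norm{\mu^{\leq\ell}}^2 = 1 + e^A + o(1) = O_{\delta,\e,d}\Paren{\exp(k^2)}$, hence $\Norm{\mu^{\leq\ell}}\leq \exp\Paren{O(k^2)}$.

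The only place the original proof uses $k = O(1)$ is in absorbing the powers of $k-1$ above into $O(1)$ constants, so the required modification is pure bookkeeping. The hard part --- which is precisely the technical core of Theorem~2.20 of \cite{bandeira2021spectral} --- will be to control the positive-excess (and lengthened) shapes \emph{uniformly} over all $\abs E\leq\ell<n^{0.99}$: one bounds the number of isomorphism types of a given excess and size and checks that the gain per unit of excess (a power of $n$ from vertex counting, times at most $\poly(\ell,d)$ and now a bounded power of $k$) is beaten by the corresponding $n^{-1}$ suppression, and similarly for lengthenings using $\e^2 d < (1-\delta)k^2$. Since $k = o(\log n)$ makes every $k$-dependent factor $n^{o(1)}$ --- in fact $(k-1)^2 = o(\log^2 n)\ll n^{0.01}$ --- those estimates go through with $k$ carried along essentially verbatim, and the only change in the conclusion is that the absolute-constant bound in the subcritical case is replaced by $\exp\Paren{O(k^2)}$.
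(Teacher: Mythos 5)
Your sketch is correct in its main lines, but it is a genuinely \emph{different} proof from the one the paper gives, despite your framing of "following the proof of Theorem~2.20 of \cite{bandeira2021spectral}." The paper's own argument (see \cref{sec:beyond-constant-ldlr}) does not redo the moment computation at all: it takes the two-step route of \cite{bandeira2021spectral}, first proving a low-degree bound for a spiked-Wigner model (their Theorem~2.16, which goes through Propositions~6.4 and~6.5) and then reducing $\SSBM$ to that model (their Theorem~2.20), and it identifies the \emph{only} place $k$ enters as the cardinality $C(\delta,k)=\exp(O_\delta(k))$ of a $\delta$-net on the unit sphere in $\R^k$; tracing that factor through yields $\Norm{L^{\leq D}}=O_\delta(\exp(k))$. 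By contrast, what you actually carry out is the direct Fourier/moment calculation on the SBM likelihood ratio itself, in the style of Theorem~8.6.1 of \cite{Hopkins18}: expand $\hat\mu(S)$ via the transfer matrix $K=I_k-\tfrac1k\Ind\Ind^\top$, kill pendant edges, observe that disjoint unions of cycles dominate, sum the cycle generating function to get $e^A$ with $A=O_\delta(k^2)$, and argue that positive-excess shapes only contribute a $1+o(1)$ factor because $(k-1)^2/n=o(1)$ when $k=o(\log n)$ and $\ell<n^{0.99}$. Your computations that I checked (e.g.\ $\E_x\prod_e M^\circ_e=\prod_i (k-1)k^{-t_i}$ for disjoint cycles via $\Tr K^t=k-1$, the per-cycle contribution $\tfrac{(k-1)^2}{2t}(1-q)^{-t}(\e^2d/k^2)^t$, the convergence of the $t$-series from $(1-\delta)e^{O(d/n)}<1$) are all right, and the $k=o(\log n)$ hypothesis is exactly what you need for the excess bookkeeping. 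Two small points of comparison worth noting. First, your route gives $\exp(O(k^2))$ rather than the $\exp(O(k))$ that the paper's proof extracts from the $\delta$-net --- both satisfy the stated $O_{\delta,\e,d}(\exp(k^2))$, but the Wigner reduction is tighter in $k$. Second, your argument is more self-contained (it does not lean on Propositions~6.4/6.5 of \cite{bandeira2021spectral} as black boxes), at the cost of having to control the positive-excess shape sum yourself, which is precisely the technical load the Wigner reduction was designed to avoid; you correctly flag this as the hard part but only sketch it. As a sketch, though, it is sound, and it also handles the supercritical lower bound (retain a single cycle length $\ell'=\lceil c_0\log n\rceil$), which the paper's appendix proof does not spell out since it only traces $k$ through the subcritical upper bound.
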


\begin{proof}
    In this proof, we stick to the notations of \cite{pmlr-v134-bandeira21a}. The only modification we need is that the size of the $\delta$-net of the unit sphere in $\R^k$, denoted by $C(\delta, k)$, is equal to $\exp(O_\delta(k))$. The size of the $\delta$-net $C(\delta, k)$ is crucial in Proposition 6.4 and Proposition 6.5 of \cite{pmlr-v134-bandeira21a} and is treated as constant in the proof of Theorem 2.16 and Theorem 2.20 of \cite{pmlr-v134-bandeira21a}.
    
    By plugging $C(\delta, k)=O_\delta(\exp(k))$ into the upper bound of the small deviation term $L_1$ in the proof of Theorem 2.16 of \cite{pmlr-v134-bandeira21a}, it follows that we have $\Norm{L^{\leq D}} = O_\delta(\exp(k))$ for the likelihood ratio $L$ and $D \leq o(n/\log(n))$. Then, the bound on low-degree likelihood ratio of k-SBM follows from the same reduction as in proof of Theorem 2.20 of \cite{pmlr-v134-bandeira21a}, and we get $\e^2 d/(1-d/n)\leq 1$.
    As $d=o(n)$, we get the claimed bound on low-degree likelihood ratio.
\end{proof}

\end{document}